\pgfplotsset{compat=1.14}
\pgfplotsset{compat=1.16}
\algnewcommand{\algorithmicand}{\textbf{ and }}
\algnewcommand{\algorithmicor}{\textbf{ or }}
\algnewcommand{\OR}{\algorithmicor}
\algnewcommand{\AND}{\algorithmicand}
\definecolor{goodgreen}{rgb}{0.1, 0.5, 0.1}
\definecolor{Purple}{HTML}{911146}
\newcommand{\nop}[1]{}
\theoremstyle{plain}                  
\newtheorem{hyp}{Hypothesis}
\newtheorem{assumption}{Assumption}
\newtheorem{theorem}{Theorem}
\newtheorem{lemma}[theorem]{Lemma}
\newtheorem{corollary}[theorem]{Corollary}         
\newtheorem{example}[theorem]{Example}
\newcommand{\revision}[1]{\textcolor{black}{#1}}
\date{}
\title{ADOPT: Adaptively Optimizing Attribute Orders for \\ 
Worst-Case
Optimal Join Algorithms via Reinforcement Learning}
\author{
\begin{tabular}{cccc}
Junxiong Wang$^1$  & Immanuel Trummer$^1$ & Ahmet Kara$^2$ &  Dan Olteanu$^2$ \\
junxiong@cs.cornell.edu &  itrummer@cornell.edu & kara@ifi.uzh.ch &  olteanu@ifi.uzh.ch 
\end{tabular}\\ \\
$^1$Cornell University  \enspace\enspace $^2$University of Zurich
}
\begin{document}
\maketitle

\begin{abstract}
The performance of worst-case optimal join algorithms depends on the order in which the join attributes are processed. Selecting good orders before query execution is hard, due to the large space of possible orders and unreliable execution cost estimates in case of data skew or data correlation. We propose ADOPT, a query engine that combines adaptive query processing with a worst-case optimal join algorithm, which uses an order on the join attributes instead of a join order on relations. ADOPT divides query execution  into episodes in which different attribute orders are tried. Based on run time feedback on attribute order performance, ADOPT converges quickly to near-optimal orders. It avoids redundant work across different orders via a novel data structure, keeping track of parts of the join input that have been successfully processed. It selects attribute orders to try via reinforcement learning, balancing the need for exploring new orders with the desire to exploit promising orders. 
In experiments with various data sets and queries, it outperforms baselines, including commercial and open-source systems using worst-case optimal join algorithms, whenever queries become complex and therefore difficult to optimize.
\end{abstract}

\section{Introduction}

The area of join processing has recently been revolutionized by  worst-case optimal join algorithms~\cite{DBLP:conf/icdt/Veldhuizen14,DBLP:journals/jacm/NgoPRR18}. LeapFrog TrieJoin (LFTJ) is a prime example of a worst-case optimal join algorithm~\cite{DBLP:conf/icdt/Veldhuizen14}. 
Such algorithms guarantee asymptotically worst-case optimal performance. Those formal guarantees set them apart from traditional join algorithms, which are known to be sub-optimal~\cite{DBLP:journals/sigmod/NgoRR13}. In practice, they often translate into orders of magnitude runtime improvements, specifically for cyclic queries, compared to traditional approaches. They are incorporated in  several recent query engines: for factorized databases~\cite{DBLP:journals/tods/OlteanuZ15,DBLP:journals/sigmod/OlteanuS16}, graph processing~\cite{Aberger2016,DBLP:conf/cidr/JinAS22} and general query processing~\cite{Freitag2020}, in-database machine learning~\cite{DBLP:conf/sigmod/SchleichOC16}, and in the commercial systems LogicBlox~\cite{Aref2015} and RelationalAI~\cite{DBLP:conf/datalog/Aref19}.

\revision{As pointed out in prior work~\cite{DBLP:conf/icdt/Veldhuizen14}, in practice, the} performance of worst-case optimal join algorithms often depends heavily on the order in which join attributes (i.e., groups of join columns that are linked by equality constraints) are processed. \revision{Yet this is not reflected in the formal analysis of worst-case optimal join algorithms~\cite{DBLP:conf/icdt/Veldhuizen14}. Worst-case optimality is defined with regards to worst-case assumptions about the database content. Under these assumptions, different attribute orders have asymptotically equivalent time complexity. On the other side, given the actual database content, some attribute orders may perform much better than others in practice. Similar to the classical join ordering problem, it is therefore important to aim for the instance-optimal order, e.g., using data statistics.} 


\begin{figure}
\centering
\begin{tikzpicture}
\begin{groupplot}[
            enlargelimits=false,
            ylabel= {Scaled Time},
            xlabel= {Ranked Attribute Orders},
            xtick={0, 20, 40, 60, 80, 100, 119},
            ytick={2, 4, 8, 16, 32},
            xtick pos=left,
            ytick pos=left,
            ymajorgrids,
            ymode=log,
            log basis y={2},
            /pgf/bar width=2pt,
            group style={
                rows=2,
                columns=1,
            },
            width=.45\textwidth,
            height=0.2\textwidth,
            clip=false,
        ]
        
        \nextgroupplot
            
        \addplot [ybar,draw=blue,fill=blue,] table [x expr=\coordindex, y index=0]{data/5clique_order.csv};
        
            
        
\end{groupplot}

\end{tikzpicture}
\caption{Execution time of different attribute orders for the five-clique query on the ego-Twitter graph.}
\label{fig:intro}
\vspace*{-1em}
\end{figure}
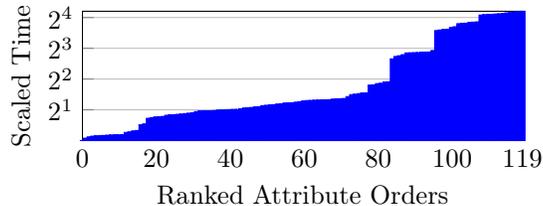

\begin{example}
\rm
Figure~\ref{fig:intro} illustrates the need for accurate attribute order selection. It compares LFTJ execution times (scaled to the time of the fastest order) for different attribute orders and the same query that asks for the number of cliques of five distinct nodes. 120 attribute orders (on the x-axis) are ranked by execution time. \nop{A more detailed description of the experimental setup is deferred to Section~\ref{sec:experiments}.} The performance gap between the best and worst attribute orders is more than 16x. The choice of an attribute order has thus significant impact on performance and near-optimal orders are sparse.
\end{example}

Execution engines using worst-case optimal join operators (e.g., the LogicBlox system~\cite{Aref2015}) typically select attribute orders via a query optimizer. Similar to traditional query optimizers selecting join orders, such optimizers exploit data statistics and simplifying cost models to pick an attribute order. This approach is however risky. Erroneous cost estimates (e.g., due to data skew not represented in data statistics) can lead to highly sub-optimal attribute order choices. Incorrect cost estimates are known to cause significant overheads in traditional query optimization~\cite{Lohman2014}. The experiments in Section~\ref{sec:experiments} show that this case appears in the context of optimization for worst-case optimal join algorithms as well. \revision{In particular, this applies to queries on non-uniform data with an elevated number of predicates, increasing the potential for inter-predicate correlations that make size and cost predictions hard.}


\begin{example}
\label{ex:usecases}
\rm
\revision{In social network analysis, analysts are often interested in finding people who are mutually connected in cliques via links in the graph representing the social network~\cite{Khodadadi2021, Krebs2002, palla2005uncovering}. Specifically, prior analysis often considers cliques of up to five or six~\cite{Khodadadi2021, palla2005uncovering}, or more~\cite{Krebs2002} members. The experiments in Section~\ref{sec:experiments} show that such queries already create challenges in cost prediction, making methods that are robust to prediction errors preferable.}
\end{example}


        
            
        
            

To overcome these challenges, we propose an adaptive execution strategy for worst-case optimal join algorithms. The goal of adaptive processing is to enable attribute order switches, during query processing. The processing time is divided into episodes and in each episode we may choose an attribute order for the execution of the query over a fragment of the input data. By measuring execution speed for different attribute orders, the adaptive processing framework converges to near-optimal attribute orders over time. To the best of our knowledge, this is the first adaptive processing strategy for worst-case optimal join algorithms.

Adaptive processing for query processing based on attribute orders leads, however, to new challenges, discussed in the following. 

First, we must limit overheads due to attribute order switching. In particular, we must avoid redundant processing when applying multiple attribute orders to the same data. We solve this challenge by a task manager, capturing execution progress achieved by different attribute orders. Join result tuples are characterized by a value combination for join attributes. Hence, we generally represent execution progress by (hyper)cubes within the Cartesian product of value ranges over all join attributes. Having processed a cube implies that all contained result tuples, if any, have been generated. Data processing threads query the task manager to retrieve cubes not covered previously. Also, the task manager is updated whenever new results become available. It ensures that different threads process non-overlapping cubes, independently of the current attribute order. Query processing ends once all processed cubes, in aggregate, cover the full space of join attribute value ranges.


Second, we need a metric to compare different attribute orders, based on run time feedback. This metric must be applicable even when executing attribute orders for a very short amount of time. The number of result tuples generated per time unit may appear to be a good candidate metric. However, it is not informative in case of small results. Instead, we opt for a metric analyzing the size of the hypercube (within the Cartesian product of join attribute values) covered per time unit. Even if no result tuples are generated, this metric rewards attribute orders that quickly discard subsets of the output space.

Third, we must choose, in each episode, which attribute order to select next. This choice is challenging as it is subject to the so called \textit{exploration-exploitation dilemma}. Choosing attribute orders that obtained good scores in past invocation (exploitation) may seem beneficial to generate a full query result as quickly as possible. However, executing attribute orders about which little is known (exploration) may be better. It may lead to even better attribute orders that can be selected in future episodes. To balance between these two extremes in a principled manner, we employ methods from the area of reinforcement learning. Under moderately simplifying assumptions, based on the guarantees offered by these methods, we can show that ADOPT converges to optimal attribute orders.


We have integrated our approach for adaptive processing with worst-case optimal join algorithms into ADOPT (ADaptive wOrst-case oPTimal joins), a novel, analytical SQL processing engine. We compare ADOPT to various baselines, including traditional database systems such as PostgreSQL and MonetDB, prior methods for adaptive processing such as SkinnerDB~\cite{Trummer2021c}, as well as commercial and open-source database engines that use worst-case optimal join algorithms. We evaluate all systems on acyclic and cyclic queries from public benchmarks, \revision{TPC-H, JCC-H~\cite{boncz2018jcc},} join order~\cite{Gubichev2015} and SNAP graph data~\cite{snapnets,nguyen2015join} workloads. \revision{For complex queries on skewed data, ADOPT outperforms all competitors.} In particular, it improves over a commercial database engine using the same worst-case optimal join algorithm as ADOPT. This demonstrates the benefit of adaptive attribute order selections.


In summary, the contributions in this paper are the following:

\begin{itemize}
    \item We propose the first adaptive processing strategy for worst-case optimal join algorithms using reinforcement learning.
    \item We describe specialized data structures, progress metrics, and learning algorithms that make adaptive processing in this scenario practical.
    \item We formally analyze worst-case optimality guarantees and convergence properties.
    \item We compare ADOPT experimentally against various baselines, showing that it outperforms them for a variety of acyclic and cyclic queries and datasets.
\end{itemize}


The remainder of this paper is organized as follows. Section~\ref{sec:overview} presents an overview of the ADOPT system. Section~\ref{sec:algorithm} describes the algorithm used for adaptive processing in detail. Section~\ref{sec:analysis} analyzes our approach formally.
Section~\ref{sec:experiments} shows experimentally that ADOPT outperforms a range of competitors for both acyclic and cyclic queries.
Section~\ref{sec:related} discusses prior related work.
Appendices \ref{app:statistics_graph_datasets}-\ref{sec:attributescalability} report on further details on the experimental results and the datasets used in the experiments. 
Appendix \ref{sec:illustrating_LFTJ}  details the original LFTJ algorithm ADOPT is based on while
Appendix \ref{sec:lftjinadopt} explains the LFTJ variant used for ADOPT.

\section{Overview}
\label{sec:overview}

Figure~\ref{fig:overview} overviews the ADOPT system, illustrating its primary components. ADOPT \revision{supports SPJAG queries with sub-queries, covering the majority of TPC-H queries (see Section~\ref{sub:expsetup} for details)}. It performs in-memory data processing and uses a columnar data layout. \revision{The highly specific requirements of the ADOPT approach (e.g., support for high-frequency attribute order switching in a worst-case optimal join processing framework) motivate a customized system, rather than the integration into classical SQL execution engines.} The implementation uses Java and supports multi-threading via the Java ExecutorService API. It uses a worst-case optimal algorithm to process joins and selects attribute orders via reinforcement learning.

\tikzstyle{system}=[draw, rectangle, rounded corners=1mm, shade, top color=gray!10, bottom color=gray!20, blur shadow={shadow blur steps=5}]
\tikzstyle{innercomponent}=[draw, rectangle, rounded corners=1mm, shade, top color=blue!10, bottom color=blue!20, minimum width=1.7cm, font=\bfseries, blur shadow={shadow blur steps=5}, align=center]
\tikzstyle{outercomponent}=[draw, rectangle, rounded corners=1mm, shade, top color=red!10, bottom color=red!20, minimum width=1.7cm, font=\bfseries, blur shadow={shadow blur steps=5}, align=center]
\tikzstyle{system}=[draw, rectangle, shade, top color=gray!10, bottom color=gray!20, blur shadow={shadow blur steps=5}]

\tikzstyle{dataflow}=[thick, -latex]
\tikzstyle{mdataflow}=[thick, <->]

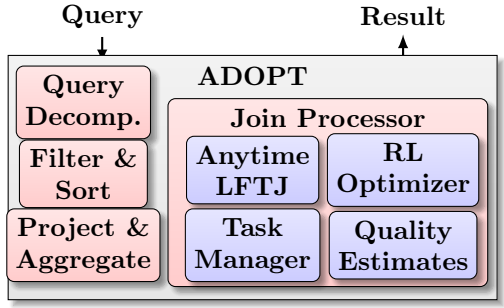
\begin{figure}
    \centering
    \begin{tikzpicture}
        \node[label,font=\bfseries] (query) at (-2,2.05) {Query};
        \node[label, font=\bfseries] (result) at (2,2.05) {Result};
        \draw[dataflow] (query.south) -- ++ (0,-0.35);
        \draw[dataflow] (result.south) ++ (0,-0.35) -- ++ (0,0.35);
        \node[system, minimum width=6.5cm, minimum height=3.25cm] at (0,-0.1) {};
        \node[label, font=\bfseries] at (0,1.25) {ADOPT};
        \node[outercomponent, minimum width=4.25cm, minimum height=2.5cm] at (1,-0.3) {};
        \node[label, font=\bfseries] at (1,0.7) {Join Processor};
        \node[innercomponent] at (0,0) {Anytime\\LFTJ};
        \node[innercomponent] at (2,0) {RL\\Optimizer};
        \node[innercomponent] at (0,-1) {Task\\Manager};
        \node[innercomponent] at (2,-1) {Quality\\Estimates};
        \node[outercomponent] at (-2.25,0.9) {Query\\Decomp.};
        \node[outercomponent] at (-2.25,-0.05) {Filter \&\\ Sort};
        \node[outercomponent] at (-2.25,-1) {Project \&\\ Aggregate};
    \end{tikzpicture}
    \caption{\revision{Overview of ADOPT system components.}}
    \label{fig:overview}
    \vspace*{-1em}
\end{figure}


\revision{For complex queries (i.e., queries with sub-queries), ADOPT first decomposes them into a sequence of simple SPJAG queries, using decomposition techniques proposed in prior work~\cite{Neumann, trummer2019skinnerdb}. After decomposition, it executes the resulting queries, storing query results in temporary tables that are referenced by later queries in the query sequence (as input tables).} For \revision{each simple} query, ADOPT first performs a pre-processing step to filter the tables using unary predicates from the query \revision{(the resulting tables are typically much smaller than the original ones)}. After that, the following join phase is executed on the filtered tables.

For worst-case optimal computation of equality joins, ADOPT uses  LeapFrog TrieJoin (LFTJ). LFTJ considers join attributes in a fixed order to find value combinations that satisfy all join predicates. ADOPT uses an anytime version of this algorithm, so it can suspend and resume execution with high frequency. This enables the adaptive processing strategy, allowing ADOPT to identify near-optimal attribute orders, based on run time feedback. \revision{Similar to LFTJ, ADOPT does not materialize intermediate join results: LFTJ stores at most one tuple, containing one value per attribute, as intermediate state and adds complete tuples directly to the join result. This makes suspend and resume operations very efficient. Appendix~\ref{sec:illustrating_LFTJ} provides further details on the original LFTJ algorithm that ADOPT is based upon, including several examples. Also, Appendix~\ref{sec:lftjinadopt} details on the LFTJ variant used for ADOPT. In particular, it discusses how ADOPT uses and maintains data structures enabling the system to perform fast seek operations on the input tables, retrieving tuples that satisfy inequality conditions on their attributes.}


Besides the join algorithm itself, ADOPT uses an optimizer based on reinforcement learning. The optimizer selects attribute orders, balancing the need for exploration (i.e., trying out new attribute orders) with the need for exploitation (i.e., trying out attribute orders that performed well in the past). Each selected attribute order is only executed for a limited number of steps, enabling ADOPT to try thousands of attribute orders per second. To compare different attribute orders, ADOPT generates quality estimates. These estimates judge the performance achieved via an attribute order during a single invocation. Performance may vary, for the same order, across different invocations (e.g., due to heterogeneous data distributions). However, by averaging over different invocations for the same attribute order, ADOPT obtains increasingly more precise quality estimates over time.


Switching between attribute orders makes it challenging to avoid redundant work. ADOPT uses a task manager to keep track of remaining parts of the join input to process. More precisely, the task manager manages (hyper)cubes in the Cartesian product space, formed by value ranges of all join attributes. Each cube represents a part of the input space that still has to be processed by some attribute order (i.e., corresponding result tuples, if any, have not been added into a shared result set yet). The execution of the anytime LFTJ is restricted to cubes that have not been processed yet. More precisely, data processing threads query the task manager for cubes, called \textit{target cubes} in the following, that do not overlap with any cubes processed previously or concurrently (by other threads). Threads process the target cube until completion or until reaching the per-episode limit of computational steps. The task manager is notified of processed parts of the target cube (if the step limit is reached, only a subset of the target cube, represented by a small set of cubes contained in the target cube, was processed). The task manager removes processed cubes from the set of remaining cubes.
Join processing terminates once the entire input (i.e., the hypercube representing the full Cartesian product of join attribute values) has been covered. This can be verified efficiently using the task manager. If no unprocessed cubes remain, a complete result has been generated. Depending on the type of query, ADOPT executes a post-processing stage in which group-by clauses and aggregates are executed. Specifically for \revision{(count, max, min, sum, and avg) aggregates without grouping}, ADOPT integrates join processing with aggregation and does not need to perform a post-processing stage.

Several processing phases of ADOPT can be parallelized. Specifically, ADOPT parallelizes the join preparation phase (i.e., unary predicates are evaluated on different data partitions in general) and sorts data in parallel. During the join phase, ADOPT assigning non-overlapping hypercubes to different threads. Hence, using the same mechanism that avoids redundant work across attribute orders, ADOPT avoids redundant work across different threads as well.

\section{Algorithm}
\label{sec:algorithm}

We discuss the algorithm used by ADOPT in detail. Section~\ref{sub:main} discusses the top-level function, used to process queries. Section~\ref{sub:join} introduces ADOPT's parallel anytime join algorithm with worst-case optimality guarantees. Section~\ref{sub:cubes} discusses the mechanism by which ADOPT avoids redundant work across different attribute orders. Section~\ref{sub:rl} describes how ADOPT selects attribute orders via reinforcement learning. Finally, Section~\ref{sub:estimation} describes the reward metric used to guide the learning algorithm.

\subsection{Main Function}
\label{sub:main}

\begin{algorithm}[t!]
\caption{Main function of ADOPT, processing queries.\label{alg:main}}
\renewcommand{\algorithmiccomment}[1]{// #1}
\begin{small}
\begin{algorithmic}[1]
\State \textbf{Input:} Query $q$, number of threads $n$, per-episode budget $b$
\State \textbf{Output:} Query result
\Function{ADOPT}{$q, n, b$}
\State \Comment{Filter input tables via unary predicates}
\State $\{R_1,\ldots,R_m\}\gets$\Call{Prep.UnaryFilter}{$q$}
\State \Comment{Initialize join result set}
\State $R\gets\emptyset$
\State \Comment{Initialize reinforcement learning}
\State \Call{RL.Init}{$q$}
\State \Comment{Initialize constraint store}
\State \Call{TM.Init}{$q,n$}
\State \Comment{Iterate until result is complete}
\While {$\lnot$ \Call{TM.Finished}{}}
\State \Comment{Select attribute order via UCT algorithm}
\State $o \gets$ \Call{RL.Select}{}
\State \Comment{Use order for limited join steps}
\State $reward\gets$\Call{AnytimeWCOJ}{$q,o,n,b,R$}
\State \Comment{Update UCT statistics with reward}
\State \Call{RL.Update}{$o,reward$}
\EndWhile
\State \Comment{\revision{Return result after post-processing}}
\State \Return{\revision{\textproc{Post}$(q,R)$}}
\EndFunction
\end{algorithmic}
\end{small}
\end{algorithm}

ADOPT uses Algorithm~\ref{alg:main} to process \revision{simple SPJAG queries (i.e., without sub-queries). In addition to the query,} the algorithm also takes as input a number of data processing threads and a number of computational steps spent to evaluate a selected attribute order.

First, ADOPT filters the tables with the unary predicates \revision{(Line~5)}. \revision{ADOPT supports hash indexes on single columns and uses them, if available, to retrieve rows satisfying unary equality predicates. Without indexes, it scans and filters data, exploiting multi-threading. After that,} the only remaining predicates are then join predicates \revision{(including equality and other join predicates)}. Next, the algorithm initializes the set of join result tuples, the reinforcement learning algorithm by specifying the search space of attribute orders (which depends on the query), and the task manager with the input query and the number of processing threads \revision{(Lines~6 to 11)}. 
Internally, the task manager initializes the hypercube representing the total amount of work for each thread. More precisely, it divides the cube, representing the Cartesian product of all join attribute ranges, into equal shares for each thread. 

The task manager keeps track of cubes processed by the worker threads. Hence, query processing finishes once all processed cubes, in aggregate, cover the full input space. Iterations continue \revision{(Lines~13 to 20)} until that termination condition is satisfied. In each iteration, ADOPT first selects an attribute order via reinforcement learning \revision{(Line~15)}. Then, it executes that order, in parallel, for a fixed number of steps \revision{(Line~17)}. By executing the attribute order, \revision{the result set ($R$) may get updated. Note that $R$ only contains complete result tuples (mapping each attribute to a value) or partial values for aggregates. However, it does not contain any intermediate result tuples. Besides updating results}, executing an attribute order yields reward values, representing execution progress per time unit. Those reward values are used to update statistics \revision{(Line~19)}, maintained internally by the reinforcement learning optimizer, to guide attribute order selections in future iterations. Once \revision{the join finishes}, the algorithm \revision{performs post-processing (e.g., calculating per-group aggregates for group-by queries, based on join results in $R$) and returns the result (Line~22).}

\subsection{Anytime Join Algorithm}
\label{sub:join}

\begin{algorithm}[t!]
\caption{Parallel anytime version of worst-case optimal join algorithm.\label{alg:anytimeLFTJ}}
\renewcommand{\algorithmiccomment}[1]{// #1}
\begin{small}
\begin{algorithmic}[1]
\State \textbf{Input:} Query $q$, attribute order $o$, number of threads $n$, per-episode budget $b$, Result set $R$
\State \textbf{Output:} Reward $r$
\Function{AnytimeWCOJ}{$q,o,n,b,R$}
\State \Comment{Initialize accumulated reward}
\State $r\gets 0$
\State \Comment{Execute in parallel for all threads}
\For{$1\leq t\leq n$ in parallel}
\State \Comment{Initialize remaining cost budget}
\State $l_t\gets b$
\State \Comment{Iterate until per-episode budget spent}
\While{$l_t>0$}
\State \Comment{Retrieve unprocessed target cube}
\State $c_t\gets$\Call{TM.Retrieve}{}
\State \Comment{Process cube until timeout, add results}
\State $\langle P_t,s_t\rangle\gets$\Call{JoinOneCube}{$q,l_t,o,c_t,R$}
\State \Comment{Update constraints via processed cube}
\State \Call{TM.Remove}{$c_t,P_t$}
\State \Comment{Update accumulated reward (see Section~\ref{sub:estimation})}
\State $r\gets r+Reward(P_t,q)$
\State \Comment{Update remaining budget}
\State $l_t\gets l_t-s_t$
\EndWhile
\EndFor
\State \Comment{Return accumulated reward}
\State \Return{$r$}
\EndFunction
\end{algorithmic}
\end{small}
\end{algorithm}

Algorithm~\ref{alg:anytimeLFTJ} is the (worst-case optimal) join algorithm, used to execute a given attribute order for a fixed number of steps. Execution proceeds in parallel: different worker threads operate on non-overlapping cubes. Each worker thread iterates the following steps until its computational budget is depleted \revision{(Lines~11 to 22)}. First, it retrieves an unprocessed cube, the target cube, from the task manager \revision{(Line~13)}. Then, it uses a sub-function (an anytime version of the LFTJ) to process the retrieved target cube \revision{(Line~15)}. In practice, it is often not possible to process the entire target cube under the remaining computation budget. Hence, the result of the triejoin invocation (Function~\textproc{JoinOneCube}) reports the set of cubes, contained within the target cube, that were successfully processed. In addition, it returns the number of computation steps spent. The task manager is notified of successfully processed cubes which will be excluded from further consideration \revision{(Line~17)}. Also, a reward value is calculated that represents progress towards generating a full join result \revision{(Line~19)}. We postpone a detailed discussion of the reward function to Section~\ref{sub:rl}. Finally, Algorithm~\ref{alg:anytimeLFTJ} returns the reward value, accumulated over all threads and iterations \revision{(Line~25)}.

\begin{algorithm}[t!]
\caption{Worst-case optimal join algorithm with timeout, joining a single cube.\label{alg:joinOneCube}}
\renewcommand{\algorithmiccomment}[1]{// #1}
\begin{small}
\begin{algorithmic}[1]
\State \textbf{Input:} Query $q$, remaining budget $b$, attribute order $o$, target cube to process $c$, result set $R$, attribute counter $a$, value mappings $M$
\State \textbf{Effect:} Iterates over attribute values and possibly adds results to $R$
\Procedure{JoinOneCubeRec}{$q,b,o,c,R,a,M$}
\If{$a\geq|q.A|$} \Comment{Check for completed result tuples}
\State Insert tuple with current attribute values $M$ into $R$
\Else
\State \Comment{Initialize value iterator (do not evaluate it!)}
\State $V\gets$ iterator over values for $o_a$ in $[c.l_{o_{a}},c.u_{o_{a}}]$ \revision{that satisfy}
\Statex $\quad\quad\quad\quad$ \revision{all applicable join predicates in $q$.}
\State \Comment{Iterate over values until timeout}
\For{$v\in V$}
\State \Comment{Select values for remaining attributes}
\State \Call{JoinOneCubeRec}{$q,l,o,c,R,a+1,M\cup\{\langle o_a,v\rangle\}$}
\State \Comment{Check for timeouts}
\If{Total computational steps $>b$}
\State \textbf{Break}
\EndIf
\EndFor
\EndIf
\EndProcedure
\vspace{0.25cm}
\State \textbf{Input:} Query $q$, remaining budget $b$, attribute order $o$, target cube to process $c$, result set $R$
\State \textbf{Output:} Processed cube $p$, computational steps performed $s$
\Function{JoinOneCube}{$q,b,o,c,R$}
\State \Comment{Resume join for fixed number of steps}
\State \Call{JoinOneCubeRec}{$q,b,o,c,R,0,\emptyset$}
\State \Comment{Retrieve state from \textproc{JoinOneCubeRec} invocation}
\State $s\gets$ Number of computational steps spent
\State $v\gets$ Vector s.t.\ $v_a$ is last value considered for attribute $o_a$
\State \Comment{Calculate processed cubes}
\State $P\gets\emptyset$
\For{$0\leq a< |q.A|$}
\State Create new cube $p$ s.t.\
\State $\quad$ $\forall i<a:p_i=[v_i,v_i];$
\State $\quad$ $\quad$ $p_a=[c.l_{o_a},v_a);$
\State $\quad$ $\quad$ $\quad$ $\forall a<i:p_i=[c.l_{o_i},c.u_{0_i}]$
\State $P\gets P\cup\{p\}$
\EndFor
\State \Return{$\langle P,s\rangle$}
\EndFunction
\end{algorithmic}
\end{small}
\end{algorithm}

Algorithm~\ref{alg:joinOneCube} describes the sub-function, used to process a single cube, at a high level of abstraction. The actual join is performed by Procedure~\textproc{JoinOneCubeRec}. This procedure is based on the leapfrog triejoin~\cite{DBLP:conf/icdt/Veldhuizen14}, a classical, worst-case optimal join algorithm\footnote{\revision{A detailed example of the LFTJ execution is given in Appendix \ref{sec:illustrating_LFTJ}.}}. For conciseness, the pseudo-code describes the algorithm as a recursive function (whereas the actual implementation does not use recursion). The input to the algorithm is the join query, the remaining computational budget, an attribute order, a target cube to process, the result set, and the index of the current attribute. The algorithm considers query attributes sequentially, in the given attribute order. The attribute index marks the currently considered attribute. Once the attribute index reaches the total number of attributes (represented as $q.A$), the algorithm has selected one value for each attribute. Furthermore, at that point, it is clear that the combination of attribute values satisfies all applicable join conditions. Hence, the algorithm adds the corresponding result tuple into the result set \revision{(Line~5)}. \revision{As a variant (not shown in Algorithm~\ref{alg:joinOneCube}), for queries with simple aggregates without grouping, ADOPT does not store result tuples but merely updates partial aggregate values for each aggregate.} If the attribute index is below the total number of attributes, the algorithm iterates over values for that attribute (i.e., attribute $o_a$ where $o$ is the order and $a$ the attribute index) \revision{in the loop from Line~10 to 17}. 

In Line~8, Algorithm~\ref{alg:joinOneCube} creates an iterator over values for the current attribute that \revision{satisfy all \textit{applicable} join predicates and are within the target cube,} i.e., values contained in the interval $[c.l_{o_a},c.u_{o_a}]$ for attribute number $a$ within order $o$ ($c.l$ and $c.u$ designate vectors, indexed by attribute, that represent lower and upper target cube bounds respectively). The algorithm does not assemble the full set of matching values before iterating (as that would create significant overheads when switching attribute orders before being able to try all collected values). Instead, Line~8 is meant to represent the initialization of data structures that allow iterating over matching values efficiently. \revision{Join predicates are applicable if, beyond the current attribute $o_a$, they only refer to attributes whose values have been fixed previously (i.e., a corresponding value assignment is contained in $M$). For equality join predicates, ADOPT uses the same mechanism as LFTJ~\cite{DBLP:conf/icdt/Veldhuizen14} to efficiently iterate over satisfying values. This mechanism is described in detail in Appendix~\ref{sec:illustrating_LFTJ}. It is based on data structures that support fast seek operations on query relations. Whenever required data structures are not available, ADOPT dynamically creates them at run time. For base relations, but not for relations filtered via unary predicates, ADOPT caches and reuses those data structures across queries.}

Join processing via Procedure~\textproc{JoinOneCube} terminates once the computational budget is depleted \revision{(check in Line~14)}, or if the current cube is entirely processed. Function~\textproc{JoinOneCube} retrieves the number of computational steps, spent during join processing, as well as the last selected value for each attribute. It uses the latter to calculate the set of processed cubes (to be removed from the set of unprocessed cubes). Procedure~\textproc{JoinOneCubeRec} does not advance from one value of an attribute to the next, unless all value combinations for the remaining attributes have been fully considered. Hence, if value range  $c.l_{o_{a}}$ to $v_a$ was covered for the current attribute $a$, the cube representing processed value combinations reaches the full cube dimensions for all attributes that appear later than $a$ in the order $o$, and is fixed to the currently selected value for all attributes appearing before $a$ in $o$. Note that the pseudo-code uses a shortcut to assign both cube bounds at once (e.g., $p_i=[v_i,v_i]$ is equivalent to $[p.l_i,p.u_i]=[v_i,v_i]$) \revision{in Lines~32 to 34}.

\tikzstyle{cube}=[draw=black, thick]

\begin{figure}
    \centering
    \begin{tikzpicture}
        \begin{axis}[width=5cm, height=5cm, legend pos=outer north east, area legend, legend columns=1, legend entries={Entire Cube, {Target Cube: $A_1,A_2$}, {Target Cube: $A_2,A_1$}, Processed Cubes}, ticks=none, xlabel={Attribute $A_1$}, ylabel={Attribute $A_2$}, ylabel near ticks, xlabel near ticks]
            \addplot[cube, fill=gray!20] coordinates {(0,0) (10,0) (10,10) (0,10) (0,0)};
            \addplot[cube, fill=blue] coordinates {(2,1) (5,1) (5,8) (2,8) (2,1)};
            \addplot[cube, fill=red] coordinates {(6,3) (9,3) (9,9) (6,9) (6,3)};
            
            \addplot[cube, fill=yellow] coordinates {(2,1) (4,1) (4,8) (2,8) (2,1)};
            \addplot[cube, fill=yellow] coordinates {(4,1) (5,1) (5,6) (4,6) (4,1)};
            
            \addplot[cube, fill=yellow] coordinates {(6,3) (9,3) (9,5) (6,5) (6,3)};
            \addplot[cube, fill=yellow] coordinates {(6,5) (7,5) (7,6) (6,6) (6,5)};
        \end{axis}
    \end{tikzpicture}
    \caption{Illustration of containment relationships between hypercubes when processing a query with two attributes.}
    \label{fig:cubes}
\end{figure}
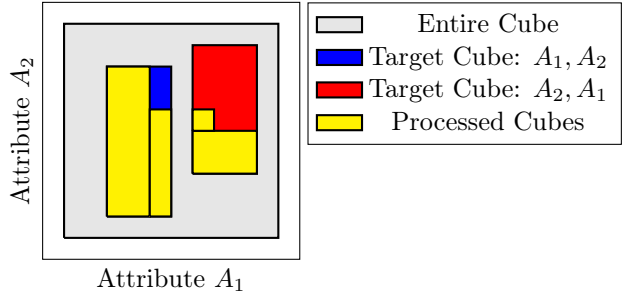

\begin{example}
\rm
Figure~\ref{fig:cubes} illustrates the containment relationships between different cubes when processing a query with two attributes. Processed cubes are contained within target cubes and target cubes are contained within the entire query cube. The figure represents target cubes that were processed, in different episodes, according to both possible attribute orders. The first one (left) was processed using order $A_1,A_2$. Hence, values for the first attribute change only after trying all values for the second attribute. Therefore, processed cubes fill the target cube ``column by column''. The other target was processed using the order $A_2,A_1$. Hence, processed cubes fill the target cube ``row by row''.
\end{example}

\subsection{Avoiding Redundant Work}
\label{sub:cubes}

\begin{algorithm}[t!]
\caption{Managing cubes representing unprocessed join input.\label{alg:cubes}}
\renewcommand{\algorithmiccomment}[1]{// #1}
\begin{small}
\begin{algorithmic}[1]
\State $U\gets\emptyset$ \Comment{Global variable representing unprocessed cubes}
\vspace{0.25cm}
\State \textbf{Input:} Query $q$, number of threads $n$.
\State \textbf{Effect:} Initialize set of unprocessed cubes.
\Procedure{TM.Init}{$q, n$}
\State $A\gets$ attributes that appear in $q$ in equality join conditions
\State $[l_a,u_a]\gets$ attribute value ranges for all attributes $a\in A$
\State \Comment{Identify attribute with largest value domain}
\State $a^*\gets\arg\max_{a\in A}(u_a-l_a)$
\State \Comment{Use full value range for all but that attribute}
\State $f\gets\times_{a\in A:a\neq a^*}[l_a,u_a]$
\State \Comment{Divide largest value domain into per-thread ranges}
\State $\delta\gets (u_{a^*}-l_{a^*})/n$
\State \Comment{Form one unprocessed cube per thread}
\State $U\gets\{f\times [l_{a^*}+i\cdot\delta,l_{a^*}+(i+1)\cdot\delta|0\leq i<n]\}$
\EndProcedure
\vspace{0.25cm}
\State \textbf{Output:} Returns an unprocessed hypercube.
\Function{TM.Retrieve}{}
\State \Return Randomly selected cube from $U$
\EndFunction
\vspace{0.25cm}
\State \textbf{Input:} Target cube $c$ to subtract, processed cube set $P$.
\State \textbf{Effect:} Updates set of unprocessed cubes.
\Procedure{TM.Remove}{$c, P$}
\State \Comment{Subtract target cube from unprocessed cubes}
\State $U \gets U \setminus c$
\State \Comment{Add complement of processed cubes as unprocessed}
\For{$p \in P$}
\State \Comment{Get dimensions where $p$ fully covers $c$}
\State $F\gets$ indexes $i$ s.t.\ $p.l_i=c.l_i$ and $p.u_i=c.u_i$
\State \Comment{Get dimensions where $p$'s bounds collapse}
\State $S\gets$ indexes $i$ s.t.\ $p.l_i=p.u_i$
\State \Comment{Get single remaining dimension}
\State $d\gets$ single remaining dimension not in $F$ or $S$
\State Create new cube $u$ s.t.\
\State $\quad$ $u_{d}=(p.u_d,c.u_d]$; $\forall f\in F:u_f=p_f$; $\forall s\in S:u_s=p_s$
\State \Comment{Add newly created cube to unprocessed cubes}
\If{$u$ is not empty}
\State $U\gets U\cup\{u\}$
\EndIf
\EndFor
\EndProcedure
\vspace{0.25cm}
\State \textbf{Output:} True iff no unprocessed cubes are left.
\Function{TM.Finished}{}
\State \Return{\textbf{true} iff $U=\emptyset$}
\EndFunction
\end{algorithmic}
\end{small}
\end{algorithm}

ADOPT changes between different attribute orders over the course of query processing. This creates the risk of redundant work across different orders. ADOPT avoids redundant work by keeping track of cubes, in the space of join attribute values, that have not been considered yet. More precisely, ADOPT keeps track, at any point in time, of remaining, i.e.\ unprocessed, cubes. Whenever one of the processing threads requests a new cube to work on, ADOPT returns an unprocessed cube, thereby avoiding redundant work.

Algorithm~\ref{alg:cubes} gives functions used to manipulate cubes. At the beginning (Procedure~\textproc{TM.Init}), it initializes the set of unprocessed cubes to cover the entire attribute space. To do so, ADOPT first retrieves all join attributes \revision{(Line~5)}, then their value ranges \revision{(Line~6)}. Forming one single cube (i.e., the Cartesian product of all value ranges) diminishes chances for parallelization, at least at the start of query processing. Hence, ADOPT divides the attribute value space into equal-sized cubes with one cube per thread \revision{(Lines~7 to 14)}. To do so, it uses the attribute with maximal value domain, dividing its range equally across threads \revision{(Line~12)}. \revision{Note that, as discussed in the following, threads are not restricted to processing cubes initially assigned to them over the entire course of query evaluation. Instead, at the end of each episode, unprocessed parts of cubes assigned to a specific thread may get re-assigned to other threads.}

Whenever a worker threads requests a cube to work on \revision{(Line~13 in Algorithm~\ref{alg:anytimeLFTJ})}, a randomly selected cube from the set of unprocessed cubes is returned \revision{(Line~18 in Algorithm~\ref{alg:cubes})}. Note that the pseudo-code is slightly simplified, compared to the implementation, by omitting checks used to avoid concurrent changes to the set of unprocessed cubes (by multiple threads). 

Whenever a worker threads finished processing, it registers a set of cubes that was processed. It calls Procedure~\textproc{TM.Remove} to update the set of unprocessed cubes. This function takes two parameters, representing the set of processed cubes as well as the target cube, as input. All processed cubes are contained within the target cube and have a special structure, explained in the following. As a first step, ADOPT removes the target cube from the set of unprocessed cubes \revision{in Line24} (the target cube was selected by an invocation of the \textproc{TM.Retrieve} function and is therefore contained in the set $U$). If the set of processed cubes, in aggregate, do not cover the target cube (in general, that is the case), the set of unprocessed cubes is now missing all cubes contained in the target cube but not covered by the processed cubes. Hence, ADOPT adds more unprocessed cubes to reflect the difference.

Each processed cube has a special form, due to the structure of the join algorithm generating it (Lines~23 to 28 in Algorithm~\ref{alg:joinOneCube}). All processed cubes are generated according to the same attribute order and based on the same, final values selected for each attribute. Consider one single processed cube, using the selected attribute values $v_s$ for a prefix $S$ of the attribute order, the range of values up to the selected value $v_d$ for a single attribute $d$, and the full target cube range for the remaining attributes $F$. Clearly, given the selected values for attributes $S$, none of the values greater than $v_d$ for attribute $d$ has been considered by the join algorithm (instead, such value combinations would have been considered later by the join algorithm). Hence, the corresponding cube is added to the set of unprocessed cubes \revision{(Line~37)}. Also note that these unprocessed cubes cannot overlap (as, for each pair of unprocessed cubes, there is at least one attribute $a$ for which one cube fixes a value $v_a$, the other cube covers only values greater than $v_a$). This preserves the invariant that elements of $U$, representing unprocessed cubes, do not overlap. It also means that work done by different threads does not overlap. The processing finishes (Procedure~\textproc{TM.Finished}) whenever no unprocessed cubes are left.

\tikzstyle{values}=[only marks, mark=x, draw=black, mark size=6, ultra thick]
\tikzstyle{processedcube}=[fill=blue!20]
\tikzstyle{unprocessed}=[ultra thick, draw=red]

\begin{figure}
    \centering
    \begin{tikzpicture}
    \begin{groupplot}[group style={group size=3 by 1, x descriptions at=edge bottom, y descriptions at=edge left}, width=3cm, height=3cm, xmin=0, xmax=3, ymin=0, ymax=5, xlabel=Attribute, ylabel=Value, xtick=\empty, ylabel near ticks, xlabel near ticks, ymajorgrids, ytick={1,2,3,4,5}, yticklabel style={yshift=-0.2cm}]
    \nextgroupplot
    \draw[processedcube] (axis cs:0,0) rectangle (axis cs:1,4);
    \draw[processedcube] (axis cs:1,0) rectangle (axis cs:2,5);
    \draw[processedcube] (axis cs:2,0) rectangle (axis cs:3,5);
    \addplot[values] coordinates {(0.5,4.5) (1.5,2.5) (2.5,3.5)};
    \draw[unprocessed] (axis cs:0,4) rectangle (axis cs:1,5);
    \draw[unprocessed] (axis cs:1,0) rectangle (axis cs:2,5);
    \draw[unprocessed] (axis cs:2,0) rectangle (axis cs:3,5);

    \nextgroupplot
    \draw[processedcube] (axis cs:0,4) rectangle (axis cs:1,5);
    \draw[processedcube] (axis cs:1,0) rectangle (axis cs:2,2);
    \draw[processedcube] (axis cs:2,0) rectangle (axis cs:3,5);
    \addplot[values] coordinates {(0.5,4.5) (1.5,2.5) (2.5,3.5)};
    \draw[unprocessed] (axis cs:0,4) rectangle (axis cs:1,5);
    \draw[unprocessed] (axis cs:1,3) rectangle (axis cs:2,5);
    \draw[unprocessed] (axis cs:2,0) rectangle (axis cs:3,5);

    \nextgroupplot
    \draw[processedcube] (axis cs:0,4) rectangle (axis cs:1,5);
    \draw[processedcube] (axis cs:1,2) rectangle (axis cs:2,3);
    \draw[processedcube] (axis cs:2,0) rectangle (axis cs:3,4);
    \addplot[values] coordinates {(0.5,4.5) (1.5,2.5) (2.5,3.5)};
    \draw[unprocessed] (axis cs:0,4) rectangle (axis cs:1,5);
    \draw[unprocessed] (axis cs:1,2) rectangle (axis cs:2,3);
    \draw[unprocessed] (axis cs:2,4) rectangle (axis cs:3,5);

    \end{groupplot}
    \end{tikzpicture}
    \caption{Illustrating cube processing in Example~\ref{ex:cube-removal}:  
    The initial target cube $([1,5], [1,5], [1,5])$ is processed up to $(5,3,4)$  (marked by X). Processed cubes are represented by blue rectangles, complementary unprocessed cubes by red rectangles.}
    \label{fig:removal}
    \vspace*{-1em}
\end{figure}
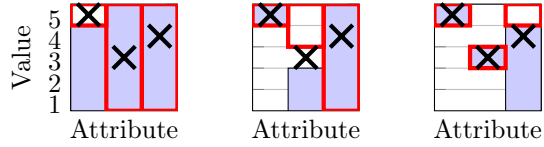

\begin{example}
\label{ex:cube-removal}
\rm
Figure~\ref{fig:removal} illustrates the processing of a target cube $([1,5], [1,5], [1,5])$ for an attribute order ($A_0$,$A_1$,$A_2$). In each sub-plot, the x-axis represents attributes while the y-axis represents attribute values.  Assume the timeout for this episode occurs after considering the values $(5,3,4)$ (marked by X). This means that we managed to process the following sub-cubes, left: $([1-4], [1-5], [1-5])$, middle: $(5, [1-2], [1-5])$, right: $(5, 3, [1-4])$. We infer the remaining unprocessed sub-cubes that complement these processed sub-cubes with respect to the target cube, left: $(5, [1-5], [1-5])$, middle: $(5, [4-5], [1-5])$, right:$(5, 3, 5)$.

\nop{
A timeout occurs during join processing after considering values (5,3,4) for the three attributes (in attribute order, selected values are marked by X). 
Each of the three sub-plots, from left to right, represents one of the processed cubes generated in that order via the loop from Lines~30 to 36 in Algorithm~\ref{alg:joinOneCube}. The extent of processed cubes, for each attribute, is marked up in blue. E.g., for the left-most plot, the processed cube is defined as $[1,4]\times[1,5]\times[1,5]$. For each processed cube, an unprocessed cube is added in Lines~26 to 39 in Algorithm~\ref{alg:cubes}, such that the unprocessed cubes together cover all unprocessed values. The value ranges of unprocessed cubes are marked as red rectangles. 
}

\end{example}

\subsection{Learning Attribute Orders}
\label{sub:rl}

ADOPT uses reinforcement learning to learn near-optimal attribute orders, over the course of a single query execution. At the beginning of each time slice, ADOPT selects an attribute order that maximizes the tradeoff between exploration and exploitation. It uses the Upper Confidence Bounds on Trees (UCT) algorithm~\cite{Kocsis2006} to choose an attribute order. This requires mapping the scenario (of attribute order selection) into a Markov-Decision Problem. Next, we discuss the algorithm as well as the problem model.

An episodic Markov Decision Process (MDP) is generally defined by a tuple $\langle s_0,S,A,T,R\rangle$ where $S$ is a set of states, $s_0\in S$ the initial state in each episode, $A$ a set of actions, and $T:S\times A\rightarrow S$ a transition function, linking states and action pairs to target states. Component $R$ represents a reward function, assigning states to a reward value. In our scenario, the transition function is deterministic while the reward function is probabilistic (i.e., states are associated with a probability distribution over possible rewards, rather than a constant reward that is achieved, every time the state is visited). The transition function is partial, meaning that certain actions are not available in certain states. Implicitly, we assume that all states without available actions are end states of an episode. After reaching and end state, the current episode ends and the next episode starts (from the initial state $s_0$ again). Given an MDP, the goal in reinforcement learning~\cite{Sutton2018} is to find a policy, describing behavior that results in maximal (expected) reward. In order to leverage reinforcement learning algorithms for our scenario, we must therefore map attribute order selection into the MDP formalism.

Our goal is to learn a policy that describes an attribute order. The policy generally recommends actions to take in a specific state. Here, we introduce one action for each query attribute. States are associated with attribute order prefixes (i.e., each state represents an order for a subset of attributes). To simplify the notation, we will refer to states by the prefix they represent, to actions by the attribute they correspond to. The transition function connects a first state $s_1$ to a second state $s_2$ via action $a$, if the second state can be reached by appending the attribute, represented by the action, to the prefix represented by the first state. More precisely, using the notation introduced before, the transition function links the state-action pair $\langle s_1,a\rangle$ to state $s_2=s_1\circ a$ (where $\circ$ represents concatenation). Each state represents a prefix of an attribute order in which each attribute appears at most once. Hence, the actions available in a state correspond to attributes that do not appear in the prefix represented by the state. This means that all states representing a complete attribute order are end states, implicitly. As a further restriction, we do not allow actions representing attributes that do not connect to any attributes in the prefix represented by the current state. This is similar to the heuristic of avoiding Cartesian product joins, used almost uniformly in traditional query optimizers. The reward function is set to zero for all states, except for end states. States of the latter category represent complete attribute orders. Upon reaching such a state, ADOPT executes the corresponding attribute order for a limited number of steps, measuring execution progress. The process by which execution process is measured is described in the following subsections.

ADOPT applies the UCT algorithm to solve the resulting MDP. As the MDP represents the problem of attribute ordering, linking rewards to execution progress, solving the MDP (i.e., finding a policy with maximal expected reward) yields a near-optimal attribute order. The UCT algorithm represents the state space as a search tree. Nodes represent states while tree edges represent transitions. Tree nodes are associated with statistics, establishing confidence bounds on the average reward associated with the sub-tree rooted at that node. Confidence bounds are updated as new reward samples become available. In each episode, the UCT algorithm selects a path from the search tree root to one of the leaf nodes. At each step, the UCT algorithm selects the child node with maximal upper confidence bound (hence the name of the algorithm). This approach converges to optimal policies~\cite{Kocsis2006}. After selecting a path to a leaf and calculating the associated reward, the UCT algorithm updates confidence bounds for each node on that path.

\revision{ADOPT grows the UCT search tree gradually over the course of query execution. At the start of execution, the tree only contains the root node. Then, in each episode, the tree is expanded by at most one node. Which nodes are added depends on the selected attribute orders. Each attribute order corresponds to a sequence of states in the MDP (a state represents an attribute order, each state appending one attribute, compared to its predecessor). In the fully grown search tree, each state is associated with one node. If, for the currently selected attribute order, some of the states do not have associated nodes in the tree yet, ADOPT expands the tree by adding a node for the first such state. ADOPT uses the partial tree to select attribute orders as follows. Given a state for which all possible successor states have associated nodes in the tree (i.e., reward statistics are available), ADOPT uses the aforementioned principle and selects the attribute that maximizes the upper confidence bound on reward values. If some of the successor states do not have associated nodes yet, ADOPT transitions to a randomly selected state among them (which will create a corresponding node). As a special case, if no nodes are available for any of the successor states, ADOPT selects the next attribute with uniform random distribution.}


\tikzstyle{uctnode}=[draw, circle, shade, top color=gray!10, bottom color=gray!20, blur shadow={shadow blur steps=5}, minimum width=1cm]
\tikzstyle{episode}=[color=red]

\begin{figure}[t]
    \centering
    \begin{tikzpicture}
        \node[uctnode] (start) at (0,0) {};
        \node[episode] at (0.5, 0.5) {0};
        \node[uctnode] (a) at (-2,-1) {A};
        \node[episode] at (-1.5, -0.5) {1};
        \node[uctnode] (ab) at (-3,-2) {AB};
        \node[episode] at (-2.5, -2.5) {4};
        \node[uctnode] (ac) at (-1,-2) {AC};
        \node[episode] at (-0.5, -2.5) {5};
        \node[uctnode] (b) at (0,-1.5) {B};
        \node[episode] at (0.5, -1) {2};
        \node[uctnode] (c) at (2,-1) {C};
        \node[episode] at (2.5, -0.5) {3};
        \node[uctnode] (cb) at (3,-2) {CB};
        \node[episode] at (3.5, -2.5) {6};
        \node[uctnode] (ca) at (1,-2) {CA};
        \node[episode] at (1.5, -2.5) {7};
        \node[uctnode, double] (cba) at (2,-3) {CBA};
        \node[episode] at (2.5, -3.5) {8};
        
        \draw[dataflow] (start) -- (a);
        \draw[dataflow] (start) -- (b);
        \draw[dataflow] (start) -- (c);
        \draw[dataflow] (a) -- (ab);
        \draw[dataflow] (a) -- (ac);
        \draw[dataflow] (c) -- (ca);
        \draw[dataflow] (c) -- (cb);
        \draw[dataflow] (cb) -- (cba);
    \end{tikzpicture}
    \caption{\revision{UCT search tree for a query with three attributes: nodes are labeled with partial attribute orders, transitions append one attribute. Red numbers next to nodes represent the episode number at which they are added when selecting attribute orders ABC,
BCA, CBA, ABC, ACB, CBA, CAB, and CBA (in that order).}}
    \label{fig:uct}
    \vspace*{-2em}
\end{figure}
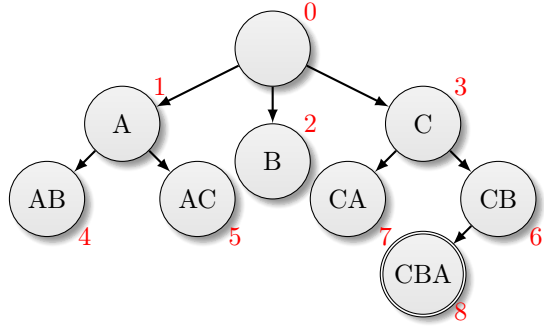

\begin{example}
\label{ex:expansion}
\rm
\revision{Given a query with three attributes (A, B, and C), assume that ADOPT selects the following attribute orders in the first episodes (some orders are selected in multiple episodes): ABC, BCA, CBA, ABC, ACB, CBA, CAB, CBA. Figure~\ref{fig:uct} shows the UCT search tree after those episodes. Nodes represent partial attribute orders and edges represent the addition of one attribute. Next to each node, in red, the figure shows the number of the episode in which the node was added. Initially (episode zero), the tree contains only the root node. In the first episode, ADOPT selects order ABC, adding a node for the first prefix (A) without corresponding node in the tree. Later, in episode four, ADOPT selects order ABC and, again, adds a node for the first prefix (AB) for which no node has been created. Once nodes are added, ADOPT starts collecting reward statistics for all attribute orders extending the corresponding prefix. These statistics are used to select attribute orders in future episodes.}
\end{example}


\subsection{Estimating Order Quality}
\label{sub:estimation}


The reinforcement learning, described in Section~\ref{sub:rl}, is guided by reward values. Next, we discuss the definition of the reward function. Before that, we introduce an auxiliary function, measuring the volume of a cube as the product of range sizes over all dimensions:

\begin{equation}
    Volume(c)=\prod_{i}(c.u_i-c.l_i)
\end{equation}

With a slight abuse of notation, we  write $Volume(q)$ to denote the volume of the cube, spanned by all join attributes of a query $q$. 

In order to fully process a query, ADOPT must cover the cube representing the entire space of attribute value combinations. Hence, the more volume of that cube we cover per time unit, the faster query processing is. \revision{Even for a fixed attribute order, the volume processed per time unit may vary across different parts of the data (e.g., since the number of result tuples per volume varies). However, the fastest order processes most volume in average, averaging over the entire data set, and the UCT algorithm converges to decisions with highest average reward, even if the reward function is noisy~\cite{Kocsis2006}.} This implies that volume covered is a useful measure of progress. The reward function, presented next, follows that intuition. Given a set of processed cubes $P$ for query $q$, it uses the aggregate volume covered, scaled to the total volume to process (scaling ensures reward values between zero and one, consistent with the requirements of the UCT algorithm):

\begin{equation}
    Reward(P,\revision{q})=(\sum_{p\in P}Volume(p))/Volume(q)
\end{equation}



\section{Analysis}
\label{sec:analysis}


In Section~\ref{sub:convergence}, we prove that ADOPT converges to optimal attribute orders. Two further properties, correctness and worst-case optimality, are analyzed in Sections 
\ref{sub:correct} and respectively \ref{sub:wcoproof}.

 \subsection{Convergence to Optimal Orders}
 \label{sub:convergence}
First, we show that ADOPT must finish processing once the accumulated rewards reach a precise threshold.

\begin{theorem}\label{th:threshold}
Join processing finishes once the sum of accumulated rewards over all threads and episodes reaches one.
\end{theorem}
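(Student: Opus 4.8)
The statement says that join processing terminates precisely when the accumulated reward (summed over all threads and episodes) reaches one. By the definition of $Reward(P,q)$ in Section~\ref{sub:estimation}, each reward increment equals the aggregate volume of a set of processed cubes, scaled by $Volume(q)$. So the sum of all rewards equals $\left(\sum_{p}Volume(p)\right)/Volume(q)$, where the outer sum ranges over every cube $p$ ever reported as processed (across all \textproc{JoinOneCube} invocations in all episodes and threads). The plan is to show that this running total reaches $1$ exactly when the termination condition \textproc{TM.Finished} (i.e.\ $U=\emptyset$) becomes true, which by Algorithm~\ref{alg:main} is exactly when ADOPT stops.

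\textbf{Key steps.} First I would establish the \emph{disjointness invariant}: at all times, the processed cubes reported so far are pairwise non-overlapping. This follows from the structure of Algorithm~\ref{alg:joinOneCube} (within one target-cube invocation, the $|q.A|$ cubes in $P$ differ on at least one coordinate where one fixes a value and the other lies strictly above it) combined with the fact that \textproc{TM.Retrieve} only ever hands out cubes from $U$, and (as argued in Section~\ref{sub:cubes}) the cubes in $U$ are pairwise disjoint and disjoint from everything processed — so distinct target cubes, and the processed sub-cubes inside them, never overlap. Second, I would prove the \emph{coverage invariant}: at every point in time, the union of all processed cubes together with the union of the cubes currently in $U$ equals the full query cube $Volume(q)$, up to a measure-zero set (shared faces). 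Initially this holds because $U$ is a partition of the query cube and nothing has been processed; it is preserved by \textproc{TM.Remove}, since removing target cube $c$ from $U$ and adding back the complementary unprocessed sub-cubes exactly accounts for the part of $c$ not in $P$ (this is the content of the paragraph around Lines~23--37 of Algorithm~\ref{alg:cubes}). Third, combining disjointness and coverage and taking volumes: $\sum_{p\ \text{processed}}Volume(p) + \sum_{u\in U}Volume(u) = Volume(q)$ at all times. Hence the accumulated reward equals $1 - \left(\sum_{u\in U}Volume(u)\right)/Volume(q)$, which is $1$ if and only if $\sum_{u\in U}Volume(u)=0$, i.e.\ (since all cubes in $U$ are non-degenerate with positive volume, as empty cubes are never added — see the check in Line~36 of Algorithm~\ref{alg:cubes}) if and only if $U=\emptyset$, which is the termination condition.

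\textbf{Main obstacle.} The delicate point is handling lower-dimensional overlaps: adjacent cubes share boundary faces, so the processed cubes and the cubes in $U$ are not literally disjoint as point sets, only up to measure zero. I would address this by working with half-open intervals consistently (note the pseudo-code already uses $p_a=[c.l_{o_a},v_a)$ and $u_d=(p.u_d,c.u_d]$, which is exactly what makes the decomposition a genuine partition rather than a cover with overlapping boundaries), so that $Volume$ is additive on the pieces with no correction needed; the volume identity is then exact. A secondary subtlety is the concurrency across threads: I would note that the disjointness invariant is what guarantees two threads never process overlapping regions, so the volume bookkeeping is still a simple sum even in the parallel setting, and the per-episode step limit only means a target cube may be split into processed plus (re-added) unprocessed parts — which the coverage invariant already accommodates.
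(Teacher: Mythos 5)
Your proof is correct and takes essentially the same route as the paper's: the reward accumulates the volume of disjoint processed cubes scaled by the total volume, so a reward sum of one means the full query cube has been covered and the task manager's termination condition must hold. You merely make explicit the disjointness and coverage invariants (and the half-open-interval bookkeeping) that the paper's much terser proof leaves implicit.
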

\begin{proof}
Reward is proportional to the volume of the cube covered, scaled to the size of the full cube. Hence, accumulating a reward sum of one means that a volume equal to the full cube has been processed. Furthermore, ADOPT avoids covering overlapping cubes by different threads and in different episodes (independently of the attribute order). Hence, once the accumulated reward reaches one, processed cubes must cover the full cube.
\end{proof}

This implies that the reward function is a good measure of attribute order quality indeed.

\begin{theorem}
The attribute order with the highest average reward per episode minimizes the number of computational steps.\label{th:rewardgood}
\end{theorem}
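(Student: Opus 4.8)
The plan is to derive the claim directly from Theorem~\ref{th:threshold} by relating, for a fixed attribute order, the total number of computational steps to the average reward accumulated per episode. First I would fix an attribute order $o$ and analyze the hypothetical execution that uses $o$ in every episode. Each episode consumes a fixed amount of work, $W := n\cdot b$ computational steps (Algorithm~\ref{alg:anytimeLFTJ}), except possibly the final episode, which may stop early once the query completes. By the definition of $Reward$ in Section~\ref{sub:estimation}, the reward collected in an episode is exactly the fraction of $Volume(q)$ newly covered during that episode; hence after $k$ episodes the total work lies in $\bigl((k-1)W,\,kW\bigr]$, while the accumulated reward is the total covered volume scaled by $1/Volume(q)$.

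Next I would invoke Theorem~\ref{th:threshold}: the run terminates exactly when the accumulated reward reaches one, and at that moment the non-overlapping processed cubes cover the full attribute cube, so the accumulated reward is precisely one. Therefore, if the run using $o$ takes $k(o)$ episodes, the realized average reward per episode equals $\overline{r}(o)=1/k(o)$, and the total number of computational steps satisfies $S(o)\in\bigl((k(o)-1)W,\,k(o)W\bigr]$, i.e.\ $S(o) = W/\overline{r}(o) + O(W)$. Since $x\mapsto W/x$ is strictly decreasing on $(0,1]$, the order maximizing the average reward per episode is, up to the lower-order $O(W)$ term, exactly the order minimizing $S(o)$, which is the statement.

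The main obstacle is that the reward earned in an episode is a \emph{random} variable whose law depends on the run's history — which cubes remain, where timeouts fall, and the random choices in \textproc{TM.Retrieve} — so ``the average reward per episode'' is only well defined after an averaging step, and $k(o)$ itself fluctuates across runs. I would handle this under the moderately simplifying assumption already flagged in the introduction: for a fixed order, the volume processed per computational step is in expectation the same across all regions of the attribute space, so the per-episode rewards share a common mean $\overline{r}(o)$. Under this assumption the accumulated reward after $k$ episodes concentrates around $k\,\overline{r}(o)$ by the law of large numbers, hence $k(o)$ concentrates around $1/\overline{r}(o)$, and the additive $O(W)$ slack — from the partially used final episode and from occasional cubes too small to absorb a full budget — is asymptotically negligible relative to the leading term $W/\overline{r}(o)$ in the regime of interest, where nontrivial queries require many episodes. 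A minor remaining point is that ADOPT in fact switches orders between episodes; but the theorem compares the cost \emph{attributable to an order}, which is precisely the single-order quantity $S(o)$ analyzed above, and this is the quantity that the UCT convergence guarantee~\cite{Kocsis2006} then drives ADOPT to minimize.
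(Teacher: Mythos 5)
Your proposal follows essentially the same route as the paper's proof: both invoke Theorem~\ref{th:threshold} to conclude that the accumulated reward at termination equals exactly one, deduce that the average reward per episode for an order $o$ is the reciprocal of the number of episodes $e_o$ it needs, and use the fact that each episode costs a fixed number of computational steps to conclude that maximizing average reward minimizes total steps. Your version is more careful about the partially consumed final episode and the stochasticity of per-episode rewards (points the paper's one-paragraph proof glosses over), but the underlying argument is identical.
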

\begin{proof}
For any attribute order $o$, processing finishes once the accumulated rewards reach one (Theorem~\ref{th:threshold}). Therefore, the average reward  $r_o$ per episode for $o$ is inversely proportional to the number of episodes $e_o$ needed by $o$, i.e.\ $r_o=1/e_o$. Also, the number of computational steps per episode is constant. Therefore, minimizing the number of episodes needed maximizes the average reward.
\end{proof}

This implies convergence to optimal attribute orders.

\begin{corollary}
ADOPT converges to an optimal attribute order.
\end{corollary}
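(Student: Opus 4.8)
The plan is to combine the two theorems just established with the known convergence guarantee of the UCT algorithm. Theorem~\ref{th:rewardgood} tells us that the attribute order $o^*$ minimizing the number of computational steps is exactly the one with the highest average reward per episode; equivalently, it is the optimal action sequence in the MDP of Section~\ref{sub:rl}, since in that MDP the reward is zero except at end states, where it equals the (expected) per-episode reward of the corresponding complete attribute order. So ``converging to an optimal attribute order'' is the same as ``converging to an optimal policy in the MDP.''

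Next I would invoke the convergence result of Kocsis and Szepesv\'ari~\cite{Kocsis2006}: the UCT algorithm applied to a finite-horizon MDP selects, with probability tending to one as the number of episodes grows, the path of maximal expected reward from the root, even when the reward signal is a bounded, noisy (probabilistic) random variable. I would check that the hypotheses of that result are met here: the state space is finite (attribute-order prefixes, each attribute used at most once), the horizon is bounded by the number of join attributes, rewards lie in $[0,1]$ by the scaling in the definition of $Reward(P,q)$ (Theorem~\ref{th:threshold} guarantees the accumulated reward never needs to exceed one, and each episode contributes a nonnegative fraction of the total volume), and the reward distribution for a fixed complete order is stationary across episodes up to the ``moderately simplifying assumptions'' already flagged in the introduction --- chiefly that the expected per-episode volume processed by a given order does not change as query execution progresses, so that the empirical averages the UCT tree accumulates are unbiased estimates of the true per-episode reward. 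Under these conditions UCT's selection at the root converges to an action on a root-to-leaf path of maximal total expected reward, i.e. to $o^*$.

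Finally I would tie the two halves together: by Theorem~\ref{th:rewardgood} the order $o^*$ that UCT converges to is precisely an order minimizing the number of computational steps, which is the definition of an optimal attribute order, completing the proof. The main obstacle I anticipate is not the logical chain --- which is short --- but making the stationarity/independence assumptions on the reward signal precise enough to legitimately invoke the UCT theorem, since in reality the per-episode reward of an order depends on which cubes remain unprocessed, and those shrink over time; I would address this by stating explicitly (as the introduction already foreshadows with ``moderately simplifying assumptions'') that we model the per-episode reward of each order as an i.i.d.\ draw from a fixed distribution with mean equal to its average volume-per-episode over the whole input, under which assumption the cited convergence guarantee applies verbatim.
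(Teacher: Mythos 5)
Your proof takes essentially the same route as the paper's: combine Theorem~\ref{th:rewardgood} (highest average reward per episode $=$ fewest computational steps) with the UCT convergence guarantee of~\cite{Kocsis2006} to conclude convergence to the step-minimizing order. The paper's version is a three-line sketch that leaves the stationarity of the per-episode reward implicit under its ``moderately simplifying assumptions,'' whereas you make that i.i.d.\ modeling assumption explicit before invoking the UCT theorem; the logical content is the same.
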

\begin{proof}
Following Theorem~\ref{th:rewardgood}, the order with the highest average reward is also the fastest one to process. Furthermore, the UCT algorithm used by ADOPT converges to a solution with maximal expected reward~\cite{Kocsis2006}. Hence, ADOPT converges to an attribute order that minimizes the number of processing steps.
\end{proof}

\subsection{Correctness}
\label{sub:correct}

We show that ADOPT generates a complete and correct result.

\begin{theorem}
ADOPT does not produce incorrect join results.\label{th:noincorrect}
\end{theorem}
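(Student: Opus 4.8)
The plan is to argue that every tuple ADOPT inserts into the result set $R$ is a genuine join result tuple, i.e.\ a value assignment to all query attributes that satisfies every join predicate of $q$. Since $R$ is only modified by \textproc{JoinOneCubeRec} (Line~5 of Algorithm~\ref{alg:joinOneCube}), and there only at the point where the attribute counter $a$ has reached $|q.A|$, it suffices to show that whenever \textproc{JoinOneCubeRec} reaches that base case, the accumulated value mapping $M$ is a satisfying assignment. I would prove this by induction on the recursion depth, maintaining the invariant that at every recursive call with counter $a$, the partial mapping $M$ assigns values to exactly the first $a$ attributes of the order $o$ and satisfies every join predicate of $q$ that is \emph{applicable} given that prefix (i.e.\ every predicate all of whose attributes lie among $o_0,\ldots,o_{a-1}$).

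The base of the induction is the top-level call with $a=0$ and $M=\emptyset$: vacuously, no predicate is applicable and the empty mapping satisfies it. For the inductive step, consider a call at level $a$ satisfying the invariant. The only recursive calls it makes are at level $a+1$ with mapping $M\cup\{\langle o_a,v\rangle\}$, where $v$ ranges over the iterator $V$ of Line~8. By construction, $V$ only yields values $v$ that lie in the target-cube interval for $o_a$ \emph{and} satisfy all join predicates of $q$ that become applicable once $o_a$ is fixed (those referring only to $o_0,\ldots,o_a$); combined with the inductive hypothesis on $M$, the extended mapping satisfies all predicates applicable to the length-$(a+1)$ prefix, which is exactly the invariant at level $a+1$. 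When the recursion bottoms out at $a=|q.A|$, the "applicable" qualifier is vacuous — every join predicate refers only to query attributes, all of which are now fixed — so $M$ satisfies \emph{all} join predicates of $q$, and the tuple inserted into $R$ in Line~5 is a correct join result. I would also note that the unary predicates were already enforced in the preprocessing step (Line~5 of Algorithm~\ref{alg:main}), so the filtered relations $R_1,\ldots,R_m$ contain only rows consistent with them, and that the timeout check (Line~14) and the cube restriction only ever \emph{prune} branches of the recursion; they never cause a tuple to be emitted, so they cannot introduce incorrect results.

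The main obstacle, and the place where the argument needs care rather than routine bookkeeping, is pinning down precisely what the iterator $V$ in Line~8 guarantees — in particular that the underlying LFTJ seek machinery, together with the cube bounds $[c.l_{o_a},c.u_{o_a}]$, really does filter out every value of $o_a$ that would violate an applicable equality (or other) join predicate, and only such values. This is exactly the soundness property of leapfrog triejoin's unary iterator intersection, so I would invoke the correctness of the LFTJ primitive (as established in~\cite{DBLP:conf/icdt/Veldhuizen14} and recalled in Appendix~\ref{sec:illustrating_LFTJ}), and then the rest of the proof is the clean induction sketched above. A secondary subtlety worth a sentence is that parallelism and attribute-order switching across episodes do not affect this theorem at all: each individual insertion into $R$ is justified in isolation by the invariant, regardless of which thread performs it, which target cube it is working on, or which order was chosen — so no synchronization reasoning is required for \emph{this} statement (it is only needed for completeness, treated separately).
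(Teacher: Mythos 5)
Your proposal is correct and follows essentially the same route as the paper's proof: both reduce correctness to the facts that tuples are only inserted at Line~5 of Algorithm~\ref{alg:joinOneCube} once every attribute is fixed, that the iterator of Line~8 only yields values satisfying the applicable join predicates, and that unary predicates were already enforced by the pre-processing filter. Your version merely makes the argument more rigorous by phrasing it as an explicit induction on the recursion depth with an ``all applicable predicates hold for the current prefix'' invariant, and by noting that timeouts, cube bounds, and parallelism can only prune work and hence cannot introduce incorrect tuples.
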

\begin{proof}
ADOPT inserts join results in Line~5 of Algorithm~\ref{alg:joinOneCube}. For each attribute, Algorithm~\ref{alg:joinOneCube} only iterates over values that appear in all relations with that attribute (Line~10 in Algorithm~\ref{alg:joinOneCube}). Hence, join results must satisfy all equality join conditions. Furthermore, Algorithm~\ref{alg:joinOneCube} is only applied to input tuples satisfying all unary predicates (due to the filter in Line~5 in Algorithm~\ref{alg:main}). Hence, join results satisfy all applicable predicates and are correct.
\end{proof}


\begin{lemma}
All results contained within processed cubes, returned by Algorithm~\ref{alg:joinOneCube}, have been inserted into the result set.\label{lm:processedcorrect}
\end{lemma}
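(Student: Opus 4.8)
The claim is that every join result tuple lying inside one of the processed cubes returned by Algorithm~\ref{alg:joinOneCube} has already been inserted into $R$. The plan is to argue by the structure of the recursion in \textproc{JoinOneCubeRec} together with the explicit form of the processed cubes constructed in Lines~30--36. Recall that each processed cube $p$ is built from the vector $v$ of last-considered values: for some index $a$ in the attribute order, $p$ fixes $p_i=[v_i,v_i]$ for all $i<a$, sets $p_a=[c.l_{o_a},v_a)$, and uses the full target-cube range $[c.l_{o_i},c.u_{o_i}]$ for all $i>a$. So a result tuple $\vec{x}$ inside $p$ agrees with $v$ on all attributes before position $a$, has its $o_a$-value strictly below $v_a$, and is otherwise unconstrained beyond the target cube. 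The heart of the argument is: because \textproc{JoinOneCubeRec} is a depth-first enumeration that does not advance the iterator for attribute $o_a$ from one value to the next until it has recursively exhausted all value combinations for the later attributes, any value $v'_a < v_a$ for attribute $o_a$ (under the fixed prefix $v_0,\dots,v_{a-1}$) must have been fully explored before the timeout occurred. "Fully explored" means the recursive call at depth $a+1$ ran to completion for that branch, which by induction on the suffix of the order inserts every completing tuple into $R$ at Line~5.

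Concretely, I would set up an induction on the position $a$ used to define the processed cube $p$ (equivalently, an induction on the recursion depth). \textbf{Base step / reading off the recursion:} fix the prefix assignment $v_0,\dots,v_{a-1}$ that $p$ shares with $v$; this is exactly the branch of the recursion tree that was "active" at the timeout, so it was entered. Within that branch, the loop of Line~10 iterates over values for $o_a$ in increasing order, and $v_a$ is by definition the last such value touched. Hence for every value $v'_a$ with $c.l_{o_a} \le v'_a < v_a$, the recursive call \textsc{JoinOneCubeRec}$(q,l,o,c,R,a+1,M\cup\{\langle o_a,v'_a\rangle\})$ returned normally (no \textbf{Break} was triggered inside it, since the break only fires after $v_a$ was reached). \textbf{Inductive step:} a recursive call that returns normally with attribute counter $a+1$ has iterated over \emph{all} values for $o_{a+1},\dots,o_{|q.A|-1}$ within the target cube bounds that satisfy the applicable join predicates — I would make this precise by a sub-induction showing that normal return of \textproc{JoinOneCubeRec} at depth $k$ implies every completing tuple extending the current partial assignment $M$ and lying in $c$ is inserted into $R$. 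Combining: every tuple $\vec x \in p$ extends some such prefix-plus-$v'_a$ partial assignment with $v'_a < v_a$, lies within the target cube, and satisfies the join predicates (since it is an actual join result), so it was inserted.

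The one subtlety I would be careful about is the boundary between consecutive processed cubes and the "fixed-value" faces $p_i=[v_i,v_i]$ for $i<a$: these degenerate slices are handled because they are precisely the still-in-progress branches, and the guarantee only concerns the strictly-less-than part $p_a=[c.l_{o_a},v_a)$, which is genuinely completed; I would note that a result tuple sitting exactly on $o_a = v_a$ is \emph{not} in $p$ (half-open interval) and is instead covered by a processed cube at a deeper index or not yet processed at all, so no over-claiming occurs. \textbf{Main obstacle.} The delicate part is making the "normal return implies full enumeration of the suffix" statement airtight, because the timeout check in Line~14 is global (it reads "total computational steps $> b$") and one must argue that once a deeper recursive call has returned without breaking, no later increment of the global step counter can retroactively invalidate the work it completed — i.e., the insertions into $R$ are monotone and permanent. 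This is intuitively obvious from the pseudocode (insertions are never undone), but it is the crux that connects the anytime/timeout behavior to the cube bookkeeping, and I would state it as an explicit invariant: at any point during execution, every tuple of the target cube whose $o$-rank is strictly below the $o$-rank of the current partial assignment has been inserted into $R$.
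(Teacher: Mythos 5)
Your proposal is correct and rests on exactly the same key observation as the paper's proof: \textproc{JoinOneCubeRec} does not advance from one value of an attribute to the next until all value combinations for the later attributes have been exhausted, so any tuple agreeing with the last-selected prefix and falling strictly below the last value for the next attribute was already enumerated and inserted. The paper phrases this as a short proof by contradiction while you unfold it as an explicit induction on recursion depth (and carefully note the half-open interval and the permanence of insertions), but the substance is the same.
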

\begin{proof}
Assume there is a vector $r$ of join attribute values, matching all join predicates, that is contained in a processed cube $p$ but not in the join result. There is an attribute $a$ such that $r$ equals the last selected attribute values $v$ up to attribute $a$ (in attribute order), then takes a value below the last selected value for the next attribute. However, Algorithm~\ref{alg:joinOneCube} does not advance from one value to the next for an attribute, before considering all value combinations for the remaining attributes. Hence, $r$ must have been added to the result, leading to a contradiction.
\end{proof}

\begin{lemma}
The task manager only removes processed cubes.\label{lm:removalcorrect}
\end{lemma}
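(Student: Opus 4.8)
The plan is to track the single call to \textproc{TM.Remove}$(c,P)$ that can modify the global set $U$ and argue that any cube it ultimately removes from $U$ is one whose contents have already been emitted into $R$. The subtlety is that \textproc{TM.Remove} first deletes the entire target cube $c$ from $U$ (Line~24) and then \emph{adds back} the complements of the processed cubes (Lines~26--39); so the ``removed'' region is really $c \setminus \bigcup_{p\in P} u_p$, where each $u_p$ is the unprocessed cube constructed from the corresponding $p\in P$. I would therefore reformulate the claim as: the region $c \setminus \bigcup_{p\in P} u_p$ is exactly $\bigcup_{p \in P} p$, i.e.\ the net effect of \textproc{TM.Remove} is to remove precisely the union of the processed cubes, which by Lemma~\ref{lm:processedcorrect} contain only results already in $R$.

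The key steps, in order. First I would recall the structure of the processed cubes $P$ and the added cubes $u_p$ from Algorithm~\ref{alg:joinOneCube} (Lines~30--36) and Algorithm~\ref{alg:cubes} (Lines~31--38): indexing attributes by position $a$ in the order $o$, the $a$-th processed cube fixes $p_i=[v_i,v_i]$ for $i<a$, sets $p_a=[c.l_{o_a},v_a)$, and takes the full target range $p_i=[c.l_{o_i},c.u_{o_i}]$ for $i>a$; the matching $u_p$ agrees with $p$ on the prefix positions (set $S$) and the suffix positions (set $F$), but on the single dimension $d=a$ takes the ``tail'' $u_d=(v_a,c.u_d]$. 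Second, I would show $p$ and $u_p$ are disjoint and their union along dimension $d$ recovers $[c.l_{o_a},c.u_{o_a}]$ minus the point $\{v_a\}$ handled by the \emph{next} processed cube's prefix constraint; chaining this telescoping argument over $a=0,\dots,|q.A|-1$ shows $\bigcup_a (p \cup u_p)$, together with the final exact tuple $v$ (which is inserted into $R$ at Line~5 of Algorithm~\ref{alg:joinOneCube} when the attribute counter reaches $|q.A|$), is exactly $c$. Hence $c \setminus \bigcup_p u_p = \bigcup_p p \;\cup\; \{v\}$. Third, I would invoke Lemma~\ref{lm:processedcorrect} to conclude every result tuple inside $\bigcup_p p$ is already in $R$, and note the boundary tuple $v$ is likewise in $R$, so nothing is dropped prematurely.

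The main obstacle is the bookkeeping in the telescoping step: one has to be careful about half-open versus closed intervals (the processed cube uses $[c.l_{o_a},v_a)$ while $u_p$ uses $(v_a,c.u_d]$, leaving the single value $v_a$ to be absorbed by the prefix-fixing of the subsequent cube) and about the final attribute, where no further cube exists and the single remaining tuple $v$ must be accounted for directly via the insertion at Line~5. A clean way to handle this is an induction on the attribute index $a$ from $|q.A|-1$ down to $0$: maintain the invariant that, restricted to the prefix values $v_0,\dots,v_{a-1}$, the cubes $\{p_j, u_j : j \ge a\}$ together with the tuple $v$ cover exactly the sub-box $\prod_{i<a}\{v_i\}\times\prod_{i\ge a}[c.l_{o_i},c.u_{o_i}]$, which at $a=0$ gives the full target cube $c$. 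The remaining details — that the newly added $u_p$ cubes are pairwise non-overlapping and disjoint from the rest of $U$ — are already argued in the surrounding text and can simply be cited.
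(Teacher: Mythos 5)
Your proposal is correct and rests on the same core observation as the paper's own proof: relative to the frontier $v$ of last-selected values, every point of the target cube is classified by the first attribute position $a$ at which it deviates from $v$, landing in the processed cube for position $a$ if it deviates downward and in the complementary re-added unprocessed cube if it deviates upward. The paper phrases this as a contradiction on a single ``lost'' vector $l$, whereas you phrase it as a direct telescoping cover of $c$ by the processed cubes, the re-added cubes, and the singleton $\{v\}$; the two arguments are interchangeable. The one substantive difference is that you explicitly account for the frontier tuple $v$ itself --- which lies in no processed cube (their intervals are half-open at $v_a$) and in no re-added cube (those are open below $v_a$) --- a corner case that the paper's case split $l_a<v_a$ versus $l_a>v_a$ silently omits and that your appeal to the insertion at Line~5 of Algorithm~\ref{alg:joinOneCube} properly closes.
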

\begin{proof}
In each invocation of  \textproc{TS.Remove}, the task manager removes only the target cube. Assume a vector $l$ of attribute values, within the target cube, is ``lost'', i.e.\ it is neither contained in any processed cube nor in any of the newly added, unprocessed cubes. Denote by $v$ the last selected values for all attributes in the join invocation, immediately preceding removal, and by $o$ the corresponding attribute order. Assume $l$ matches the values in $v$ for some prefix (possibly of size zero) of order $o$. Denote by $a$ the first attribute in $o$ for which $l$ does not match $v$. Denote by $p$ the processed cube added when reaching $a$ in the loop from Line~30 in Algorithm~\ref{alg:joinOneCube}. If $l_a<v_a$ then $l$ must be contained in $p$. However, if $l_a>v_a$, $l$ must be contained in the unprocessed cube added when reaching $p$ in the loop from Line~26 in Algorithm~\ref{alg:cubes}, leading to a contradiction.
\end{proof}

\begin{theorem}
ADOPT produces a complete join result.\label{th:complete}
\end{theorem}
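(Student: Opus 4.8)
The plan is to combine the three results just established — Theorem~\ref{th:noincorrect} (no incorrect tuples), Lemma~\ref{lm:processedcorrect} (results inside processed cubes are in $R$), and Lemma~\ref{lm:removalcorrect} (the task manager only removes processed cubes) — with the termination condition of the main loop. Completeness means: every vector $r$ of join attribute values that satisfies all (unary and join) predicates is eventually inserted into $R$. Since Theorem~\ref{th:noincorrect} already handles soundness, the real content here is that no satisfying tuple is ``missed'' as the algorithm partitions and consumes the attribute-value space.

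First I would invoke the termination condition: by the \textproc{TM.Finished} check in Algorithm~\ref{alg:main}, ADOPT returns only when $U=\emptyset$. So it suffices to track what happens to the cube set $U$ over the execution. I would set up the key invariant: at every point in time, every satisfying tuple $r$ is either contained in some cube in $U$ (still unprocessed) or has already been inserted into $R$. This holds initially because \textproc{TM.Init} makes $U$ cover the entire Cartesian product of join-attribute value ranges (the largest domain is split into $n$ contiguous pieces, the rest taken in full, so the union is the whole box), and $R=\emptyset$ is fine since every $r$ lies in the box. Then I would argue the invariant is preserved by the only operation that shrinks $U$, namely \textproc{TM.Remove}: when a target cube $c$ is removed, Lemma~\ref{lm:removalcorrect} says nothing within $c$ is ``lost'' — each tuple of $c$ lands either in a processed cube $p\in P$ or in one of the freshly added unprocessed cubes $u$ that get unioned back into $U$. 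For the tuples that land in processed cubes, Lemma~\ref{lm:processedcorrect} tells us they have already been inserted into $R$. Hence after the removal, every satisfying $r$ formerly covered by $c$ is still covered by $U$ or now present in $R$ — the invariant survives. (I should also note \textproc{TM.Retrieve} and the recursion itself don't remove anything from $U$, so no other step can break the invariant.)

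Finally I would close the argument: when the loop terminates, $U=\emptyset$, so by the invariant every satisfying tuple $r$ must already be in $R$. Combined with Theorem~\ref{th:noincorrect}, $R$ is exactly the set of satisfying tuples, i.e., the complete and correct join result. For queries with aggregation/grouping, the post-processing step in Line~22 of Algorithm~\ref{alg:main} computes the output deterministically from the complete $R$, so completeness of $R$ yields completeness of the final answer; for simple aggregates computed inline, the same coverage argument shows every contributing tuple was visited exactly once.

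The main obstacle — and the step I'd be most careful about — is making the coverage/invariant bookkeeping airtight across the interaction of three things at once: (i) the anytime cutoff, which means a target cube is only \emph{partially} processed and the ``processed cube'' decomposition in Lines~23--38 of Algorithm~\ref{alg:joinOneCube} must be exactly complementary (within $c$) to the re-added unprocessed cubes in Algorithm~\ref{alg:cubes}; (ii) multi-threading, where several threads retrieve and remove distinct cubes concurrently — I'd lean on the stated non-overlap property of $U$ (guaranteed because any two cubes in $U$ differ by a fixed value versus a strictly-greater range in some coordinate) so that concurrent removals act on disjoint regions and the invariant argument goes through thread-by-thread; and (iii) the attribute-order switches, which are already neutralized by the fact that the cube machinery is order-agnostic. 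Most of this is exactly what Lemmas~\ref{lm:processedcorrect} and~\ref{lm:removalcorrect} package up, so the proof itself should be short, but verifying that those lemmas really do cover the boundary values $v_a$ (the last value considered per attribute) with no off-by-one gap is where I'd spend the attention.
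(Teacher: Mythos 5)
Your proposal is correct and follows essentially the same route as the paper's proof: it combines the termination condition ($U=\emptyset$) with Lemma~\ref{lm:processedcorrect} and Lemma~\ref{lm:removalcorrect}, merely making explicit the coverage invariant (every satisfying tuple is either in some cube of $U$ or already in $R$) that the paper's terser argument leaves implicit.
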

\begin{proof}
The join phase terminates only once no unprocessed cubes are left. Result tuples contained in processed cubes are inserted into the result set (Lemma~\ref{lm:processedcorrect}) and no unprocessed cubes are erroneously removed (Lemma~\ref{lm:removalcorrect}). Hence, processing cannot terminate before all result tuples are inserted.
\end{proof}


The next result follows immediately.

\begin{corollary}
ADOPT produces a correct join result.
\end{corollary}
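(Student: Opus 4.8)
The final statement to prove is the corollary asserting that ADOPT produces a correct join result. The plan is to combine the two results that immediately precede it, Theorem~\ref{th:noincorrect} and Theorem~\ref{th:complete}, since ``correct'' for a set-valued output naturally decomposes into soundness (no wrong tuples) and completeness (no missing tuples).

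First I would recall the definition of a correct join result: the output set $R$ equals the set of all value combinations over the join attributes that satisfy every applicable predicate (unary filters and join conditions). This is a set equality, so it suffices to show mutual inclusion. The inclusion $R \subseteq \text{(true join result)}$ is exactly the content of Theorem~\ref{th:noincorrect}: every tuple inserted into $R$ satisfies all equality join conditions (because Algorithm~\ref{alg:joinOneCube} only iterates over values present in all relevant relations) and all unary predicates (because of the pre-filter in Line~5 of Algorithm~\ref{alg:main}). The reverse inclusion, that every tuple of the true join result appears in $R$, is exactly Theorem~\ref{th:complete}, which itself rests on Lemmas~\ref{lm:processedcorrect} and~\ref{lm:removalcorrect} together with the termination condition that no unprocessed cubes remain.

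Given these two theorems, the corollary follows in one line: a set that is both a subset and a superset of the true join result must equal it, hence ADOPT's output is correct. I would phrase the proof as: ``By Theorem~\ref{th:noincorrect}, every tuple in the result set satisfies all applicable predicates, so the result contains no incorrect tuples. By Theorem~\ref{th:complete}, every tuple satisfying all predicates is contained in the result set. Hence the result set is exactly the set of tuples satisfying all predicates, i.e., a correct join result.''

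I do not anticipate any real obstacle here, since both ingredients are already established; the only mild care needed is to make sure the notion of ``correct'' used in the corollary matches the union of ``no incorrect results'' and ``complete result'' — in particular, if the query involves aggregation or grouping, one should note that post-processing in \textproc{Post}$(q,R)$ operates deterministically on the correct and complete join result $R$ and therefore preserves correctness, so the claim extends from the join phase to the full query answer. That remark about post-processing is the one place where a sentence of justification beyond citing the two theorems is warranted.
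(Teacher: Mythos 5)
Your proposal matches the paper's proof exactly: the corollary is obtained by combining Theorem~\ref{th:noincorrect} (soundness, no incorrect tuples) with Theorem~\ref{th:complete} (completeness), yielding set equality with the true join result. Your additional remark about post-processing is a reasonable clarification but goes beyond what the paper's one-line proof states.
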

\begin{proof}
ADOPT produces all tuples in the join result (Theorem~\ref{th:complete}) without generating any incorrect tuple (Theorem~\ref{th:noincorrect}).
\end{proof}


\subsection{Worst-Case Optimality}
\label{sub:wcoproof}

We analyze whether ADOPT maintains worst-case optimality guarantees. We focus on join processing overheads, neglecting without loss of generality the preparation overheads. These overheads include the sorting of the relations to support LFTJ leapfrogging following the orders of attributes picked by ADOPT. Our analysis is based on the worst-case optimality properties of LFTJ and makes the same assumptions as the corresponding proof~\cite{DBLP:conf/icdt/Veldhuizen14} (e.g., restriction to equality joins). We consider the number of threads a constant. Additionally, our analysis makes the following assumption.


\begin{assumption}\label{as:constantepisodes}
We assume that the number of episodes is bounded by a constant that does not depend on the data size.
\end{assumption}



The latter assumption can be ensured by increasing the number of steps per episode, proportional to the maximal output size. 

\begin{lemma}\label{lemma:adopt-cubes}
The number of cubes processed by ADOPT does not depend on the data size.
\end{lemma}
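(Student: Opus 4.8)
The plan is to bound the number of cubes ADOPT ever creates by a constant (independent of data size) by combining two observations: first, that each episode contributes only a constant number of new cubes, and second, that by Assumption~\ref{as:constantepisodes} the number of episodes is itself constant. I would argue as follows. In a single episode, each of the $n$ threads runs one or more invocations of \textproc{JoinOneCube}. The cubes returned by \textproc{JoinOneCube} number at most $|q.A|$ (the loop in Lines 30--36 of Algorithm~\ref{alg:joinOneCube} runs once per attribute), and correspondingly \textproc{TM.Remove} adds at most $|q.A|$ new unprocessed cubes while deleting the target cube. Since $|q.A|$ depends only on the query, not on the data, each \textproc{JoinOneCube} call changes the cube set by a query-dependent constant.

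The subtle point is how many \textproc{JoinOneCube} calls occur \emph{within} one episode, since a thread loops in Algorithm~\ref{alg:anytimeLFTJ} (Lines 11--22) until its per-episode budget $b$ is exhausted. Here I would use that each \textproc{JoinOneCube} invocation that returns before exhausting the budget must have fully processed its target cube, consuming a number of computational steps at least proportional to doing nontrivial work; more cleanly, the per-episode budget $b$ is itself a constant (an input parameter, not scaled with data except as in Assumption~\ref{as:constantepisodes}'s remark, where it scales with the \emph{output} size bound, which under the assumption keeps episode count constant), so the number of inner iterations per episode is bounded. I would phrase this as: the number of \textproc{JoinOneCube} calls per thread per episode is bounded by a constant, hence the number of cubes added per episode is at most $n \cdot (\text{calls per episode}) \cdot |q.A|$, a constant.

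Combining, the total number of cubes ever inserted into $U$ is at most (number of episodes) $\times$ (cubes added per episode), and both factors are constants by Assumption~\ref{as:constantepisodes} and the per-episode bound above. Since $U$ starts with $n$ cubes (also constant), the number of cubes processed by ADOPT over the entire execution does not depend on the data size.

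The main obstacle I anticipate is making rigorous the claim that the number of \textproc{JoinOneCube} invocations per episode is constant: one must rule out a scenario where a thread repeatedly retrieves tiny target cubes and finishes each in very few steps, racking up many invocations within budget $b$. The clean resolution is to observe that the per-episode step budget $b$ caps the total computational steps, and each \textproc{JoinOneCube} call costs at least one step (it performs at least the work of initializing an iterator and checking the timeout), so at most $b$ invocations occur per thread per episode; since $b$ and $n$ are constants and, by Assumption~\ref{as:constantepisodes}, so is the episode count, the bound follows. An alternative, if one wants to avoid leaning on $b$ being constant, is to charge each created cube against either a fully-processed target cube (which, by the invariant that unprocessed cubes are disjoint, can happen only boundedly often per episode given the structure of Algorithm~\ref{alg:cubes}) or against the single partial cube at the end of a thread's budget, of which there is exactly one per thread per episode.
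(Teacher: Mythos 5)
Your overall skeleton (constant initial cubes, plus constant new cubes per episode, times a constant number of episodes via Assumption~\ref{as:constantepisodes}) matches the paper's. But your primary resolution of the ``subtle point'' has a genuine gap: you bound the number of \textproc{JoinOneCube} invocations per thread per episode by $b$ and then treat $b$ as a constant. The paper itself notes, immediately after Assumption~\ref{as:constantepisodes}, that the assumption ``can be ensured by increasing the number of steps per episode, proportional to the maximal output size'' --- so in exactly the regime where the episode count is constant, $b$ may grow with the data, and a bound of $n\cdot b\cdot|q.A|$ cubes per episode is no longer data-independent. Your fallback charging argument is also not quite right as stated: you claim that fully processed target cubes ``can happen only boundedly often per episode,'' which does not follow from the disjointness invariant (a thread may retrieve and finish arbitrarily many small cubes accumulated in $U$ from earlier episodes), and is not what is needed.

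The fix the paper uses --- and which your alternative almost reaches --- is that a fully processed target cube contributes \emph{zero} new cubes to $U$: when the processed cubes cover the target, every complement cube constructed in \textproc{TM.Remove} is empty and is discarded by the emptiness check in Algorithm~\ref{alg:cubes}. Hence the only \textproc{JoinOneCube} invocation that inserts anything into $U$ is the at-most-one per thread per episode that is cut off by the timeout, and it inserts at most $|q.A|$ cubes. This gives at most $n\cdot|q.A|$ insertions per episode regardless of $b$ and regardless of how many target cubes are retrieved. Since every processed cube must first have been inserted into $U$, the total number of processed cubes is bounded by the initial $n$ cubes plus total insertions, i.e., by a constant once Assumption~\ref{as:constantepisodes} caps the number of episodes. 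Note this is an amortized count over the whole execution, not a per-episode bound on retrievals.
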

\begin{proof}
Initially, the number of unprocessed cubes is proportional to the number of threads (i.e., constant). Each invocation of Procedure~\textproc{CS.Remove} may create up to $m$ cubes where $m$ is the number of query attributes (i.e., a constant). Each episode may process multiple cubes, i.e.\ invoke that function multiple times. However, whenever the target cube was fully processed, no unprocessed cubes are added. There can be at most one target cube per episode and thread that was not fully processed. Hence, the number of unprocessed cubes added per episode is bounded by a constant. Due to Assumption~\ref{as:constantepisodes}, the number of generated (and processed) cubes is therefore bounded by a constant as well.
\end{proof}

\begin{lemma}\label{lemma:adopt-dominate}
The time complexity of ADOPT's join phase is dominated by the complexity of \textproc{JoinOneCube}.
\end{lemma}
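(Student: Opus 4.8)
The plan is to show that the total running time of ADOPT's join phase is, up to a constant factor, the same as the total time spent inside invocations of \textproc{JoinOneCube}, so that the worst-case optimality bound for LFTJ transfers to the whole join phase. First I would account for everything the join phase does besides calling \textproc{JoinOneCube}: in Algorithm~\ref{alg:anytimeLFTJ} each iteration of the inner \texttt{while} loop performs a call to \textproc{TM.Retrieve}, a call to \textproc{JoinOneCube}, a call to \textproc{TM.Remove}, a reward update, and a budget update; the outer structure runs these loops in parallel over a constant number of threads, and the top-level loop in Algorithm~\ref{alg:main} alternates \textproc{RL.Select}, \textproc{AnytimeWCOJ}, and \textproc{RL.Update}. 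The claim is that all of this auxiliary work is bounded by a constant (independent of data size).

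The key step is a counting argument built on the results already established. By Assumption~\ref{as:constantepisodes} the number of episodes is a constant, and the number of threads is a constant, so the number of top-level iterations and the number of \textproc{RL.Select}/\textproc{RL.Update} calls is constant; each such call touches the UCT tree, whose size is bounded by the (data-independent) number of attribute orders, so each costs $O(1)$ in the data size. For the inner loop, I would invoke Lemma~\ref{lemma:adopt-cubes}: the total number of cubes processed by ADOPT is independent of the data size, hence the total number of inner-loop iterations (each of which consumes at least one target cube, or exhausts the budget) is bounded by a constant. Consequently the total number of calls to \textproc{TM.Retrieve} and \textproc{TM.Remove} is constant, and each such call manipulates a set $U$ of constant size (again by Lemma~\ref{lemma:adopt-cubes}), performing only cube bound comparisons over the constant number $m$ of query attributes, so each costs $O(1)$. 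The reward and budget updates are likewise $O(1)$ arithmetic operations on a constant number of cubes. Summing, the total auxiliary work is $O(1)$, while each \textproc{JoinOneCube} call performs at least a constant amount of work, so the join-phase time is at most a constant factor times the time spent inside \textproc{JoinOneCube}; conversely it is obviously at least that much, giving the stated domination.

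The main obstacle is making precise the claim that manipulating the task manager's cube set is genuinely constant-cost per operation and does not secretly grow with the data: one must argue that $|U|$ stays bounded (which follows from Lemma~\ref{lemma:adopt-cubes}, since $U$ only ever holds unprocessed cubes, and the number of cubes ever generated is constant) and that each cube is represented by a constant number of numeric bounds (one interval per attribute, with $m$ constant), so that set difference, union, and the emptiness check in \textproc{TM.Finished} are all $O(1)$ in the data size. A secondary subtlety is the parallelism: since the number of threads is treated as a constant, summing per-thread work only inflates the bound by a constant factor, and the synchronization overhead elided in the pseudo-code is likewise a constant-factor effect. Once these points are settled, the conclusion is immediate, and Lemma~\ref{lemma:adopt-dominate} then lets the worst-case optimality of \textproc{JoinOneCube} (inherited from LFTJ) be lifted to ADOPT's entire join phase in the subsequent argument.
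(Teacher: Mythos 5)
Your proposal is correct and follows essentially the same route as the paper's own proof: bound the per-episode reinforcement-learning and task-manager overheads by quantities independent of the data size, then combine with the constant number of episodes (Assumption~\ref{as:constantepisodes}) and threads to conclude that the work inside \textproc{JoinOneCube} dominates. Your version is more explicit — in particular you invoke Lemma~\ref{lemma:adopt-cubes} to bound the number of inner-loop iterations and the size of the unprocessed-cube set $U$, which the paper leaves implicit — but the decomposition and the key counting argument are the same.
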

\begin{proof}
The per-episode complexity of all operations of the reinforcement learning algorithm are bounded by the number of join attributes. Similarly, the number of operations required to retrieve or remove cubes is bounded by the number of join  attributes. Hence, the time complexity for join processing dominates.
\end{proof}

We are now ready to prove our main result.

\begin{theorem}
ADOPT is worst-case optimal.
\end{theorem}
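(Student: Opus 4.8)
The plan is to reduce the worst-case optimality of ADOPT to that of the original LFTJ algorithm~\cite{DBLP:conf/icdt/Veldhuizen14}, using the two lemmas just established. First I would recall what worst-case optimality means here: the join phase runs in time $\widetilde{O}(\mathrm{AGM}(q))$, where $\mathrm{AGM}(q)$ is the fractional edge cover (AGM) bound on the output size of the query $q$, and $\widetilde{O}$ hides factors polynomial in the query size and polylogarithmic in the data size. By Lemma~\ref{lemma:adopt-dominate}, it suffices to bound the total time spent inside invocations of \textproc{JoinOneCube}. By Lemma~\ref{lemma:adopt-cubes}, the number of such invocations is bounded by a constant (independent of data size), so it is enough to show that each individual \textproc{JoinOneCube} invocation runs in time $\widetilde{O}(\mathrm{AGM}(q))$, since a constant number of such terms sums to $\widetilde{O}(\mathrm{AGM}(q))$.

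Next I would argue that \textproc{JoinOneCubeRec} restricted to a target cube $c$ is exactly an execution of LFTJ for the query $q$ with the additional range restriction that each attribute $o_a$ is confined to the interval $[c.l_{o_a}, c.u_{o_a}]$. This restricted query is a sub-query of $q$ whose output is a subset of the output of $q$, so $\mathrm{AGM}$ of the restricted query is at most $\mathrm{AGM}(q)$. Here I would invoke the worst-case optimality guarantee of LFTJ directly: LFTJ run on the (range-restricted) relations, following the attribute order $o$, explores a number of seek/iterate operations bounded by $\widetilde{O}(\mathrm{AGM})$ of the query it is solving. The anytime modification — checking a step counter after each recursive return and breaking out of loops once the budget $b$ is exhausted — can only cause \textproc{JoinOneCubeRec} to do \emph{less} work than a full LFTJ run, so its cost is still $\widetilde{O}(\mathrm{AGM}(q))$. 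The bookkeeping in \textproc{JoinOneCube} after the recursion (computing the last value vector $v$ and assembling the $O(m)$ processed cubes $P$) adds only $\widetilde{O}(m)$, which is absorbed. I would also note that the data-structure preparation (sorting relations to support leapfrog seeks in the chosen order) was explicitly set aside at the start of Section~\ref{sub:wcoproof} and contributes only $\widetilde{O}(N \log N)$ per attribute order, with a constant number of orders ever used.

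Finally I would collect the pieces: total join time $=$ (number of cubes, a constant by Lemma~\ref{lemma:adopt-cubes}) $\times$ $\widetilde{O}(\mathrm{AGM}(q))$ per cube $=$ $\widetilde{O}(\mathrm{AGM}(q))$, plus the dominated reinforcement-learning and cube-management overhead (Lemma~\ref{lemma:adopt-dominate}), which is why ADOPT is worst-case optimal. The main obstacle I anticipate is the subtle point in the second paragraph: one must be careful that splitting the work across many target cubes and across episodes does not multiply the cost by more than a constant. This is precisely what Assumption~\ref{as:constantepisodes} and Lemma~\ref{lemma:adopt-cubes} are there to guarantee — without a bound on the number of cubes, the sum of the per-cube $\widetilde{O}(\mathrm{AGM})$ terms could blow up, since each cube is charged the full AGM bound rather than a share of it (LFTJ's guarantee is not additive over a partition of the attribute space into arbitrarily many sub-cubes). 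So the crux of the argument is really the interplay: a constant number of cubes, each costing at most a full LFTJ run, keeps the total within a constant factor of a single LFTJ run, which is worst-case optimal.
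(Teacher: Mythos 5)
Your proposal is correct and follows essentially the same route as the paper's proof: reduce to the cost of \textproc{JoinOneCube} via Lemma~\ref{lemma:adopt-dominate}, bound the number of cube invocations by a constant via Lemma~\ref{lemma:adopt-cubes} (resting on Assumption~\ref{as:constantepisodes}), and charge each invocation at most the cost of a full LFTJ run on the entire query cube, which is worst-case optimal. Your version merely makes explicit some details the paper leaves implicit (the AGM bound, monotonicity under cube restriction, and that the anytime timeout can only reduce work).
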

\begin{proof}
The number of cubes processed by ADOPT does not depend on the input data size (Lemma~\ref{lemma:adopt-cubes}). Furthermore, time complexity for joins dominates (Lemma~\ref{lemma:adopt-dominate}). ADOPT uses the LFTJ algorithm to process cubes. This algorithm is worst-case optimal~\cite{DBLP:conf/icdt/Veldhuizen14}. The time for processing the largest cube, which occurs over the execution of a query, is upper-bounded by the time required by LFTJ for processing the entire query cube. The number of cubes is independent of the data size. Hence, total processing overheads are asymptotically equivalent to the time required by LFTJ, therefore worst-case optimal.
\end{proof}

\section{Experimental Evaluation}
\label{sec:experiments}



We confirm experimentally that ADOPT outperforms a range of competitors for both acyclic and cyclic queries from the join order benchmark~\cite{Gubichev2015}, \revision{standard decision support benchmarks (i.e., TPC-H and JCC-H ~\cite{boncz2018jcc}),} and graph data~\cite{snapnets,nguyen2015join} workloads. The robustness of ADOPT's query evaluation becomes more evident for queries with an increasingly larger number of joins and with filter conditions whose joint selectivity is hard to assess correctly at optimization time.
\nop{
 We confirm experimentally that our query engine ADOPT outperforms a variety of competitors in both runtime (Sec.~\ref{sub:baselines}) and robustness (Sec.~\ref{sub:robustness}) for graph data workloads involving cyclic queries on a number of publicly-available datasets that are typically used in the literature \cite{snapnets} for benchmarking graph queries~\cite{nguyen2015join}.
 }
The superior performance of ADOPT over its competitors is due to the interplay of its four key features: worst-case optimal join evaluation; reinforcement learning that eventually converges to near-optimal attribute orders (Sec.~\ref{sub:optimalattributeorder}); hypercube data decomposition (Sec.~\ref{sub:variants}); and domain parallelism (Sec.~\ref{sub:parallelization}). \revision{Appendices~\ref{app:comparison_system_X}-\ref{sec:attributescalability}  report further experiments on: the performance comparison of ADOPT and System-X, memory consumption, sorting and synchronization overhead, and scalability with the number of join attributes per table.}

\nop{
We introduce our experimental setups in Section~\ref{sub:expsetup} and benchmark ADOPT and multiples systems on various graph datasets in ~\ref{sub:baselines}. In Section~\ref{sub:variants}, we compare our hypercube approach and alternatives using the prefix share progress tracker and the offset progress tracker. In Section~\ref{sub:ablation}, we study the attribute order of ADOPT and break down the execution time. In Section~\ref{sub:parallelization}, we scale up ADOPT system and evaluate its speedup.
}

\subsection{Experimental Setup}
\label{sub:expsetup}

We benchmark the query engines on acyclic and cyclic queries.

\subsubsection*{Benchmark for acyclic queries} 
The join order benchmark (JOB)~\cite{Gubichev2015} consists of 113 queries over the highly-correlated IMDB real-world dataset. This benchmark shows an orders-of-magnitude performance gap between different join orders for the same query. \revision{TPC-H (JCC-H \cite{boncz2018jcc}) is a benchmark used for decision support, comprising of 22 queries that incorporate standard SQL predicates. In TPC-H, data is synthetically generated with uniform distribution, whereas in JCC-H, the data is highly skewed, which makes JCC-H a harder benchmark to optimize. In our experiments, we use TPC-H/JCC-H with scaling factor ten. We omit four queries in TPC-H (JCC-H) queries, Q2, Q13, Q15, and Q22, for lack of support for non-integer join columns, outer joins, views, and substring functions.}

\subsubsection*{Benchmark for cyclic queries}
We follow prior work on benchmarking worst-case optimal join algorithms against traditional join plans~\cite{nguyen2015join} and consider the evaluation of clique and cycle queries over the binary edge relations of four graph datasets from the SNAP network collection~\cite{snapnets}. The considered queries are as follows:

\begin{itemize}
    \item $n$-clique: Compute the cliques of  $n$ distinct vertices. Such a clique has an edge between any two of its vertices. For instance, the 3-clique is the triangle:
    
    $edge(a,b), edge(b,c), edge(a,c), a<b<c$
    
    \item $n$-cycle: Compute the cycles of  $n$ distinct nodes. Such a cycle has an edge between the $i$-th and the $(i+1)$-th vertices for $1\leq i< n$ and an edge between the first and the last vertices. For instance, the $4$-cycle query is: 
    
    $edge(a,b), edge(b,c), edge(c,d), edge(a,d), a<b<c<d$
    
    
    
    
    
\end{itemize}

The inequalities in the above queries enforce that each node in the clique/cycle is distinct. 
Instead of returning the list of all distinct cliques/cycles, all systems are instructed to return their count. ADOPT  counts the result tuples as they are computed.
The reason for returning the count is to avoid the time to list the result tuples and only report the time to compute them.

\subsubsection*{Systems}
ADOPT is implemented in JAVA (jdk 1.8). It uses 10,000 steps per episode and UCT exploration ratio 1E-6. The competitors are: the open-source engines MonetDB~\cite{boncz2008breaking} (Database Server Toolkit v11.39.7, Oct2020-SP1) and PostgreSQL 10.21~\cite{postgres} that employ traditional join plans; a commercial engine System-X (implemented in C++)\nop{LogicBlox (version 4.40.0)} that uses the worst-case optimal LFTJ  algorithm~\cite{DBLP:conf/icdt/Veldhuizen14}; the open-source engine EmptyHeaded that uses a worst-case optimal join algorithm~\cite{Aberger2016}; and SkinnerDB~\cite{trummer2019skinnerdb} (implemented in Java jdk 1.8) that uses reinforcement learning to learn an optimal join order for traditional query plans.



\subsubsection*{Setup}
We run each experiment five times and report the average execution time. We used a server with 2 Intel Xeon Gold 5218 CPUs with 2.3 GHz (32 physical cores)/384GB RAM/512GB hard disk. ADOPT, EmptyHeaded, MonetDB, SkinnerDB, and System-X were set to run in memory. By default, all engines use 64 threads\nop{ (using hyperthreading); for ADOPT, we also investigate its runtime as a function of the number of threads (Sec.~\ref{sub:parallelization})}. \revision{For all systems, we create indexes to optimize performance (index creation overheads are reported separately in Appendix~\ref{sec:indexcreation}). For systems such as MonetDB that create indexes automatically, based on properties of observed queries, we perform one warm-up run before starting our measurements.}

\subsection{Runtime Performance}
\label{sub:baselines}

\nop{We verified experimentally the following two hypotheses.}

\begin{table*}[t]
\centering 
\caption{Overall runtime (in seconds) to compute all queries for each benchmark. For the JOB benchmark, ">" indicates the time is only for some of the 113 queries. For the four graph datasets, ">" indicates the time exceeded the six-hour (21,600 seconds) timeout for some of the cyclic queries. The multiplicative factors in parentheses after the runtimes of systems are the speedups of ADOPT over these systems.}
\resizebox{1\textwidth}{!}{
\begin{tabular}{l|rrrrrrr}
\toprule[1pt] Systems & JOB & ego-Facebook & ego-Twitter & soc-Pokec & soc-Livejournal1 & \revision{TPC-H} & \revision{JCC-H} \\
\midrule[1pt]
ADOPT       & 45   & \textbf{4,414}     & \textbf{3,931}   & \textbf{9,268}     & \textbf{26,350} & 141 & \textbf{194} \\
System-X    & > 287 (6.38x)    & > 22,459 (5.09x)  & 11,384 (2.90x) & > 23,623 (2.55x)  & > 63,878 (2.42x) & -- & --  \\
EmptyHeaded & --               & 6,783  (1.54x)    & 10,381 (2.64x) & > 43,444 (4.69x)  & > 55,144 (2.09x) & -- & -- \\
PostgreSQL  & 285 (6.33x)      & > 67,774 (15.35x) & > 70,515 (17.94x)  & > 67,016 (7.23x)    & > 101,193 (3.84x) & 182 (1.53x)  & > 216,122 \\
&&&&&&& (1,114x) \\
MonetDB     & \textbf{41} (0.91x) & > 66,165 (14.99x) & > 86,596 (22.03x) & > 59,131 (7.23x)  & > 96,222 (3.84x)  & \textbf{17} (0.12x) & > 216,035 \\
&&&&&&& (1,114x) \\
SkinnerDB   & 65 (1.44x)       & > 69,366 (15.71x) & > 129,741 (33.00x) & > 95,374  (10.29x)  & > 101,392 (3.85x) & 173 (1.23x) & 320 (1.6x) \\
\bottomrule[1pt]
\end{tabular}
}
\label{tab:overall}
\end{table*}

\revision{ADOPT puts together worst-case optimal join algorithm, which is primarily motivated by cyclic queries, and adaptive processing, which is motivated by scenarios in which size and cost prediction for query planning is difficult (e.g., due to data skew or complex queries). This motivates the following hypotheses.}

\begin{hyp}
    \revision{ADOPT outperforms baselines without worst-case optimal join algorithms on cyclic queries.}
\end{hyp}

\begin{hyp}
    \revision{ADOPT outperforms non-adaptive baselines for complex queries on skewed data.}
\end{hyp}

\begin{hyp}\label{hyp:bad}
    \revision{ADOPT performs worse, compared to baselines, if queries are simple, acyclic, and are executed on uniform data.}
\end{hyp}


\pgfplotstableread[col sep=comma,]{data/facebook_twitter_clique_queries.csv}\fbtwittercliquequery
\pgfplotstableread[col sep=comma,]{data/pokec_livejournal_clique_queries.csv}\pokeclivejournalcliquequery
\pgfplotstableread[col sep=comma,]{data/cycle_queries.csv}\cyclequery

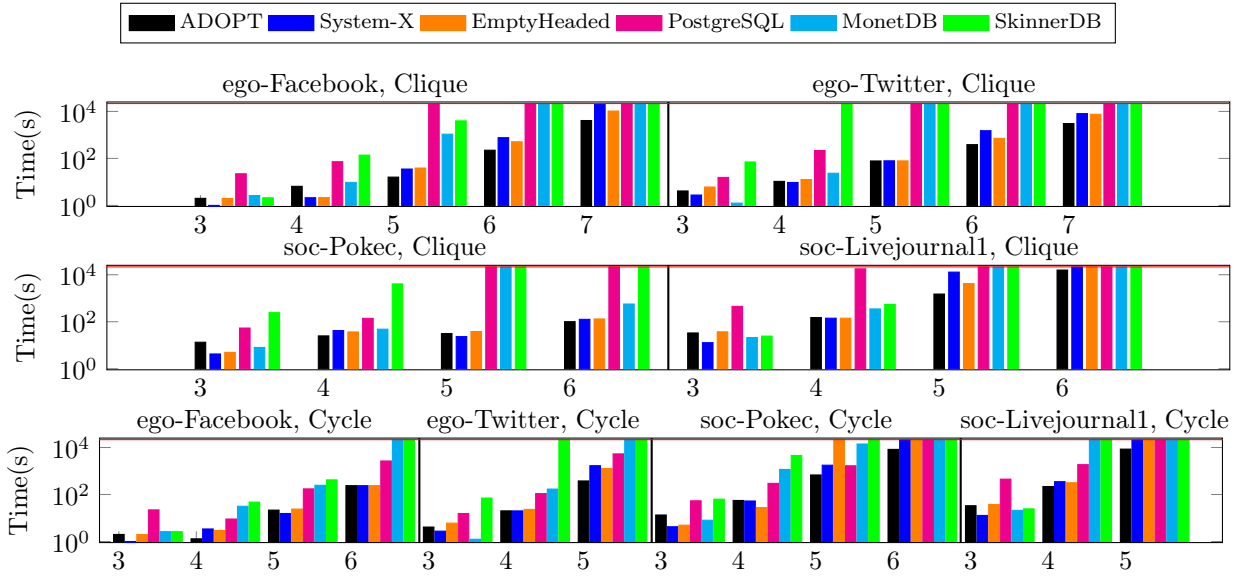
\begin{figure*}[t]
\centering
\ref{baselineLegend}
\vspace{-2.5em}
\begin{tikzpicture}
    \pgfmathsetmacro{\BarOffset}{0.7}
    \begin{axis}[
        width=\textwidth,
        height=0.18\textwidth,
        ymin=0.9,
        ymax=25200,
        ymode=log,
        log basis y={10},
        xtick=data,
        xtick distance=10,
        xtick pos=bottom,
        xticklabels from table={\fbtwittercliquequery}{Query},
        ylabel={Time(s)},
        xlabel style={
            yshift=-2ex,
            anchor=west,
        },
        legend entries={ADOPT, System-X, EmptyHeaded, PostgreSQL, MonetDB, SkinnerDB}, legend columns=6, legend to name=baselineLegend, legend style={font=\footnotesize}, legend style={cells={align=left,anchor=west}},
        area legend,
        bar width=4pt,
        clip mode=individual,
    ]

    \addplot [
        ybar,
        draw=black,
        fill=black,
    ] table [
        x expr=\coordindex,
        y=ADOPT
    ] {\fbtwittercliquequery};
        
    \addplot [
        ybar,
        draw=blue,
        fill=blue,
    ] table [
        x expr=\coordindex+0.2*\BarOffset,
        y=Logicblox
    ] {\fbtwittercliquequery};

    \addplot [
        ybar,
        draw=orange,
        fill=orange,
    ] table [
        x expr=\coordindex+0.4*\BarOffset,
        y=EmptyHeaded
    ] {\fbtwittercliquequery};

    \addplot [
        ybar,
        draw=magenta,
        fill=magenta,
    ] table [
        x expr=\coordindex+0.6*\BarOffset,
        y=PostgreSQL
    ] {\fbtwittercliquequery};

    \addplot [
        ybar,
        draw=cyan,
        fill=cyan,
    ] table [
        x expr=\coordindex+0.8*\BarOffset,
        y=MonetDB
    ] {\fbtwittercliquequery};
    
    \addplot [
        ybar,
        draw=green,
        fill=green,
    ] table [
        x expr=\coordindex+\BarOffset,
        y=SkinnerDB
    ] {\fbtwittercliquequery};

    \begin{scope}[=
        every label/.append style={
            label distance=2ex,
        },
    ]
        \node [label=below:{ego-Facebook, Clique}]
            at (axis cs:1.5, 2500000) {};
        \node [label=below:{ego-Twitter, Clique}]
            at (axis cs:7.5, 2500000) {};
            
        \draw[thick] ({axis cs:4.85,0}|-{rel axis cs:0,0}) -- ({axis cs:4.85,0}|-{rel axis cs:0,1});
        
    \end{scope}
    
    \addplot[red,sharp plot,update limits=false,]
    coordinates {(-50, 21600) (50, 21600)}
    node[midway,below,font=\bfseries\sffamily]{};
    \end{axis}
\end{tikzpicture}
\vspace{-2.5em}
\begin{tikzpicture}
    \pgfmathsetmacro{\BarOffset}{0.6}
    \begin{axis}[
        width=\textwidth,
        height=0.18\textwidth,
        ymin=0.9,
        ymax=25200,
        ymode=log,
        log basis y={10},
        xtick=data,
        xtick distance=10,
        xtick pos=bottom,
        xticklabels from table={\pokeclivejournalcliquequery}{Query},
        ylabel={Time(s)},
        xlabel style={
            yshift=-2ex,
        },
        bar width=4pt,
        clip mode=individual,
    ]

    \addplot [
        ybar,
        draw=black,
        fill=black,
    ] table [
        x expr=\coordindex,
        y=ADOPT
    ] {\pokeclivejournalcliquequery};

    \addplot [
        ybar,
        draw=blue,
        fill=blue,
    ] table [
        x expr=\coordindex+0.2*\BarOffset,
        y=Logicblox
    ] {\pokeclivejournalcliquequery};

    \addplot [
        ybar,
        draw=orange,
        fill=orange,
    ] table [
        x expr=\coordindex+0.4*\BarOffset,
        y=EmptyHeaded
    ] {\pokeclivejournalcliquequery};

    \addplot [
        ybar,
        draw=magenta,
        fill=magenta,
    ] table [
        x expr=\coordindex+0.6*\BarOffset,
        y=PostgreSQL
    ] {\pokeclivejournalcliquequery};

    \addplot [
        ybar,
        draw=cyan,
        fill=cyan,
    ] table [
        x expr=\coordindex+0.8*\BarOffset,
        y=MonetDB
    ] {\pokeclivejournalcliquequery};
    
    \addplot [
        ybar,
        draw=green,
        fill=green,
    ] table [
        x expr=\coordindex+\BarOffset,
        y=SkinnerDB
    ] {\pokeclivejournalcliquequery};
    
    \begin{scope}[=
        every label/.append style={
            label distance=2ex,
        },
    ]
        
    \node [label=below:{soc-Pokec, Clique}]
        at (axis cs:1.5,2500000) {};
    \node [label=below:{soc-Livejournal1, Clique}]
        at (axis cs:6,2500000) {};
            
    \draw[thick] ({axis cs:3.8,0}|-{rel axis cs:0,0}) -- ({axis cs:3.8,0}|-{rel axis cs:0,1});
            
    \end{scope}
    
    \addplot[red,sharp plot,update limits=false,]
    coordinates {(-50, 21600) (50, 21600)}
    node[midway,below,font=\bfseries\sffamily]{};
    
    \end{axis}
\end{tikzpicture}
\begin{tikzpicture}
    \pgfmathsetmacro{\BarOffset}{0.5}
    \begin{axis}[
        width=\textwidth,
        height=0.18\textwidth,
        ymin=0.9,
        ymax=25200,
        ymode=log,
        ylabel near ticks,
        xmin=-0.5,
        xmax=14,
        log basis y={10},
        xtick=data,
        xtick distance=10,
        xtick pos=bottom,
        xticklabels from table={\cyclequery}{Query},
        ylabel={Time(s)},
        xlabel style={
            yshift=-2ex,
        },
        legend cell align=left,
        area legend,
        bar width=4pt,
        clip mode=individual,
    ]
    
    \addplot [
        ybar,
        draw=black,
        fill=black,
    ] table [
        x expr=\coordindex-0.5*\BarOffset,
        y=ADOPT
    ] {\cyclequery};

    \addplot [
        ybar,
        draw=blue,
        fill=blue,
    ] table [
        x expr=\coordindex-0.2*\BarOffset,
        y=Logicblox
    ] {\cyclequery};

    \addplot [
        ybar,
        draw=orange,
        fill=orange,
    ] table [
        x expr=\coordindex+0.1*\BarOffset,
        y=EmptyHeaded
    ] {\cyclequery};
        
        \addplot [
            ybar,
            draw=magenta,
            fill=magenta,
        ] table [
            x expr=\coordindex+0.4*\BarOffset,
            y=PostgreSQL
        ] {\cyclequery};

        \addplot [
            ybar,
            draw=cyan,
            fill=cyan,
        ] table [
            x expr=\coordindex+0.7*\BarOffset,
            y=MonetDB
        ] {\cyclequery};
    
        \addplot [
            ybar,
            draw=green,
            fill=green,
        ] table [
            x expr=\coordindex+\BarOffset,
            y=SkinnerDB
        ] {\cyclequery};

        \begin{scope}[=
            every label/.append style={
                label distance=2ex,
            },
        ]
            \node [label=below:{ego-Facebook, Cycle}]
                at (axis cs:1.5, 2500000) {};
            \node [label=below:{ego-Twitter, Cycle}]
                at (axis cs:5.1, 2500000) {};
            \node [label=below:{soc-Pokec, Cycle}]
                at (axis cs:8.5, 2500000) {};
            \node [label=below:{soc-Livejournal1, Cycle}]
                at (axis cs:12.35, 2500000) {};
                
            \draw[thick] ({axis cs:3.63,0}|-{rel axis cs:0,0}) -- ({axis cs:3.63,0}|-{rel axis cs:0,1});
            
            \draw[thick] ({axis cs:6.63,0}|-{rel axis cs:0,0}) -- ({axis cs:6.63,0}|-{rel axis cs:0,1});
            
            \draw[thick] ({axis cs:10.62,0}|-{rel axis cs:0,0}) -- ({axis cs:10.62,0}|-{rel axis cs:0,1});
            
        \end{scope}
    
    \addplot[red,sharp plot,update limits=false,]
    coordinates {(-50, 21600) (50, 21600)}
    node[midway,below,font=\bfseries\sffamily]{};

    \end{axis}
\end{tikzpicture}
\vspace*{-1em}
\caption{The execution (wall-clock) time for clique and cycle queries on four graphs (the x axis represents query size).}
\label{fig:clique_cycle_results}
\end{figure*}

Table~\ref{tab:overall} reports the total time in seconds for different systems and benchmarks. \revision{ADOPT performs best for the four benchmarks on graphs, featuring cyclic queries. Figure~\ref{fig:clique_cycle_results} breaks those results down by query size and query type. Compared to other baselines using worst-case optimal joins, ADOPT's gains derive from larger queries with more predicates, creating the potential for inter-predicate correlations that are hard to predict. This makes it difficult to select optimal attribute orders before execution. PostgreSQL, MonetDB, and SkinnerDB suffer from over-proportionally large intermediate results when processing cyclic queries as they do not implement worst-case optimal joins.}

\revision{The join order benchmark (JOB) features acyclic queries but non-uniform data (i.e., it contains some elements that should benefit ADOPT in the comparison and some that have the opposite effect). Here, ADOPT performs comparably but slightly worse to the best baseline: MonetDB. For System-X, Table~\ref{tab:overall} only reports time for executing a subset of the queries (39 out of 113). The remaining queries have IS/NOT NULL and IN predicates that are not supported by System-X.} EmptyHeaded needs more than five days to construct the data indices (tries) it needs for the non-binary JOB tables so we were not able to report its runtime on the JOB queries.

\revision{TPC-H and JCC-H share the same query templates and database schema but differ in the database content: TPC-H uses uniform data whereas JCC-H uses highly correlated data. On TPC-H, MonetDB performs best and outperforms ADOPT significantly. This is consistent with prior work~\cite{Aberger2018}, showing that systems with worst-case optimal joins (specifically: the LFTJ that ADOPT uses internally) perform significantly worse than MonetDB on TPC-H. Given those prior results and limited support for TPC-H queries in System~X and EmptyHeaded, we compare only to MonetDB as the strongest baseline. Besides drawbacks due to the join algorithm, ADOPT incurs overheads due to adaptive processing which is unnecessary on TPC-H: predicting sizes of intermediate results and plan execution cost is relatively easy due to uniform data.}

\revision{On the other hand, ADOPT outperforms all other systems on JCC-H. Despite sharing the same query templates with TPC-H, JCC-H makes query  optimization hard due to highly correlated data. Here, both adaptive baselines (SkinnerDB and ADOPT) benefit, with ADOPT being significantly faster, whereas all other systems reach the timeout of six hours. This means, even on acyclic queries, traditionally not considered the sweet spot for LFTJ-based joins~\cite{Aberger2018}, ADOPT is preferable if data is sufficiently correlated.}

\begin{figure}
\centering
\begin{tikzpicture}
    
    \begin{groupplot}[
             ylabel near ticks,
             view={0}{90},
             point meta min=0,
             point meta max=3,
             width = 4 cm,
             height = 3 cm,
             colormap={custom}{color(0)=(blue) color(1)=(white) color(3)=(red)},
             group style = {group name=robustness plots, group size = 2 by 3,}
        ]
        
        \nextgroupplot
    
        \addplot[scatter, only marks] table [x index=0, y index=1, scatter src=\thisrowno{2}] {data/twitter_graph_three_clique_selectivity.csv};
        
        \coordinate (top) at (rel axis cs:0,1);
        
        \nextgroupplot
    
        \addplot[scatter, only marks] table [x index=0, y index=1, scatter src=\thisrowno{2}] {data/twitter_graph_four_clique_selectivity.csv};
        
        \nextgroupplot
    
        \addplot[scatter, only marks] table [x index=0, y index=1, scatter src=\thisrowno{2}] {data/twitter_graph_five_clique_selectivity.csv};
        
        \nextgroupplot
    
        \addplot[scatter, only marks] table [x index=0, y index=1, scatter src=\thisrowno{2}] {data/twitter_graph_four_cycle_selectivity.csv};
        
        \nextgroupplot
    
        \addplot[scatter, only marks] table [x index=0, y index=1, scatter src=\thisrowno{2}] {data/twitter_graph_five_cycle_selectivity.csv};
        
        \nextgroupplot
        
        \addplot[scatter, only marks] table [x index=0, y index=1, scatter src=\thisrowno{2}] {data/twitter_graph_six_clique_selectivity.csv};

        \coordinate (bot) at (rel axis cs:1,0);

    \end{groupplot}
    
    \node[text width=6cm,align=center,anchor=north] at ([yshift=0mm]robustness plots c1r1.south) {\caption*{3 clique}};
    
    \node[text width=6cm,align=center,anchor=north] at ([yshift=0mm]robustness plots c2r1.south) {\caption*{4 clique}};
    
    \node[text width=6cm,align=center,anchor=north] at ([yshift=0mm]robustness plots c1r2.south) {\caption*{4 cycl}};
    
    \node[text width=6cm,align=center,anchor=north] at ([yshift=0mm]robustness plots c2r2.south) {\caption*{5 clique \label{subplot:5_clique_robustness}}};
    
    \node[text width=6cm,align=center,anchor=north] at ([yshift=0mm]robustness plots c1r3.south) {\caption*{5 cycle \label{subplot:5_cycle_robustness}}};
    
    \node[text width=6cm,align=center,anchor=north] at ([yshift=0mm]robustness plots c2r3.south) {\caption*{6 clique \label{subplot:6_clique_robustness}}};

    \path (top|-current bounding box.north)--
                    coordinate(legendposabove)
                    (bot|-current bounding box.north);
                    
    \begin{axis}[%
            hide axis,
            scale only axis,
            height=0.05\linewidth,
            width=0.8\linewidth,
            yshift=1cm,
            at={(legendposabove.south)},
            anchor=south,
            point meta min=0,
            point meta max=3,
            colormap={custom}{color(0)=(blue) color(1)=(white) color(3)=(red)},
            colorbar horizontal,                  
            colorbar style={
                separate axis lines,
                samples=256,                        
                xticklabel pos=upper
            },
        ]   
        \addplot [draw=none] coordinates {(0,0)};
    \end{axis}

\end{tikzpicture}
\caption{Speedup of ADOPT over System-X when varying the selectivity of newly added unary predicates on three randomly chosen attributes (along the x-axis,  y-axis, and the circles for an (x,y)-point). More intense red (blue) means higher (lower) speedup. All queries are executed on ego-Twitter.}
\label{fig:robustness}
\end{figure}

\subsection{Robustness}
\label{sub:robustness}

ADOPT does not rely on query optimization to pick the best attribute order. This can be a significant advantage for queries with user-defined functions or selection predicates, for which there are no available selectivity estimates. Mainstream systems pick a query plan that may be arbitrarily off from a good one. In contrast, ADOPT may quickly realize that such a plan is subpar and switch to a different one. To benchmark this observation, we consider experiments to assess the {\em  robustness} of ADOPT and System-X, which are the two systems we use that rely on attribute orders, when adding to the join queries very simple (unary) yet arbitrary selection conditions that can throw off standard query optimizers.

\begin{hyp}
ADOPT outperforms System-X consistently when varying the selectivity of unary predicates. 
\end{hyp}

Figure~\ref{fig:robustness} shows the relative speedup of ADOPT over System-X as we vary the selectivity of unary predicates (selections with constants) on three randomly chosen attributes: we choose the five selectivities 0.2, 0.4, 0.6, 0.8, and 1 for the three attributes along the x-axis,  y-axis, and the circles for an (x,y)-point.
The color of each 3D point in the plot varies from blue to red: The more intense the red is, the higher is the speedup of ADOPT over System-X. System-X mostly outperforms ADOPT for 3-cliques. ADOPT is up to three times faster than System-X for all other cliques and cycles. This is due to the difficulty of optimizers to pick a good query plan in the absence of selectivity estimates, here even for unary predicates.


\pgfplotstableread[col sep=space,]{data/share_progress_clique_ratio.csv}\shareprogessclique
\pgfplotstableread[col sep=space,]{data/share_progress_cycle_ratio.csv}\shareprogesscycle

\captionsetup[figure]{font=small}

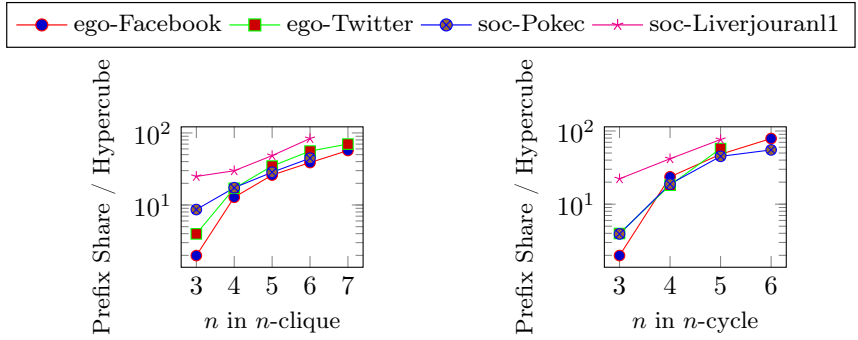
\begin{figure}
\center
\resizebox{0.7\linewidth}{!}{
\ref{graphLegend2}
}

\begin{minipage}{0.22\textwidth}
\begin{tikzpicture}
\begin{axis}
[xlabel={$n$ in $n$-clique}, ylabel={Prefix Share / Hypercube},
width=4cm, ylabel near ticks, xlabel near ticks, y label style={font=\small}, x label style={font=\small}, xtick={3,4,5,6,7}, ymode=log, log basis y={10},
legend entries={ego-Facebook, ego-Twitter, soc-Pokec, soc-Liverjouranl1}, legend columns=4, legend to name=graphLegend2, legend style={font=\footnotesize}, 
]
\addplot+[red] table[x index=0, y index=1] {\shareprogessclique};
\addplot+[green] table[x index=0, y index=2] {\shareprogessclique};
\addplot+[blue] table[x index=0, y index=3] {\shareprogessclique};
\addplot+[magenta] table[x index=0, y index=4] {\shareprogessclique};
\addplot+[black,sharp plot,update limits=false,densely dotted]
    coordinates {(-50, 1) (50, 1)}
    node[midway,below,font=\bfseries\sffamily]{};
\end{axis}
\end{tikzpicture}
\end{minipage}
\hspace{5em}
\begin{minipage}{0.22\textwidth}
\begin{tikzpicture}
\begin{axis}
[xlabel={$n$ in $n$-cycle}, width=4cm, ylabel={Prefix Share / Hypercube}, ylabel near ticks, xlabel near ticks, y label style={font=\small}, x label style={font=\small}, ymode=log, log basis y={10},]
\addplot+[red] table[x index=0, y index=1] {\shareprogesscycle};
\addplot+[green] table[x index=0, y index=2] {\shareprogesscycle};
\addplot+[blue] table[x index=0, y index=3] {\shareprogesscycle};
\addplot+[magenta] table[x index=0, y index=4] {\shareprogesscycle};
\addplot+[black,sharp plot,update limits=false,densely dotted]
    coordinates {(-50, 1) (50, 1)}
    node[midway,below,font=\bfseries\sffamily]{};
\end{axis}
\end{tikzpicture}
\end{minipage}
\captionof{figure}{Speedup of using our hypercube approach versus using prefix+offset share progress tracker in ADOPT.}
\label{fig:relative2}
\end{figure}

\vspace{-0.5em}
\subsection{Hypercube Data Partitioning}
\label{sub:variants}

We next benchmark the effect of our hypercube partitioning scheme and verify that it indeed leads to faster execution time than  SkinnerDB's alternatives called shared prefix+offset progress tracker~\cite{trummer2019skinnerdb}. 

\begin{hyp}
Hypercube partitioning leads to faster execution than shared prefix progress tracker and offset progress tracker.
\end{hyp}

SkinnnerDB shares progress between all join orders with the same prefix (iterating over all possible prefix lengths). Given a join order, it restores a state by comparing execution progress between the current join order and all other orders with the same prefix and by selecting the most advanced state. Offset progress tracker keeps the last tuples of each table that have been joined with all other tuples already. Using hypercube partitioning, ADOPT executes the episodes on disjoint parts of the input data so it can trivially compute distributive aggregates such as count. This is not the case for SkinnerDB's partitioning: To avoid recomputation of the same result in different episodes, it has to maintain a data structure (concurrent hash map). In the multi-thread environment, the prefix share progress tracker blocks the concurrent execution and causes significant synchronization overhead.

Figure~\ref{fig:relative2} shows the speedup of using the hypercube  partitioning over using the prefix+offset share progress tracker in ADOPT. The hypercube approach consistently has significant smaller overhead than prefix+offset share progress tracker. For larger (above 4) clique and cycle queries, the speedup is 10x to 100x.

\subsection{Time Breakdown by Attribute Order}
\label{sub:optimalattributeorder}

\nop{We run experiments to verify the following hypothesis.}

\begin{hyp}
ADOPT spends most time on executing near-optimal attribute orders.
\end{hyp}

\pgfplotstableread[col
sep=space,]{data/five_clique_order.csv}\fivecliqueorder

\pgfplotstableread[col
sep=space,]{data/five_cycle_order.csv}\fivecycleorder

\pgfplotstableread[col
sep=space,]{data/four_clique_order.csv}\fourcliqueorder

\pgfplotstableread[col
sep=space,]{data/four_cycle_order.csv}\fourcycleorder

\begin{figure}[h]
\begin{center}
\begin{tikzpicture}
    
    \begin{groupplot}[
            xlabel={\#Order selection}, 
            ylabel={Relative time}, 
            ylabel near ticks, xlabel near ticks, ymajorgrids, 
            ymode=log, log basis y={10}, 
            xmode=log, log basis x={10},
            view={0}{90},
            group style = {group name=robustness plots, group size = 2 by 2, vertical sep=2cm, horizontal sep=4cm},
            width = 4 cm,
            label style={font=\small},
        ]
        
        \nextgroupplot
    
        \addplot[scatter, only marks] table [x index=1, y index=0, scatter] {\fourcliqueorder};
        
        \coordinate (top) at (rel axis cs:0,1);
        
        \nextgroupplot
    
        \addplot[scatter, only marks] table [x index=1, y index=0, scatter] {\fourcycleorder};
        
        \nextgroupplot[xmax=100000000]
    
        \addplot[scatter, only marks] table [x index=1, y index=0, scatter] {\fivecliqueorder};
        
        \nextgroupplot
    
        \addplot[scatter, only marks] table [x index=1, y index=0, scatter] {\fivecycleorder};
        
         \coordinate (bot) at (rel axis cs:1,0);

    \end{groupplot}
    
    \node[text width=6cm,align=center,anchor=north] at ([yshift=-5mm]robustness plots c1r1.south) {\caption*{4 clique}};
    
    \node[text width=6cm,align=center,anchor=north] at ([yshift=-5mm]robustness plots c2r1.south) {\caption*{4 cycle}};
    
    \node[text width=6cm,align=center,anchor=north] at ([yshift=-5mm]robustness plots c1r2.south) {\caption*{5 clique}};
    
    \node[text width=6cm,align=center,anchor=north] at ([yshift=-5mm]robustness plots c2r2.south) {\caption*{5 cycle}};

\end{tikzpicture}
\end{center}
\caption{Selections of orders with different quality on ego-Twitter.}
\label{fig:orderdetail}
\end{figure}

We verified this hypothesis for $n$-clique and $n$-cycle queries with $n\in\{4,5\}$, since for these queries it was feasible to generate and execute all possible attribute orders. This was necessary to understand which orders are better than others and assess whether ADOPT uses predominantly good or poor orders.
We plot the orders that we select and their quality relative to the optimal orders (i.e., with lowest execution time) in Figure~\ref{fig:orderdetail}. The x-axis is the number of time slices  that use an order: the larger the x-value, the more we use an order. The y-axis is execution time of an order relative to the optimal one: The smaller the y-value, the closer to the optimal the order is. For $4$-clique and $4$-cycle, ADOPT spends more than $10^6$ (over 95\% frequency) times on executing an order with near-optimal performance. For $5$-cycle and $5$-clique,  ADOPT picks a near-optimal order more than $10^8$ times (over 98\% frequency). ADOPT thus quickly converges to a near-optimal order and then uses it for most of the processing, which confirms our hypothesis.


\begin{table}[t]
\centering 
\caption{Execution times (sec) for clique and cycle queries on ego-Twitter of: ADOPT, LFTJ with optimal attribute order (OPT), average runtime of LFTJ over all attribute orders (AVG).  Relative speedup of OPT over ADOPT (last column).}
\begin{tabular}{l|rrrr}
\toprule[1pt]  & ADOPT & OPT & AVG & ADOPT/OPT \\
\midrule[1pt]
3 clique & 4.1	 & 1.6	  & 3.7	   & 2.52 \\
4 clique & 10.5  & 6.8	  & 23.9   & 1.54 \\
5 clique & 77.9	 & 52.5	  & 275.6  & 1.48 \\
3 cycle	 & 4.1	 & 1.6	  & 3.5	   & 2.52 \\
4 cycle	 & 20.1  & 17.4	  & 58.9   & 1.16 \\
5 cycle	 & 377.9 & 328.8  &	3618.1 & 1.14 \\
\bottomrule[1pt]
\end{tabular}
\label{tab:ratio_with_opt}
\end{table}

Table~\ref{tab:ratio_with_opt} compares ADOPT and LFTJ with an optimal attribute order: The runtime gap decreases from 2.52x for 3-clique/cycle to 1.48x (1.14x) for 5-clique (5-cycle). This is remarkable, given that ADOPT tries out several attribute orders and switches between them, whereas LFTJ only uses one attribute order, which is optimal. Table~\ref{tab:ratio_with_opt} also shows that ADOPT takes significantly less time than the average runtime of LFTJ over all attribute orders\nop{ the larger the query gets}.



\subsection{Parallelization}
\label{sub:parallelization}

\begin{hyp}
    ADOPT achieves almost linear speedup for large cyclic queries. 
\end{hyp}

\pgfplotstableread[col
sep=comma,]{data/speedup_twitter_clique.csv}\parallelspeedupclique
\pgfplotstableread[col
sep=comma,]{data/speedup_twitter_cycle.csv}\parallelspeedupcycle

\pgfplotsset{
legend image code/.code={
\draw[mark repeat=2,mark phase=2]
plot coordinates {
(0cm,0cm)
(0.05cm,0cm)        
(0.05cm,0cm)         
};%
}
}

\begin{figure}
\center
\begin{minipage}{0.22\textwidth}
\begin{tikzpicture}
\begin{axis}
[xlabel={Number of threads}, ylabel={Speedup},
width=5cm, ylabel near ticks, xlabel near ticks, y label style={font=\small}, x label style={font=\small}, legend entries={3-clique, 4-clique, 5-clique, 6-clique}, legend columns=1, legend style={font=\tiny}, legend cell align=left, legend pos=north west,
]
\addplot+[red] table[x index=0, y index=1] {\parallelspeedupclique};
\addplot+[green] table[x index=0, y index=2] {\parallelspeedupclique};
\addplot+[blue] table[x index=0, y index=3] {\parallelspeedupclique};
\addplot+[magenta] table[x index=0, y index=4] {\parallelspeedupclique};
\end{axis}
\end{tikzpicture}
\end{minipage}
\hspace{7em}
\begin{minipage}{0.22\textwidth}
\hspace*{-0.4cm}
\begin{tikzpicture}
\begin{axis}
[xlabel={Number of threads}, width=5cm, ylabel near ticks, xlabel near ticks, y label style={font=\small}, x label style={font=\small}, legend entries={3-cycle, 4-cycle, 5-cycle}, legend columns=1, legend style={font=\tiny}, legend cell align=left, legend pos=north west,
]
\addplot+[red] table[x index=0, y index=1] {\parallelspeedupcycle};
\addplot+[green] table[x index=0, y index=2] {\parallelspeedupcycle};
\addplot+[blue] table[x index=0, y index=3] {\parallelspeedupcycle};
\end{axis}
\end{tikzpicture}
\end{minipage}
\caption{Speedup of multi-threaded ADOPT over single-threaded ADOPT for clique and cycle queries on the ego-Twitter dataset.}
\label{fig:speedup}
\end{figure}
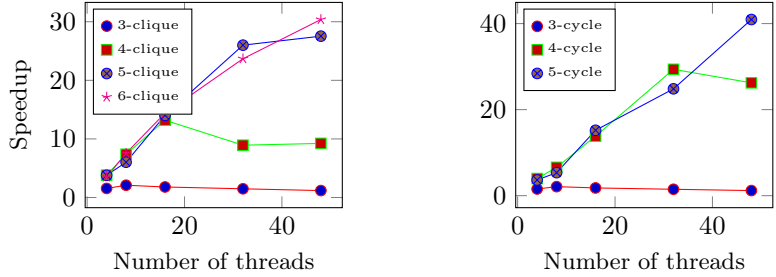

Figure~\ref{fig:speedup} plots the speedup of ADOPT as a function of the number of threads. ADOPT achieves significant speedups for large clique and cycle queries. In particular, it achieves nearly 30x speedup on 5- and 6-clique, and 40x speedup on 5-cycle (with 48 threads). The main reason is that the hypercube approach partitions disjointly the workload across threads, minimizing synchronization overheads. 



\section{Related Work}
\label{sec:related}


The choice of an attribute order, for worst-case optimal join algorithms, resembles the problem of join order selection for traditional join algorithms~\cite{Selinger1979}. Both tuning decisions have significant impact on processing performance. At the same time, it is hard to find good attribute orders before query processing starts, mainly due to challenges in estimating execution cost for specific orders (e.g., due to challenges in estimating sizes of intermediate results). The latter problem has been well documented for traditional query optimizers~\cite{Gubichev2015, Lohman2014}. Our experiments demonstrate that it appears in the context of worst-case optimal join algorithms as well.

Adaptive processing~\cite{Avnur2000, Deshpande2004, Quanzhong2007b, Tzoumas2008, Wei2022, Zhu2017} has been proposed as a \nop{possible} remedy to this problem, allowing the engine to switch to a different join order during query execution based on run time feedback. While early work has focused on stream data processing~\cite{Avnur2000, Deshpande2004, Quanzhong2007b, Tzoumas2008} (where query execution times are assumed to be longer), adaptive processing has recently also gained traction for classical query processing~\cite{Menon2020, trummer2019skinnerdb}. 
SkinnerDB~\cite{trummer2019skinnerdb} is the closest in spirit to ADOPT: \revision{both use reinforcement learning and adaptive processing. However, ADOPT uses an anytime version of a worst-case optimal join algorithm, whereas SkinnerDB's join algorithm is not optimal. The learning problems (i.e., actions and states of the corresponding MDPs) differ between the systems as ADOPT optimizes attribute orders whereas SkinnerDB orders tables. Most importantly: ADOPT introduces a novel data structure, characterizing precisely the cubes in the space of attribute value combinations that have not been processed yet, along with operators for updating it after each episode. This data structure avoids redundant work across episodes and attribute orders as well as across threads. This property is crucial to be able to maintain optimality guarantees for equi-joins when switching between attribute orders. Instead, SkinnerDB uses a tree-based data structure that reduces but does not completely avoid redundant work across join orders that are dissimilar. As the amount of redundant work is hard to bound, it is difficult to maintain worst-case optimality guarantees with such mechanisms.}

Our work uses reinforcement learning to select attribute orders. It  relates to recent works that employ learning for database tuning in general~\cite{udodemo, bait, Hilprecht2019a, Li2018, Trummer2022, VanAken2021, Wang2021} and, specifically, for query optimization~\cite{Krishnan2018, Marcus, Marcus2018a, Yu2020b}. Our work differs as it focuses on learning and specialized data structures for worst-case optimal join algorithms.

Prior work on query optimization for worst-case optimal joins investigates "model-free" information-theoretic cardinality estimation. A seminal work, which enabled reasoning about worst-case optimal join computation, established tight bounds on the worst-case size of join results~\cite{DBLP:journals/siamcomp/AtseriasGM13}, the so-called AGM bound that is defined as the cost of the optimal solution of a linear program derived from the joins and the sizes of the input tables. This is further refined in the presence of functional dependencies~\cite{DBLP:journals/jacm/GottlobLVV12} and for succinct factorized representations of query results~\cite{DBLP:journals/tods/OlteanuZ15}. The latest development 
extends this line of work with data degree constraints and histograms~\cite{DBLP:conf/icdt/000122}. Classical approaches to query optimization based on heuristics~\cite{Freitag2020} and data statistics~\cite{Aberger2016, Aref2015} have also been considered.
To the best of our knowledge, ADOPT is the first adaptive approach for optimization in the context of worst-case optimal join algorithms. Our approach is free from cost-based heuristics.
\section{Conclusion}

Worst-case optimal join algorithms and adaptive processing strategies have been two of the most exciting advances in join processing over the past decades. Worst-case optimal joins enable efficient processing of cyclic queries. Adaptive processing allows handling complex queries where a-priori optimization is hard. For the first time, ADOPT brings together these two techniques, resulting in attractive performance for both acyclic and cyclic queries and in particular excellent performance for large cyclic queries.

ADOPT is an adaptive framework readily applicable to further query processing techniques, e.g., factorized databases~\cite{DBLP:journals/pvldb/BakibayevKOZ13} and functional aggregate queries~\cite{DBLP:journals/sigmod/Khamis0R17}. These works combine worst-case optimal joins with effective techniques to push aggregates past joins to achieve the best known computational complexity for query evaluation. In future work, we plan to merge this line of work with ADOPT-style adaptivity.

\bibliographystyle{abbrv}
\bibliography{library}

\appendix
\appendix
\section{Statistics about Graph Datasets}
\label{app:statistics_graph_datasets}
Table~\ref{tab:graphtable} describes the graph datasets used in our experiments. 
\begin{table}[h]
\centering
\caption{Graphs used in the experiments.}
\begin{tabular}{lrr}
\toprule[1pt]
\textbf{Graph} & \textbf{\#Vertices} & \textbf{\#Edges} \\
\midrule[1pt]
ego-Facebook & 4,039 & 88,234 \\ 
ego-Twitter & 81,306 & 2,420,766 \\  
soc-Pokec & 1,632,803 & 30,622,564  \\
soc-LiveJournal1 & 4,847,571 & 689,937,732 \\
\bottomrule[1pt]
\end{tabular}
\label{tab:graphtable}
\end{table}

\section{Relative performance comparison of ADOPT over System-X}
\label{app:comparison_system_X}
\pgfplotstableread[col sep=comma,]{data/clique_relative.csv}\cliquerelative
\pgfplotstableread[col sep=comma,]{data/cycle_relative.csv}\cyclerelative


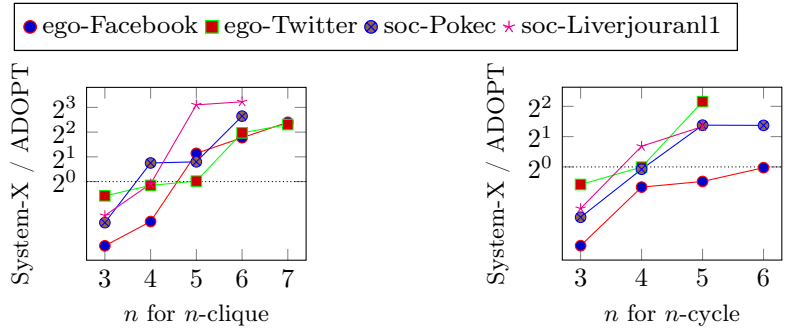
\begin{figure}
\center
\resizebox{0.6\linewidth}{!}{
\ref{graphLegend}
}

\begin{minipage}{0.22\textwidth}
\begin{tikzpicture}
\begin{axis}
[xlabel={$n$ for $n$-clique}, ylabel={System-X / ADOPT},
width=4.5cm, ylabel near ticks, xlabel near ticks, y label style={font=\small}, x label style={font=\small}, xtick={3,4,5,6,7}, ytick={1,2,4,8}, ymode=log, log basis y={2},
legend entries={ego-Facebook, ego-Twitter, soc-Pokec, soc-Liverjouranl1}, legend columns=4, legend to name=graphLegend, legend style={font=\footnotesize}, 
]
\addplot+[red] table[x index=0, y index=1] {\cliquerelative};
\addplot+[green] table[x index=0, y index=2] {\cliquerelative};
\addplot+[blue] table[x index=0, y index=3] {\cliquerelative};
\addplot+[magenta] table[x index=0, y index=4] {\cliquerelative};
\addplot+[black,sharp plot,update limits=false,densely dotted]
    coordinates {(-50, 1) (50, 1)}
    node[midway,below,font=\bfseries\sffamily]{};
\end{axis}
\end{tikzpicture}
\end{minipage}
\hspace{7em}
\begin{minipage}{0.22\textwidth}
\begin{tikzpicture}
\begin{axis}
[xlabel={$n$ for $n$-cycle}, ylabel={System-X / ADOPT}, width=4.5cm, ylabel near ticks, xlabel near ticks, y label style={font=\small}, x label style={font=\small}, ytick={1,2,4,8}, ymode=log, log basis y={2},]
\addplot+[red] table[x index=0, y index=1] {\cyclerelative};
\addplot+[green] table[x index=0, y index=2] {\cyclerelative};
\addplot+[blue] table[x index=0, y index=3] {\cyclerelative};
\addplot+[magenta] table[x index=0, y index=4] {\cyclerelative};
\addplot+[black,sharp plot,update limits=false,densely dotted]
    coordinates {(-50, 1) (50, 1)}
    node[midway,below,font=\bfseries\sffamily]{};
\end{axis}
\end{tikzpicture}
\end{minipage}
\captionof{figure}{Relative speedup of ADOPT over System-X.}
\label{fig:relative}
\end{figure}

Figure~\ref{fig:relative} examines the relative performance of ADOPT over System-X\nop{, the two systems that implement LFTJ}. The speedup of ADOPT over System-X increases with the query parameter $n$; System-X times out for large $n$. This speedup reaches: 4x for both 5- and 6-clique on ego-Twitter and soc-Pokec; 8x for 5- and 6-clique on soc-Livejournal1; 2x for 5-cycle on ego-Twitter and 4x on both soc-Pokec and soc-Liverjournal1. A reason for this speedup increase is the difficulty of System-X's optimizer to pick a good attribute order for increasingly larger queries. The average performance of the attribute orders of ADOPT is better than the one attribute order of System-X.

\section{Disk and Memory Consumption}
\label{sec:diskmemory}

\begin{table*}[t]
\caption{Disk space of different systems on each benchmark.}
\centering
\begin{tabular}{l|rrrrrrr}
\toprule[1pt] Systems & JOB & ego-Facebook & ego-Twitter & soc-Pokec & soc-Livejournal1 & TPC-H & JCC-H \\
\midrule[1pt]
ADOPT       & 3.2G & 836K & 48M  & 405M & 1.1G & 8.5G & 8.5G \\
System-X    & 2.8G & 1.6M & 46M  & 387M & 759M & -    & -    \\
EmptyHeaded & -    & 6.0M & 44M  & 496M & 1.3G & -    & -    \\
PostgreSQL  & 5.4G & 11M  & 101M & 1.1G & 2.3G & 20GB & 20GB \\
MonetDB     & 2.9G & 3.5M & 58M  & 247M & 554M & 8.5G & 8.5G \\
SkinnerDB  & 3.1G & 834K & 47M  & 403M & 882M & 8.4G & 8.4G \\
\bottomrule[1pt]
\end{tabular}
\label{tab:disk}
\end{table*}

\begin{table*}[t]
\caption{Memory space of different systems on each benchmark.}
\centering
\begin{tabular}{l|rrrrrrr}
\toprule[1pt] Systems & JOB & ego-Facebook & ego-Twitter & soc-Pokec & soc-Livejournal1 & TPC-H & JCC-H \\
\midrule[1pt]
ADOPT       & 22G & 10G	 & 17G	& 22G  & 45G  & 52G  & 57G  \\
System-X    & 38G & 16G	 & 16G	& 16G  & 26G  & -    & -    \\
EmptyHeaded & -   & 68G	 & 74G	& 85G  & 89G  & -    & -    \\
PostgreSQL  & 28G & 15G	 & 17G	& 25G  & 110G & 56G  & 110G \\
MonetDB     & 42G & 122G & 243G	& 345G & 345G & 18G  & 280G \\
SkinnerDB   & 38G & 26G  & 69G  & 89G  & 125G  & 86G  & 86G \\ 
\bottomrule[1pt]
\end{tabular}
\label{tab:memory}
\end{table*}

\begin{table}[b]
\caption{Breakdown of ADOPT's main memory consumption for each benchmark: total memory consumption, memory for storing sort orders, unprocessed cubes, and the UCT search tree.}
\centering
\begin{tabular}{l|rrrr}
\toprule[1pt] Benchmark & Total & Sort & Cubes & UCT \\
\midrule[1pt]
JOB & 22G & 2G & 37K & 21M \\
ego-Facebook & 10G & 689K & 445K & 4M \\
ego-Twitter & 17G & 19M & 878K & 4M \\
soc-Pokec & 22G & 233M & 3M & 4M \\
soc-Livejournal1 & 45G & 5G & 11M & 4M \\
TPC-H & 52G & 2G & 104K & 109K \\
JCC-H & 57G & 2G & 105K & 110K \\
\bottomrule[1pt]
\end{tabular}
\label{tab:memorybreakdown}
\end{table}

We report the disk space and memory usage of different systems in Table~\ref{tab:disk} and ~\ref{tab:memory} respectively. We use the command \texttt{du -sh} to measure disk space of the respective data folder. For measuring maximal memory consumption for each benchmark, we use the \texttt{ps -p pid -o rss=} command. ADOPT is implemented in Java. Hence, we increased the default settings for the Xmx and Xms parameters of the Java virtual machine as follows: \texttt{-Xmx15G -Xms15G} on ego-Facebook, \texttt{-Xmx20G -Xms20G} on ego-Twitter, \texttt{-Xmx25G -Xms25G} on soc-Pokec, \texttt{-Xmx50G -Xms50G} on soc-Livejournal1, and \texttt{-Xmx80G -Xms80G} on JOB, TPC-H, and JCC-H. For all other systems, to maximize their performance in terms of run time (results reported in Table~\ref{tab:overall}), we increased buffer space to 350~GB, the amount of main memory available on our test machine (e.g., for Postgres, we increased the setting for the shared\_buffer\_pool parameter). Note that all systems typically exploit only a small part of the total buffer space available to them. As discussed in Section~\ref{sec:experiments} in more detail, some systems were only evaluated on a subset of benchmarks (missing values are marked by ``-'' in the tables).


Table~\ref{tab:memory} shows that ADOPT consumes an amount of main memory that is approximately comparable to System~X, i.e., ADOPT's main memory consumption is within a factor of 0.6 to 1.7 of the corresponding value for System~X for all evaluated benchmarks. This seems reasonable as the execution engine of System~X is the most similar to the one of ADOPT (due to the use of LFTJ variants). EmptyHeaded, another worst-case optimal system, consumes more main memory than ADOPT. MonetDB consumes significantly more main memory than ADOPT on the graph benchmarks. Here, using non worst-case optimal joins, MonetDB produces large intermediate results for cyclic queries that are stored in main memory. The use of worst-case optimal join algorithms avoids these overheads. Also, MonetDB consumes more main memory on benchmarks that use skewed data (in particular JCC-H). Here, large intermediate results can be avoided by using the right join order. However, due to data skew, reliably identifying near-optimal query plans without adaptive processing is hard. On the other hand, MonetDB consumes only moderate amounts of main memory on TPC-H. Here, worst-case optimal join algorithms are unnecessary to avoid large intermediate results and query planning is easier (due to uniform data distributions). SkinnerDB incurs high memory overhead on some of the graph benchmarks due to cyclic queries. Postgres incurs high memory overheads for some of the graph benchmarks and for the JCC-H benchmark, due to binary joins and non-adaptive optimization. 

ADOPT stores several data structures in main memory that are specific to its adaptive approach: different sort orders for each table to support the LFTJ (each sort order is stored as one integer array with row indexes, of the same length as the table), the set of unprocessed cubes, maintained by the task manager (and represented as variable $U$ in Algorithm~\ref{alg:cubes}), and data structures used by the reinforcement learning algorithm, in particular the UCT search tree with associated reward statistics. Table~\ref{tab:memorybreakdown} shows the amount of memory consumed by each of these data structures for each benchmark (in addition to the total main memory consumption). For most benchmarks, sort orders consume most main memory, among all auxiliary data structures, reaching up to 11\% of total memory consumption for one benchmark. Main memory consumption for storing cubes and UCT statistics is lower by several orders of magnitude, making their contribution to total main memory consumption negligible.

\section{Index Creation Time}
\label{sec:indexcreation}

\begin{table*}[t]
\caption{Index creation time (in seconds) of different systems for each benchmark it was evaluated on (- if the corresponding system was not evaluated on the benchmark).}
\centering
\resizebox{\linewidth}{!}{
\begin{tabular}{l|rrrrrrr}
\toprule[1pt] Systems & JOB & ego-Facebook & ego-Twitter & soc-Pokec & soc-Livejournal1 & TPC-H & JCC-H \\
\midrule[1pt]
ADOPT             & 63 & 0.2  & 1.6 & 28  & 64  & 155 & 153  \\
EmptyHeaded & > 432000  & 17  & 24  & 38  & 115 & -   & -   \\
PostgreSQL  & 78 & 0.36 & 14  & 144 & 328 & 169 & 172 \\
SkinnerDB   & 23 & 0.4  & 1.1 & 19  & 25  & 82  & 84 \\
\bottomrule[1pt]
\end{tabular}
}
\label{tab:index}
\end{table*}

\begin{table}[b]
\caption{Breakdown of index generation overheads for ADOPT: total indexing time, time for generating sort orders, and time for generating hash indexes.}
\centering
\begin{tabular}{l|rrr}
\toprule[1pt] Benchmark & Total & Sorting & Hashing \\
\midrule[1pt]
JOB & 63 & 42 & 21 \\
ego-Facebook & 0.2 & 0.2 & 0 \\
ego-Twitter & 1.6 & 1.6 & 0 \\
soc-Pokec & 28 & 28 & 0 \\
soc-Livejournal1 & 64 & 64 & 0 \\
TPC-H & 155 & 97 & 58 \\
JCC-H & 153 & 94 & 59 \\
\bottomrule[1pt]
\end{tabular}
\label{tab:indexbreakdown}
\end{table}


We report on index creation overheads, referring to the indexes created before the experiments discussed in Section~\ref{sub:baselines}. For Postgres, we created indexes on all primary and foreign key columns. In addition, for TPC-H and JCC-H, we additionally index the o\_orderdate column of the Orders table and build an index with composite search key on the l\_shipdate, l\_discount, and l\_quantity columns of the Lineitem table (the additional indexes improved performance). For SkinnerDB, we run the ``index all'' command, indexing all suitable columns and thereby optimizing its query evaluation times. For ADOPT, we index the same columns as SkinnerDB via hash indexes (supporting evaluation of unary equality predicates), except for join columns (which SkinnerDB indexes via hash indexes as well). Additionally, ADOPT creates data structures representing different sort orders of base tables (see Appendix~\ref{sec:lftjinadopt} for details). System~X and MonetDB create indexes automatically, based on observed queries (note that MonetDB supports the ``create index'' command but it is only treated as a suggestion, according to the online manual\footnote{\url{https://www.monetdb.org/documentation-Sep2022/user-guide/sql-summary/\#create-index}}). To give those two systems the opportunity to create suitable indexes, we ran each benchmark once before starting the actual measurements. As those systems interleave query execution and index creations, making it hard to measure index creation time separately, we do not report indexing overheads for those systems. EmptyHeaded creates all relevant indexes in a pre-processing step. 


Table~\ref{tab:index} reports corresponding results. ADOPT has typically higher index creation overheads than SkinnerDB due to the added overhead for creating sort orders. On the other hand, ADOPT's index creation overheads are below the ones of Postgres. EmptyHeaded incurs higher index creation overheads, compared to ADOPT, as it creates indexes for all permutations of table columns. On JOB, this approach incurs very high index generation overheads, making the approach impractical. Table~\ref{tab:indexbreakdown} shows a breakdown of ADOPT's index generation overheads into two components: time for generating indexes representing row orders (to support LFTJ) and time for generating hash indexes on single columns (to support evaluation of unary equality predicates). Clearly, time for generating row orders dominates, even though hashing time is non-negligible for the TPC-H and JCC-H benchmarks. 





\section{Sorting and Synchronization Overheads}
\label{sec:ablation}

\nop{
 \begin{table}[t]
\centering 
\caption{Sorting and synchronization times for 5 clique/cycle.}
\begin{tabular}{llccc}
\toprule[1pt] Graph & Query & Sorting & Synchronization \\
\midrule[1pt]
\multirow{2}{*}{ego-Facebook} & 5 clique  & 0.022 & 0.046 \\
                              & 5 cycle	& 0.003 & 0.077 \\
\hline                       
\multirow{2}{*}{ego-Twitter}  & 5 clique & 0.015 & 0.017 \\
                              & 5 cycle  & 0.001	& 0.013 \\
\hline                           
\multirow{2}{*}{soc-Pokec}   & 5 clique  & 0.021	& 0.067 \\
                             & 5 cycle  & 0.018	& 0.063 \\
\hline                       
\multirow{2}{*}{soc-LiveJournal1} & 5 clique  & 0.021	& 0.049\\
                                  & 5 cycle  & 0.001	& 0.059 \\
\bottomrule[1pt]
\end{tabular}
\label{tab:time_breakdown}
\end{table}
}

\begin{hyp}
    The times required by ADOPT for sorting and  thread synchonization are small relative to the total execution time.
\end{hyp}


\nop{Table ~\ref{tab:time_breakdown} confirms this hypothesis for two queries.}

To implement efficient seek operations in the context of its LFTJ variant, ADOPT requires data structures representing different table sort orders (see Appendix~\ref{sec:lftjinadopt} for details). For base tables, ADOPT creates those sort orders at pre-processing time, corresponding overheads are reported in Appendix~\ref{sec:indexcreation}. However, ADOPT creates temporary tables during query evaluation, representing base tables after filtering via unary predicates. For those tables, ADOPT creates all required sort orders at run time. We measured the relative overhead of run time sorting, compared to total query evaluation time. Over all queries and benchmarks, sorting overheads reach at most 2.5\% of total query evaluation time. This means that time required for run time sorting is fairly modest (which is explained, in part, by the fact that tables resulting from filter operations tend to be quite small, compared to the source tables).


For each episode and thread, we can sum up the time spent by that thread in processing the LFTJ join on different cubes (considering all cubes processed by the thread during the episode). When measuring the total duration of an episode, the episode time typically exceeds the accumulated join processing time. The difference is due to various bookkeeping and synchronization overheads, e.g., waiting for the lock on the data structure containing unprocessed cubes (locking is necessary to avoid redundant work across threads). Considering all queries and benchmark, the maximal percentage of such overheads, relative to total query execution time, was 7\%. This means that the largest part of run time is spent doing useful work.

\section{Scalability in the Number of Join Attributes per Relation}
\label{sec:attributescalability}

Many of the benchmarks presented so far have an elevated number of tables and join attributes (e.g., up to 16 tables for JOB). However, the number of join attributes per table is typically small (two join attributes per table for graph benchmarks, reaching up to four attributes for some JOB queries). The number of attributes per table influences the number of sort orders ADOPT has to maintain, possibly influencing its relative performance. Next, we study the impact of the number of attributes per table on the relative performance of ADOPT. 




We use Loomis-Whitney queries with varying degree for this purpose. A Loomis-Whitney query with degree $n$ (i.e., the number of join attributes in each table is $n-1$) is defined as,

\begin{equation*}
\begin{split}
    & edge(a_1, a_2, \cdots, a_{n-2}, a_{n-1}), edge(a_2, \cdots, a_{n-2}, a_{n-1}, a_{n}), \\
    & edge(a_1, a_3, \cdots, a_{n-1}, a_n), edge(a_1, a_2, a_4, \cdots, a_n), \cdots, \\
    & edge(a_1, a_2, \cdots, a_{n-2}, a_n) \\
\end{split}
\end{equation*}

\pgfplotstableread[col sep=comma,]{data/lw.csv}\generalizedquery

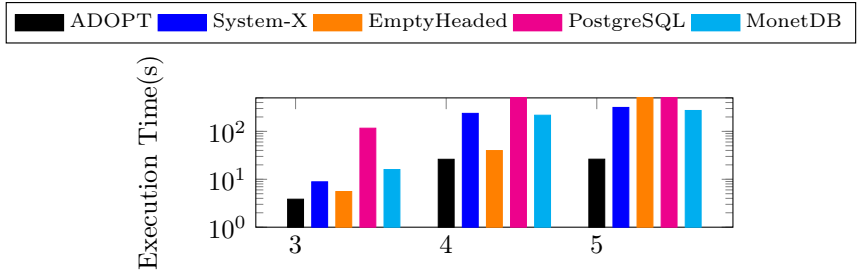
\begin{figure}
\begin{center}
\resizebox{0.7\linewidth}{!}{
\ref{baselineLegend2}
}

\begin{tikzpicture}
    \pgfmathsetmacro{\BarOffset}{0.8}
    \begin{axis}[
        width=0.48\textwidth,
        height=0.2\textwidth,
        ymin=1,
        xtick=data,
        ymode=log,
        ymax=500,
        xticklabels from table={\generalizedquery}{Query},
        ylabel={Execution Time(s)},
        ylabel style={yshift=10pt},
        xlabel style={
            anchor=west,
        },
        legend entries={ADOPT, System-X, EmptyHeaded, PostgreSQL, MonetDB}, legend columns=5, legend to name=baselineLegend2, legend style={at={(0.5,-0.1)},anchor=north,cells={align=left,anchor=west},font=\scriptsize},
        area legend,
        bar width=6pt,
        clip mode=individual,
    ]

    \addplot [
        ybar,
        draw=black,
        fill=black,
    ] table [
        x expr=\coordindex,
        y=ADOPT
    ] {\generalizedquery};
    
    \addplot [
        ybar,
        draw=blue,
        fill=blue,
    ] table [
        x expr=\coordindex+0.2*\BarOffset,
        y=Logicblox
    ] {\generalizedquery};

    \addplot [
        ybar,
        draw=orange,
        fill=orange,
    ] table [
        x expr=\coordindex+0.4*\BarOffset,
        y=EmptyHeaded
    ] {\generalizedquery};

    \addplot [
        ybar,
        draw=magenta,
        fill=magenta,
    ] table [
        x expr=\coordindex+0.6*\BarOffset,
        y=PostgreSQL
    ] {\generalizedquery};

    \addplot [
        ybar,
        draw=cyan,
        fill=cyan,
    ] table [
        x expr=\coordindex+0.8*\BarOffset,
        y=MonetDB
    ] {\generalizedquery};

            
        
    
    
    \end{axis}
\end{tikzpicture}

\caption{Execution (wall-clock) time for Loomis-Whitney queries on ego-Twitter (x-axis has the number of join attributes in each table).}
\label{fig:attributescalability}
\end{center}
\end{figure}


We use the query result of Loomis-Whitney with degree $n$ as the input table of Loomis-Whitney query with degree $n + 1$ (truncating the query result to 200M rows as, otherwise, none of the compared systems finish for the highest degree within the timeout of ten minutes). For the different degrees, in ascending order, the table sizes are (approximately) 13M rows, 105M rows, and 200M rows.

The wall-clock execution time for Loomis-Whitney queries on ego-Twitter is depicted in Figure~\ref{fig:attributescalability}. ADOPT performs better than the other baselines and the relative performance gap grows as the number of join attributes increases. Clearly, without adaptive processing, finding good query plans becomes harder as queries become more complex. In addition, we measure overheads for index creation before run time for all systems separating index creation from query evaluation (see Appendix~\ref{sec:indexcreation} for details). For Postgres, we create indexes on each column of the input table. EmptyHeaded automatically selects indexes to create. For ADOPT, we create indexes to support all possible attribute orders. Table~\ref{tab:lmindexing} reports corresponding results. For both baselines implementing worst-case optimal join algorithms (ADOPT and EmptyHeaded), index creation time is higher than for Postgres and grows faster with increasing degree (note that, as discussed previously, the size of the input data increases as well). However, among the two baselines with worst-case optimal join algorithms, ADOPT generates indexes significantly faster. For all systems, the resulting indexes can be reused across all future queries.

\begin{table}[b]
    \centering
        \caption{Index generation times for different systems (in seconds), preparing evaluation of Loomis-Whitney queries with varying degree.}
    \begin{tabular}{l|rrr}
    \toprule[1pt]
    \textbf{System} & \textbf{Degree 3} & \textbf{Degree 4} & \textbf{Degree 5} \\
    \midrule[1pt]
         Postgres & 14 & 161 & 407\\
         ADOPT & 3 & 81 & 1003 \\
         EmptyHeaded & 99 & 949 & 3719 \\
    \bottomrule[1pt]
    \end{tabular}
    \label{tab:lmindexing}
\end{table}


\section{Illustration of LeapFrog TrieJoin}
\label{sec:illustrating_LFTJ}
In this section, we illustrate the LFTJ algorithm.
In Section~\ref{sec:leapfrog_join},
we describe leapfrog join on unary relations, which is the basic building block of LFTJ.
We explain in Section~\ref{sec:trie_representation} how LFTJ traverses non-unary relations. 
In Sections~\ref{sec:lftj_acyclic_query} and \ref{sec:lftj_cyclic_query}, we 
illustrate LFTJ for an acyclic and respectively a cyclic query.

\subsection{Leapfrog Join on Unary Relations}
\label{sec:leapfrog_join}
Assume we want to  
join several unary relations over the same attribute. 
This amounts to computing the intersection of the relations. 
The leapfrog join algorithm navigates each relation
using an iterator that sees the relation as an ordered list. 
The iterators provide the following operations:
$next()$ moves the iterator to the next position 
in the list, or to \textbf{EOF} if no such position exists;
given a value $v$, $seek(v)$ moves the iterator to the position with the least 
value $w$ such that $w \geq  v$, 
or to \textbf{EOF} if no such value exists.
At the beginning, each iterator is at the first position of its list.
As long as all values  
at the current positions of the iterators do not match,
the leapfrog join algorithm proceeds as follows. Given that the largest 
value at the current positions of the iterators is $v$,
the algorithm   calls $seek(v)$ for one of the iterators with the smallest current value. In case the values 
at the current iterator positions match,
the common value is added to the output. Then, the algorithm 
calls $next()$ for one of the list and repeats the above strategy 
to find the next common value. The algorithm stops once 
one of the iterators reaches \textbf{EOF}.  

\begin{figure}[t]
\centering
\begin{center}
\begin{tabular}{ccc@{\hspace*{2em}}c}
$r(a)$ & $s(a)$ & $t(a)$ & $\text{join of } r, s,\text{ and } t$\smallskip \\
\begin{tabular}{c}\toprule
$0$  \\
$1$  \\
$3$  \\
$4$  \\
$5$  \\
$6$ \\
$7$ \\
$8$ \\
$9$ \\
$11$ \\
\bottomrule
\end{tabular}
&
\begin{tabular}{c}\toprule
$0$  \\
$2$  \\
$6$  \\
$7$  \\
$8$  \\
$9$  \\
\bottomrule
\\
\\
\\
\\
\end{tabular}
&
\begin{tabular}{c}\toprule
$2$  \\
$4$  \\
$5$  \\
$8$  \\
$10$  \\\bottomrule
\\
\\
\\
\\
\\
\end{tabular}&
\begin{tabular}{c}\toprule
8  \\\bottomrule
\\
\\
\\
\\
\\
\\
\\
\\
\\
\end{tabular}
\end{tabular}
\end{center}
\hspace*{-0.5cm}
\caption{
Unary relations and their join.
}
\label{fig:join_unary_relations}
\end{figure}

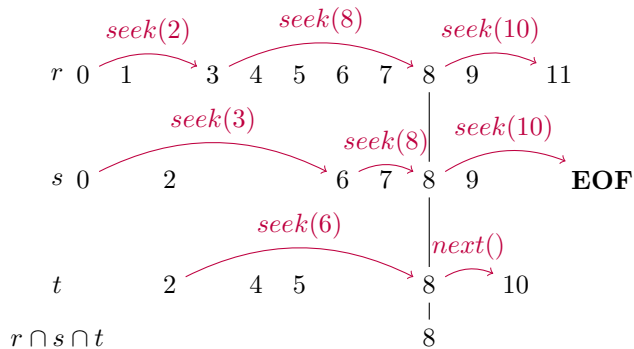
\begin{figure}[b]
\begin{center}
\begin{tikzpicture}[xscale=1.15]
\node (A) at (-0.3,0) {$r$};
\node (A0) at (0,0) {$0$};
\node (A1) at (0.5,0) {$1$};
\node (A3) at (1.5,0) {$3$};
\node (A4) at (2,0) {$4$};
\node (A5) at (2.5,0) {$5$};
\node (A6) at (3,0) {$6$};
\node (A7) at (3.5,0) {$7$};
\node (A8) at (4,0) {$8$};
\node (A9) at (4.5,0) {$9$};
\node (A11) at (5.5,0) {$11$};

\draw[->,purple] (A0) to[bend left] node[above] {$seek(2)$} (A3);
\draw[->,purple] (A3) to[bend left] node[above] {$seek(8)$} (A8);
\draw[->,purple] (A8) to[bend left] node[above] {$seek(10)$} (A11);

\node (B) at (-0.3,-1.4) {$s$};
\node (B0) at (0,-1.4) {$0$};
\node (B2) at (1,-1.4) {$2$};
\node (B6) at (3,-1.4) {$6$};
\node (B7) at (3.5,-1.4) {$7$};
\node (B8) at (4,-1.4) {$8$};
\node (B9) at (4.5,-1.4) {$9$};
\node (Beof) at (6,-1.4) {\textbf{EOF}};

\draw[->,purple] (B0) to[bend left] node[above] {$seek(3)$} (B6);
\draw[->,purple] (B6) to[bend left] node[above] {$seek(8)$} (B8);
\draw[->,purple] (B8) to[bend left] node[above] {$seek(10)$} (Beof);

\node (C) at (-0.3,-2.8) {$t$};
\node (C2) at (1,-2.8) {$2$};
\node (C4) at (2,-2.8) {$4$};
\node (C5) at (2.5,-2.8) {$5$};
\node (C8) at (4,-2.8) {$8$};

\node (C10) at (5,-2.8) {$10$};

\draw[->,purple] (C2) to[bend left] node[above] {$seek(6)$} (C8);
\draw[->,purple] (C8) to[bend left] node[above] {$next()$} (C10);

\draw[-] (A8) to  (B8);
\draw[-] (B8) to  (C8);

\node (intersection) at (-0.3,-3.5) {$r \cap s \cap t$};
\node (result) at (4,-3.5) {$8$};
\draw[-] (C8) to  (result);

\end{tikzpicture}
\end{center}
\hspace*{-0.5cm}
\caption{Joining the unary relations in Figure~\ref{fig:join_unary_relations} using leapfrog join. The only value in the result is $8$.}
\label{fig:leapfrog_join}
\end{figure}

\begin{example}
\rm
We illustrate leapfrog join 
for the query 
$q(a)$ $=$ $r(a), s(a), t(a)$
that computes the intersection of the 
three unary relations
depicted in Figure~\ref{fig:join_unary_relations}.
Figure~\ref{fig:leapfrog_join} visualizes 
how the iterators traverse the three relations. 
The iterators start at the initial positions 
of the ordered lists representing
the relations. 
The values at the initial positions do not match 
and the largest such value is the value $2$ in $t$.
Hence, the algorithm calls $seek(2)$ for $r$, which moves the 
iterator of $r$ to the position with value $3$. 
Now, the largest value of the current iterator positions is
$3$, so the algorithm calls 
$seek(3)$ for $s$. This operation moves the iterator of $s$ to the position with value $6$. Then, it calls $seek(6)$ for $t$, which moves $t$'s iterator  to the position with value $8$.
Afterwards, it calls $seek(8)$ for $r$ and then for $s$, upon which the iterators of both relations move to their respective positions holding value $8$.
Now, all iterators point to the value $8$, so we have a match. 
The algorithm adds $8$ to the output and moves
the iterator of $t$ to the next position, which holds 
the value $10$. 
Then, it calls $seek(10)$ for $r$, which  moves $r$'s iterator to the position with value 
$11$. Finally, it calls $seek(11)$ for $s$, upon which the iterator of $s$ moves to \textbf{EOF}, so the algorithm stops.

We conclude that the only output value is $8$.
\end{example}

\subsection{Navigation over Non-Unary Relations}
\label{sec:trie_representation}
LFTJ navigates non-unary relations using iterators that interpret
the relations as tries that follow attribute orders. 
Each level in a trie corresponds to one attribute.
The iterators support the following operations:
$open()$ moves the iterator to the first child node of the current node;
$up()$ returns the iterator to the parent node; 
$next()$ moves the iterator to the next sibling
or \textbf{EOF} if no such sibling exists;
given a value $v$, $seek(v)$ moves the iterator to the sibling with the least 
value $w$ such that $w \geq  v$, 
or to \textbf{EOF} if no such value exists.

\begin{figure}[t]
\begin{center}
\begin{minipage}{0.3\linewidth}
\begin{tabular}{c}
$r({\color{red} a}, {\color{blue} b}, {\color{goodgreen} c})$  \smallskip \\
\begin{tabular}{ccc}\toprule
\color{red}$0$ & \color{blue} $3$ & \color{goodgreen} $4$ \\
\color{red}$0$ & \color{blue} $3$ & \color{goodgreen} $5$ \\
\color{red}$0$ & \color{blue} $4$ & \color{goodgreen} $0$ \\
\color{red}$0$ & \color{blue} $4$ & \color{goodgreen} $1$ \\
\color{red}$0$ & \color{blue} $4$ & \color{goodgreen} $3$ \\
\color{red}$0$ & \color{blue} $5$ & \color{goodgreen} $2$ \\
\color{red}$1$ & \color{blue} $5$ & \color{goodgreen} $2$  \\\bottomrule
\end{tabular}
\end{tabular}
\end{minipage}
\begin{minipage}{0.5\linewidth}
\begin{tikzpicture}[xscale=0.55, yscale=0.5]
  \node at (-1,1) (u1) {\textbullet};

  \node at (-4,-1) (a0) {$0$} edge[-] (u1);
	
  	\node at (-7,-3) (b3) {$3$} edge[-] (a0);

		\node at (-8,-5) (c4) {$4$} edge[-] (b3);

  		\node at (-6,-5) (c5) {$5$} edge[-] (b3);

  	\node at (-4,-3) (b4) {$4$} edge[-] (a0);

  		\node at (-5,-5) (c0) {$0$} edge[-] (b4);

  		\node at (-4,-5) (c1) {$1$} edge[-] (b4);

  		\node at (-3,-5) (c3) {$3$} edge[-] (b4);

  	\node at (-2,-3) (b5) {$5$} edge[-] (a0);

		\node at (-2,-5) (c2) {$2$} edge[-] (b5);

  \node at (0,-1) (a1) {$1$} edge[-] (u1);

  	\node at (0,-3) (b5) {$5$} edge[-] (a1);

  		\node at (0,-5) (c2) {$2$} edge[-] (b5);
\end{tikzpicture}
\end{minipage}
\end{center}
\hspace*{-0.5cm}
\caption{A relation and its interpretation as a trie  following the attribute order $a-b-c$.}
\label{fig:trie_representation}
\end{figure}
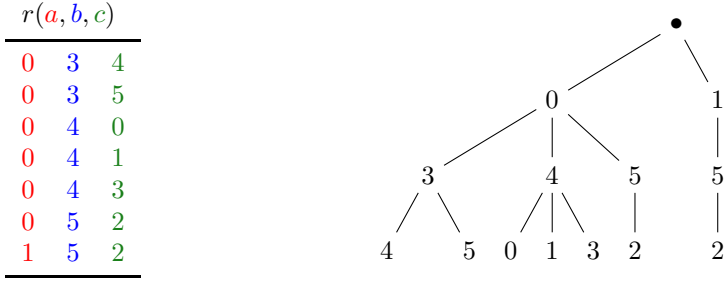

\begin{figure}[b]
\begin{center}
\begin{tikzpicture}[xscale=1, yscale=0.6]
  \node at (-1,1) (u1) {\textbullet};

  \node at (-4,-1) (a0) {$0$} edge[-] (u1);
	
  	\node at (-7,-3) (b3) {$3$} edge[-] (a0);

		\node at (-7.5,-5) (c4) {$4$} edge[-] (b3);

  		\node at (-6.5,-5) (c5) {$5$} edge[-] (b3);

  	\node at (-4,-3) (b4) {$4$} edge[-] (a0);

  		\node at (-5,-5) (c0) {$0$} edge[-] (b4);

  		\node at (-4,-5) (c1) {$1$} edge[-] (b4);

  		\node at (-3,-5) (c3) {$3$} edge[-] (b4);

  	\node at (-2,-3) (b5) {$5$} edge[-] (a0);

		\node at (-2,-5) (c23) {$2$} edge[-] (b5);

  \node at (0,-1) (a1) {$1$} edge[-] (u1);

  	\node at (0,-3) (b5) {$5$} edge[-] (a1);

  		\node at (0,-5) (c23) {$2$} edge[-] (b5);

    \draw[->,purple] (u1) to[bend right] node[above] {$open()$} (a0);
    \draw[->,purple] (a0) to[bend right] node[above] {$open()$} (b3);
    \draw[->,purple] (b3) to node[above,xshift=2em] {$next()$} (b4);
    \draw[->,purple] (b4) to[bend right] node[left] {$open()$} (c0);
    \draw[->,purple] (c0) to[bend right] node[below] {$seek(2)$} (c3);
  \draw[->,purple] (c3) to[bend right] node[right] {$up()$} (b4);    
\end{tikzpicture}
\end{center}
\hspace*{-0.5cm}
\caption{Traversal over the values of relation $r$
in Figure \ref{fig:trie_representation}.}
\label{fig:trie_traversal}
\end{figure}
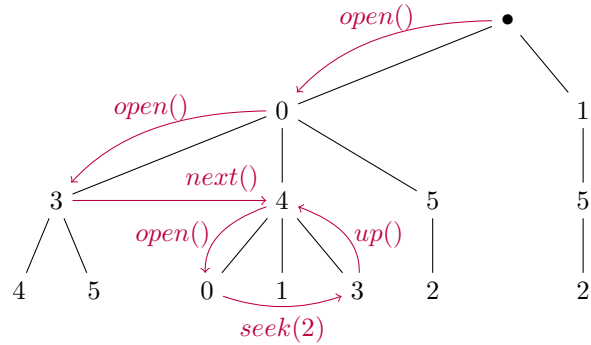

\begin{example}
\rm
Figure~\ref{fig:trie_representation} depicts a relation $r$ and its 
interpretation as a trie that follows the 
attribute order $a-b-c$. The children of each node are sorted. 
The children of the root carry the 
$a$-values of $r$, which are $0$ and $1$. The children of the $a$-value $0$  
carry the values $3$, $4$, and $5$, which are the $b$-values paired with $0$ in $r$.
The children of the $c$-value $3$ carry the values $4$ and $5$, 
which are the $c$-values paired with $0$ and $3$ in $r$. 
The rest of the trie is organized analogously. 
Figure~\ref{fig:trie_traversal} visualizes  how an iterator 
traverses the values in relation $r$
via the operation sequence $open()$, $open()$, $next()$, $open()$, $seek(2)$, $up()$.
\end{example}

\subsection{Leapfrog Triejoin for an Acyclic Query}
\label{sec:lftj_acyclic_query}
Given a set of relations, a global attribute order 
is an ordering of all attributes appearing in the relations.  
LFTJ requires that all input relations can be navigated 
following a global attribute order.
Consider a set of relations and a global attribute order 
$att_1, \ldots , att_n$.
To create all tuples in the join result, LFTJ proceeds as follows. 
It uses leapfrog join to fix the first $att_1$-value that appears in all relations 
containing attribute $att_1$.
Given that the values for $att_1, \ldots , att_i$ with $i < n$
are already fixed to values $a_1, \ldots , a_i$, it uses again leapfrog join to fix the first 
$att_{i+1}$-value 
that appears in all relations containing $att_{i+1}$ when restricted  
to $a_1, \ldots , a_i$.
Once all attributes are fixed 
to values $a_1, \ldots , a_n$, it means that 
$(a_1, \ldots , a_n)$ constitutes a tuple in the join result, so
the algorithm adds it to the output. 
Then, it triggers leapfrog join to traverse the remaining   
$att_{n}$-values that appear in all relations containing $att_{n}$ when restricted  
to $a_1, \ldots , a_{n-1}$. For each such $att_{n}$-value $a_n'$, it 
adds  
$(a_1, \ldots , a_n')$ to the output. Once all $att_{n}$-values are exhausted, it backtracks 
and searches for the next $att_{n-1}$-value that appears in the join 
in the context of $(a_1, \ldots , a_{n-2})$, and so on.
In the following, we illustrate how LFTJ computes an acyclic
join. 
 
\begin{figure}[t]
\centering
\begin{center}
\begin{tabular}{c}
\begin{tabular}{cccc@{\hspace*{2em}}c}
$r({\color{red}a},{\color{blue} b}, {\color{goodgreen} c})$ & $s({\color{red}a},{\color{goodgreen} c})$ & $t({\color{blue}b})$ & 
$u({\color{blue} b},{\color{goodgreen} c})$ &
$\text{join of } r, s, t, \text{and } u$
\smallskip \\
\begin{tabular}{rrr}\toprule
\color{red}$0$ &\color{blue} $2$ &\color{goodgreen} $1$ \\
\color{red}$0$ &\color{blue} $3$ &\color{blue} $1$ \\
\color{red}$1$ &\color{blue} $5$ &\color{goodgreen} $2$\\\bottomrule
\end{tabular}
&
\begin{tabular}{rr}\toprule
\color{red}$0$ & \color{goodgreen}$1$ \\
\color{red}$0$ & \color{goodgreen}$2$ \\
\color{red}$2$ & \color{goodgreen}$3$ \\\bottomrule
\end{tabular}
&
\begin{tabular}{r}\toprule
\color{blue}$0$  \\
\color{blue}$2$  \\
\color{blue}$3$  \\\bottomrule
\end{tabular}
&
\begin{tabular}{rr}\toprule
\color{blue}$0$ & \color{goodgreen}$0$  \\
\color{blue}$0$ & \color{goodgreen}$2$ \\
\color{blue}$2$ & \color{goodgreen}$1$   \\\bottomrule
\end{tabular}
&
\begin{tabular}{rrr}\toprule
\color{red}$0$ & \color{blue}$2$ & \color{goodgreen}$1$\\\bottomrule
\\
\\
\end{tabular}
\end{tabular}
\end{tabular}
\end{center}
\hspace*{-0.5cm}
\caption{
Acyclic join of four relations. 
}
\label{fig:acyclic_join}
\end{figure} 

\begin{figure*}[h]
\centering
 \adjustbox{width=12cm}{
\begin{tabular}{ccccc}
\begin{tikzpicture}[xscale=0.55, yscale=0.4]
  \node at (0,0) (a) {};
  \node at (0,-2) (a) {$a$};
    \node at (0,-4) (b) {$b$} edge[-] (a);
      \node at (0,-6) (c) {$c$} edge[-] (b);

\end{tikzpicture} &
\begin{tikzpicture}[xscale=0.55, yscale=0.4]
  \node at (0,0.9) (r) {r};
  \node[drop shadow, fill=gray!20, draw] at (0,0) (rr) {\textbullet};
  \node at (-1,-2) (ra0) {$0$} edge[-] (rr);
    \node at (1,-2) (ra1) {$1$} edge[-] (rr);

  \node at (-1.5,-4) (rb2) {$2$} edge[-] (ra0);
	  \node at (-0.5,-4) (rb3) {$3$} edge[-] (ra0);

     \node at (-1.5,-6) (rc1) {$1$} edge[-] (rb2);
	  \node at (-0.5,-6) (rc1) {$1$} edge[-] (rb3);
    
    \node at (1,-4) (rb5) {$5$} edge[-] (ra1);
        \node at (1,-6) (rc2) {$2$} edge[-] (rb5);
\end{tikzpicture}
&
\begin{tikzpicture}[xscale=0.55, yscale=0.4]
  \node at (0,0.9) (s) {s};    
    \node[drop shadow, fill=gray!20, draw] at (0,0) (rs) {\textbullet};
  \node at (-1,-2) (sa0) {$0$} edge[-] (rs);
    \node at (1,-2) (sa2) {$2$} edge[-] (rr);

     \node at (-1.5,-6) (sc1) {$1$} edge[-] (sa0);
	  \node at (-0.5,-6) (sc2) {$2$} edge[-] (sa0);
    
    \node at (1,-6) (sc3) {$3$} edge[-] (sa2);

\end{tikzpicture}
&
\begin{tikzpicture}[xscale=0.7, yscale=0.4]
  \node at (0,0.9) (t) {t};          
      \node[drop shadow, fill=gray!20, draw] at (0,0) (rt) {\textbullet};
  \node at (-1,-4) (tb0) {$0$} edge[-] (rt);
    \node at (0,-4) (tb2) {$2$} edge[-] (rt);
 \node at (1,-4) (tb3) {$3$} edge[-] (rt);
\node at (0,-6) {};
\end{tikzpicture}
&
\begin{tikzpicture}[xscale=0.7, yscale=0.4]
  \node at (0,0.9) (u) {u};    
\node[drop shadow, fill=gray!20, draw] at (0,0) (ru) {\textbullet};
  \node at (-1,-4) (ub0) {$0$} edge[-] (ru);
    \node at (1,-4) (ub2) {$2$} edge[-] (ru);
  \node at (-1.5,-6) (uc0) {$0$} edge[-] (ub0);
   \node at (-0.5,-6) (uc2) {$2$} edge[-] (ub0); 
    \node at (1,-6) (uc1) {$1$} edge[-] (ub2);
\end{tikzpicture} 
\\
\begin{tikzpicture}[xscale=0.55, yscale=0.4]
  \node at (0,0) (a) {};
  \node at (0,-2) (a) {$a$};
    \node at (0,-4) (b) {$b$} edge[-] (a);
      \node at (0,-6) (c) {$c$} edge[-] (b);

\end{tikzpicture} &
\begin{tikzpicture}[xscale=0.55, yscale=0.4]
  \node at (0,0) (rr) {\textbullet};
  \node[drop shadow, fill=gray!20, draw] at (-1,-2) (ra0) {$0$} edge[-] (rr);
    \node at (1,-2) (ra1) {$1$} edge[-] (rr);

  \node at (-1.5,-4) (rb2) {$2$} edge[-] (ra0);
	  \node at (-0.5,-4) (rb3) {$3$} edge[-] (ra0);

     \node at (-1.5,-6) (rc1) {$1$} edge[-] (rb2);
	  \node at (-0.5,-6) (rc1) {$1$} edge[-] (rb3);
    
    \node at (1,-4) (rb5) {$5$} edge[-] (ra1);
        \node at (1,-6) (rc2) {$2$} edge[-] (rb5);
        \draw[->,purple] (rr) to[bend right] node[left] {$open()$} (ra0);
\end{tikzpicture}
&
\begin{tikzpicture}[xscale=0.55, yscale=0.4]
    \node at (0,0) (rs) {\textbullet};
  \node[drop shadow, fill=gray!20, draw] at (-1,-2) (sa0) {$0$} edge[-] (rs);
    \node at (1,-2) (sa2) {$2$} edge[-] (rr);

     \node at (-1.5,-6) (sc1) {$1$} edge[-] (sa0);
	  \node at (-0.5,-6) (sc2) {$2$} edge[-] (sa0);
    
    \node at (1,-6) (sc3) {$3$} edge[-] (sa2);

	        \draw[->,purple] (rs) to[bend right] node[left] {$open()$} (sa0);
\end{tikzpicture}
&
\begin{tikzpicture}[xscale=0.7, yscale=0.4]
      \node[drop shadow, fill=gray!20, draw] at (0,0) (rt) {\textbullet};
  \node at (-1,-4) (tb0) {$0$} edge[-] (rt);
    \node at (0,-4) (tb2) {$2$} edge[-] (rt);
 \node at (1,-4) (tb3) {$3$} edge[-] (rt);
\node at (0,-6) {};
\end{tikzpicture}
&
\begin{tikzpicture}[xscale=0.7, yscale=0.4]
\node[drop shadow, fill=gray!20, draw] at (0,0) (ru) {\textbullet};
  \node at (-1,-4) (ub0) {$0$} edge[-] (ru);
    \node at (1,-4) (ub2) {$2$} edge[-] (ru);
  \node at (-1.5,-6) (uc0) {$0$} edge[-] (ub0);
   \node at (-0.5,-6) (uc2) {$2$} edge[-] (ub0); 
    \node at (1,-6) (uc1) {$1$} edge[-] (ub2);
\end{tikzpicture}
\\
\begin{tikzpicture}[xscale=0.55, yscale=0.4]
  \node at (0,0) (a) {};
  \node at (0,-2) (a) {$a$};
    \node at (0,-4) (b) {$b$} edge[-] (a);
      \node at (0,-6) (c) {$c$} edge[-] (b);

\end{tikzpicture} &
\begin{tikzpicture}[xscale=0.55, yscale=0.4]
  \node at (0,0) (rr) {\textbullet};
  \node at (-1,-2) (ra0) {$0$} edge[-] (rr);
    \node at (1,-2) (ra1) {$1$} edge[-] (rr);

  \node[drop shadow, fill=gray!20, draw] at (-1.5,-4) (rb2) {$2$} edge[-] (ra0);
	  \node at (-0.5,-4) (rb3) {$3$} edge[-] (ra0);

     \node at (-1.5,-6) (rc1) {$1$} edge[-] (rb2);
	  \node at (-0.5,-6) (rc1) {$1$} edge[-] (rb3);
    
    \node at (1,-4) (rb5) {$5$} edge[-] (ra1);
        \node at (1,-6) (rc2) {$2$} edge[-] (rb5);
        \draw[->,purple] (ra0) to[bend right] node[left] {$open()$} (rb2);
\end{tikzpicture}
&
\begin{tikzpicture}[xscale=0.55, yscale=0.4]
    \node at (0,0) (rs) {\textbullet};
  \node[drop shadow, fill=gray!20, draw] at (-1,-2) (sa0) {$0$} edge[-] (rs);
    \node at (1,-2) (sa2) {$2$} edge[-] (rr);

     \node at (-1.5,-6) (sc1) {$1$} edge[-] (sa0);
	  \node at (-0.5,-6) (sc2) {$2$} edge[-] (sa0);
    
    \node at (1,-6) (sc3) {$3$} edge[-] (sa2);

\end{tikzpicture}
&
\begin{tikzpicture}[xscale=0.7, yscale=0.4]
      \node at (0,0) (rt) {\textbullet};
  \node[drop shadow, fill=gray!20, draw] at (-1,-4) (tb0) {$0$} edge[-] (rt);
    \node at (0,-4) (tb2) {$2$} edge[-] (rt);
 \node at (1,-4) (tb3) {$3$} edge[-] (rt);
\node at (0,-6) {};
\draw[->,purple] (rt) to[bend right] node[left] {$open()$} (tb0);
\end{tikzpicture}
&
\begin{tikzpicture}[xscale=0.7, yscale=0.4]
\node at (0,0) (ru) {\textbullet};
  \node[drop shadow, fill=gray!20, draw] at (-1,-4) (ub0) {$0$} edge[-] (ru);
    \node at (1,-4) (ub2) {$2$} edge[-] (ru);
  \node at (-1.5,-6) (uc0) {$0$} edge[-] (ub0);
   \node at (-0.5,-6) (uc2) {$2$} edge[-] (ub0); 
    \node at (1,-6) (uc1) {$1$} edge[-] (ub2);

\draw[->,purple] (ru) to[bend right] node[left] {$open()$} (ub0);
\end{tikzpicture}
\\
\begin{tikzpicture}[xscale=0.55, yscale=0.4]
  \node at (0,0) (a) {};
  \node at (0,-2) (a) {$a$};
    \node at (0,-4) (b) {$b$} edge[-] (a);
      \node at (0,-6) (c) {$c$} edge[-] (b);

\end{tikzpicture} &
\begin{tikzpicture}[xscale=0.55, yscale=0.4]
  \node at (0,0) (rr) {\textbullet};
  \node at (-1,-2) (ra0) {$0$} edge[-] (rr);
    \node at (1,-2) (ra1) {$1$} edge[-] (rr);

  \node[drop shadow, fill=gray!20, draw] at (-1.5,-4) (rb2) {$2$} edge[-] (ra0);
	  \node at (-0.5,-4) (rb3) {$3$} edge[-] (ra0);

     \node at (-1.5,-6) (rc1) {$1$} edge[-] (rb2);
	  \node at (-0.5,-6) (rc1) {$1$} edge[-] (rb3);
    
    \node at (1,-4) (rb5) {$5$} edge[-] (ra1);
        \node at (1,-6) (rc2) {$2$} edge[-] (rb5);
\end{tikzpicture}
&
\begin{tikzpicture}[xscale=0.55, yscale=0.4]
    \node at (0,0) (rs) {\textbullet};
  \node[drop shadow, fill=gray!20, draw] at (-1,-2) (sa0) {$0$} edge[-] (rs);
    \node at (1,-2) (sa2) {$2$} edge[-] (rr);

     \node at (-1.5,-6) (sc1) {$1$} edge[-] (sa0);
	  \node at (-0.5,-6) (sc2) {$2$} edge[-] (sa0);
    
    \node at (1,-6) (sc3) {$3$} edge[-] (sa2);

\end{tikzpicture}
&
\begin{tikzpicture}[xscale=0.7, yscale=0.4]
      \node at (0,0) (rt) {\textbullet};
  \node at (-1,-4) (tb0) {$0$} edge[-] (rt);
    \node[drop shadow, fill=gray!20, draw] at (0,-4) (tb2) {$2$} edge[-] (rt);
 \node at (1,-4) (tb3) {$3$} edge[-] (rt);
\node at (0,-6) {};
\draw[->,purple] (tb0) to[bend right=90] node[below] {$seek(2)$} (tb2);
\end{tikzpicture}
&
\begin{tikzpicture}[xscale=0.7, yscale=0.4]
\node at (0,0) (ru) {\textbullet};
  \node[drop shadow, fill=gray!20, draw] at (-1,-4) (ub0) {$0$} edge[-] (ru);
    \node at (1,-4) (ub2) {$2$} edge[-] (ru);
  \node at (-1.5,-6) (uc0) {$0$} edge[-] (ub0);
   \node at (-0.5,-6) (uc2) {$2$} edge[-] (ub0); 
    \node at (1,-6) (uc1) {$1$} edge[-] (ub2);

\end{tikzpicture}
\\
\begin{tikzpicture}[xscale=0.55, yscale=0.4]
  \node at (0,0) (a) {};
  \node at (0,-2) (a) {$a$};
    \node at (0,-4) (b) {$b$} edge[-] (a);
      \node at (0,-6) (c) {$c$} edge[-] (b);

\end{tikzpicture} &
\begin{tikzpicture}[xscale=0.55, yscale=0.4]
  \node at (0,0) (rr) {\textbullet};
  \node at (-1,-2) (ra0) {$0$} edge[-] (rr);
    \node at (1,-2) (ra1) {$1$} edge[-] (rr);

  \node[drop shadow, fill=gray!20, draw] at (-1.5,-4) (rb2) {$2$} edge[-] (ra0);
	  \node at (-0.5,-4) (rb3) {$3$} edge[-] (ra0);

     \node at (-1.5,-6) (rc1) {$1$} edge[-] (rb2);
	  \node at (-0.5,-6) (rc1) {$1$} edge[-] (rb3);
    
    \node at (1,-4) (rb5) {$5$} edge[-] (ra1);
        \node at (1,-6) (rc2) {$2$} edge[-] (rb5);
\end{tikzpicture}
&
\begin{tikzpicture}[xscale=0.55, yscale=0.4]
    \node at (0,0) (rs) {\textbullet};
  \node[drop shadow, fill=gray!20, draw] at (-1,-2) (sa0) {$0$} edge[-] (rs);
    \node at (1,-2) (sa2) {$2$} edge[-] (rr);

     \node at (-1.5,-6) (sc1) {$1$} edge[-] (sa0);
	  \node at (-0.5,-6) (sc2) {$2$} edge[-] (sa0);
    
    \node at (1,-6) (sc3) {$3$} edge[-] (sa2);

\end{tikzpicture}
&
\begin{tikzpicture}[xscale=0.7, yscale=0.4]
      \node at (0,0) (rt) {\textbullet};
  \node at (-1,-4) (tb0) {$0$} edge[-] (rt);
    \node[drop shadow, fill=gray!20, draw] at (0,-4) (tb2) {$2$} edge[-] (rt);
 \node at (1,-4) (tb3) {$3$} edge[-] (rt);
\node at (0,-6) {};
\end{tikzpicture}
&
\begin{tikzpicture}[xscale=0.7, yscale=0.4]
\node at (0,0) (ru) {\textbullet};
  \node at (-1,-4) (ub0) {$0$} edge[-] (ru);
    \node[drop shadow, fill=gray!20, draw] at (1,-4) (ub2) {$2$} edge[-] (ru);
  \node at (-1.5,-6) (uc0) {$0$} edge[-] (ub0);
   \node at (-0.5,-6) (uc2) {$2$} edge[-] (ub0); 
    \node at (1,-6) (uc1) {$1$} edge[-] (ub2);
\draw[->,purple] (ub0) to[bend right] node[below] {$seek(2)$} (ub2);
\end{tikzpicture}
\\
\begin{tikzpicture}[xscale=0.55, yscale=0.4]
  \node at (0,0) (a) {};
  \node at (0,-2) (a) {$a$};
    \node at (0,-4) (b) {$b$} edge[-] (a);
      \node at (0,-6) (c) {$c$} edge[-] (b);

\end{tikzpicture} &
\begin{tikzpicture}[xscale=0.55, yscale=0.4]
  \node at (0,0) (rr) {\textbullet};
  \node at (-1,-2) (ra0) {$0$} edge[-] (rr);
    \node at (1,-2) (ra1) {$1$} edge[-] (rr);

  \node at (-1.5,-4) (rb2) {$2$} edge[-] (ra0);
	  \node at (-0.5,-4) (rb3) {$3$} edge[-] (ra0);

     \node[drop shadow, fill=gray!20, draw] at (-1.5,-6) (rc1) {$1$} edge[-] (rb2);
	  \node at (-0.5,-6) (rc12) {$1$} edge[-] (rb3);
    
    \node at (1,-4) (rb5) {$5$} edge[-] (ra1);
        \node at (1,-6) (rc2) {$2$} edge[-] (rb5);
\draw[->,purple] (rb2) to[bend right] node[left] {$open()$}(rc1);
\end{tikzpicture}
&
\begin{tikzpicture}[xscale=0.55, yscale=0.4]
    \node at (0,0) (rs) {\textbullet};
  \node at (-1,-2) (sa0) {$0$} edge[-] (rs);
    \node at (1,-2) (sa2) {$2$} edge[-] (rr);

     \node[drop shadow, fill=gray!20, draw] at (-1.5,-6) (sc1) {$1$} edge[-] (sa0);
	  \node at (-0.5,-6) (sc2) {$2$} edge[-] (sa0);
    
    \node at (1,-6) (sc3) {$3$} edge[-] (sa2);

\draw[->,purple] (sa0) to[bend right] node[left] {$open()$}(sc1);
\end{tikzpicture}
&
\begin{tikzpicture}[xscale=0.7, yscale=0.4]
      \node at (0,0) (rt) {\textbullet};
  \node at (-1,-4) (tb0) {$0$} edge[-] (rt);
    \node[drop shadow, fill=gray!20, draw] at (0,-4) (tb2) {$2$} edge[-] (rt);
 \node at (1,-4) (tb3) {$3$} edge[-] (rt);
\node at (0,-6) {};
\end{tikzpicture}
&
\begin{tikzpicture}[xscale=0.7, yscale=0.4]
\node at (0,0) (ru) {\textbullet};
  \node at (-1,-4) (ub0) {$0$} edge[-] (ru);
    \node at (1,-4) (ub2) {$2$} edge[-] (ru);
  \node at (-1.5,-6) (uc0) {$0$} edge[-] (ub0);
   \node at (-0.5,-6) (uc2) {$2$} edge[-] (ub0); 
    \node[drop shadow, fill=gray!20, draw] at (1,-6) (uc1) {$1$} edge[-] (ub2);

\draw[->,purple] (ub2) to[bend left] node[right] {$open()$}(uc1);
\end{tikzpicture}
\end{tabular}
}
\caption{
LFTJ execution for the input relations in Figure \ref{fig:acyclic_join} following the attribute order $a - b - c$ (depicted to the left).
}
\label{fig:trie_acyclic_join}
\end{figure*}
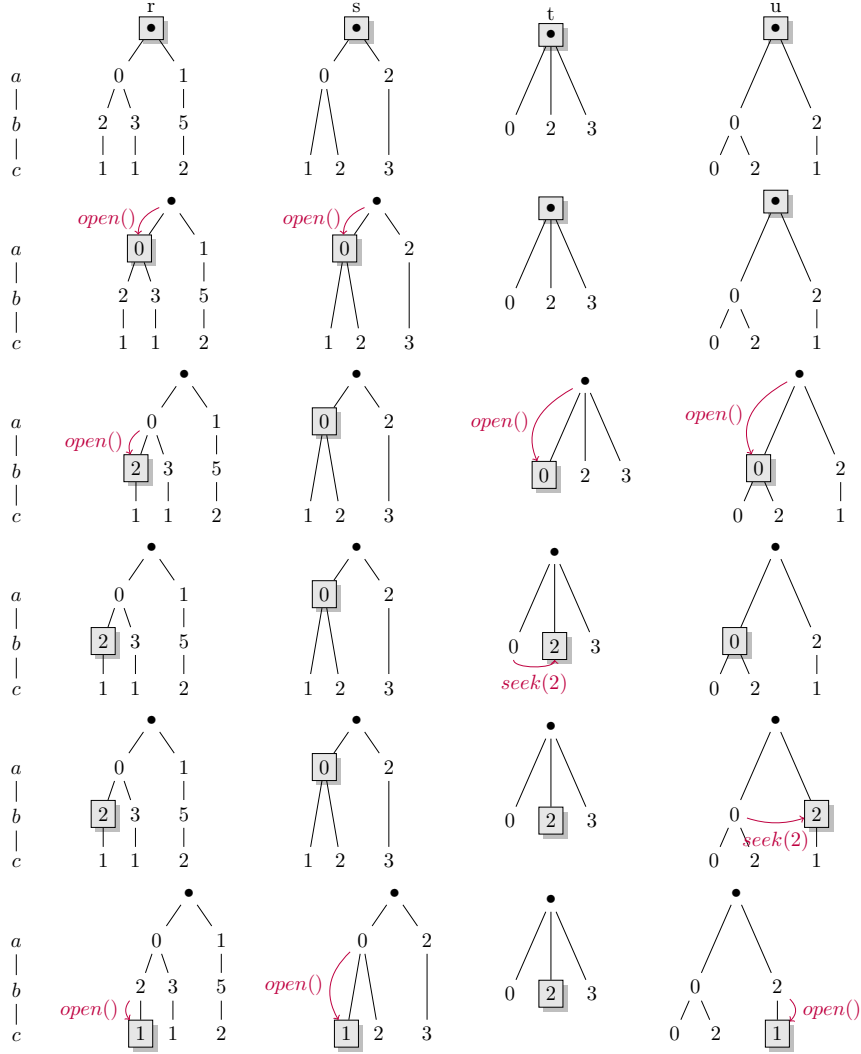

\begin{example}
\label{ex:acyclic}
\rm
Consider the acyclic query $q(a,b,c,) =$ $r(a,b,c),$ $s(a,c),$ $t(b),$ $u(b,c)$
joining the four relations depicted in Figure~\ref{fig:acyclic_join}.
As shown in the figure, the only tuple in the join result is 
$(0,2,1)$.
Figure~\ref{fig:trie_acyclic_join}
shows how LFTJ traverses the four relations 
following the global attribute order $a-b-c$ to compute the join result.
First, the algorithm calls $open()$ for $r$ and $s$, since these are the only 
relations containing attribute $a$. 
The iterators of these relations move to nodes carrying 
$0$, which  means that we have a match for the $a$-values 
(second row in Figure~\ref{fig:trie_acyclic_join}).
Next, the algorithm calls $open()$ for $r$, $t$, and $u$, 
since these are the relations that have attribute $b$. 
The current value of the iterator for $r$ becomes $2$ while the current value 
of the iterators for $t$ and $u$ become $0$
(third row in Figure~\ref{fig:trie_acyclic_join}). This means that the $b$-values
do not match yet. Since the largest current $b$-value is $2$, the algorithm calls $seek(2)$ for $t$, which moves the iterator of $t$ to the sibling node with value $2$ (fourth row in Figure~\ref{fig:trie_acyclic_join}).
In this situation, the $b$-values
still do not match and the largest current $b$-value is $2$. 
The algorithm calls $seek(2)$ for $u$, moving the iterator of $u$ to the sibling node with value $2$ (fifth row in Figure~\ref{fig:trie_acyclic_join}). Now, the iterators of $r$, $t$, and $u$
point to $2$, so the algorithm   
fixes the $b$-value to $2$. Calling $open()$ for
$r$, $s$, and $u$ moves their iterators to child nodes with value $1$.
Hence, the algorithm fixes the $c$-value to $1$ (sixth row in Figure~\ref{fig:trie_acyclic_join}). 
It follows that $(0,2,1)$ constitutes
the first result tuple, which is added to the output. 
After backtracking, the algorithm realizes that there is no further tuple in the 
join result and stops.
\end{example}

\subsection{Leapfrog Triejoin for a Cyclic Query}
\label{sec:lftj_cyclic_query}
The next example illustrates that traditional 
join algorithms are suboptimal 
in the sense that they can produce intermediate results that are larger than the final 
result. In contrast, LFTJ does not produce intermediate results and constructs
one output tuple at a time. The example considers the triangle query 
and showcases a database that  
was used in prior work to demonstrate the suboptimality of traditional 
join algorithms on skewed data~\cite{DBLP:journals/sigmod/NgoRR13}.

\begin{example}
\rm
Consider the triangle query $q(a,b,c)$ $=$ $r(a,b),$ 
$s(a,c),$ $t(b,c)$, which joins the three relations $r$, $s$, and $t$ depicted in Figure~\ref{fig:triangle_join}. 
Each relation has two values of degree $m+1$ and $2m$ values 
of degree $1$. For instance, in relation $r$, each of the values $a_0$ and $b_0$ is paired with $m+1$ distinct values and each of the remaining values is paired with exactly one value.  
Each input relation has $2m + 1$ tuples while  the result (Figure~\ref{fig:triangle_join} right)
has $3m + 1$ tuples. Hence, the size of the result is linear in $m$.

\begin{figure}[h]
\centering
\begin{center}
\begin{tabular}{c}
\begin{tabular}{cccc}
$r({\color{red}a},{\color{blue} b})$ & $s({\color{red}a},{\color{goodgreen} c})$ & $t({\color{blue}b},{\color{goodgreen} c})$ & $\text{join of } r, s,\text{ and } t$\smallskip \\
\begin{tabular}{rr}\toprule
\color{red}$a_0$ &\color{blue} $b_0$ \\
\color{red}$a_0$ &\color{blue} $\ldots$ \\
\color{red}$a_0$ &\color{blue} $b_m$ \\\hline
\color{red}$a_1$ &\color{blue} $b_0$ \\
\color{red}$\ldots$ &\color{blue} $b_0$ \\
\color{red}$a_m$ & \color{blue}$b_0$ \\\bottomrule
\\
\\
\\
\end{tabular}
&
\begin{tabular}{rr}\toprule
\color{red}$a_0$ & \color{goodgreen}$c_0$ \\
\color{red}$a_0$ & \color{goodgreen}$\ldots$ \\
\color{red}$a_0$ & \color{goodgreen}$c_m$ \\\hline
\color{red}$a_1$ & \color{goodgreen}$c_0$ \\
\color{red}$\ldots$ & \color{goodgreen}$c_0$ \\
\color{red}$a_m$ & \color{goodgreen}$c_0$ \\\bottomrule
\\
\\
\\
\end{tabular}
&
\begin{tabular}{rr}\toprule
\color{blue}$b_0$ & \color{goodgreen}$c_0$ \\
\color{blue}$b_0$ & \color{goodgreen}$\ldots$ \\
\color{blue}$b_0$ & \color{goodgreen}$c_m$ \\\hline
\color{blue}$b_1$ & \color{goodgreen}$c_0$ \\
\color{blue}$\ldots$ & \color{goodgreen}$c_0$ \\
\color{blue}$b_m$ & \color{goodgreen}$c_0$ \\\bottomrule
\\
\\
\\
\end{tabular}&
\begin{tabular}{rrr}\toprule
\color{red}$a_0$ & \color{blue}$b_0$ & \color{goodgreen}$c_0$ \\
\color{red}$a_0$ & \color{blue}$b_0$ & \color{goodgreen}$\ldots$ \\
\color{red}$a_0$ & \color{blue}$b_0$ & \color{goodgreen}$c_m$ \\\hline
\color{red}$a_0$ & \color{blue}$b_1$ & \color{goodgreen}$c_0$ \\
\color{red}$a_0$ & \color{blue}$\ldots$ & \color{goodgreen}$c_0$ \\
\color{red}$a_0$ & \color{blue}$b_m$ & \color{goodgreen}$c_0$ \\\hline
\color{red}$a_1$ & \color{blue}$b_0$ & \color{goodgreen}$c_0$ \\
\color{red}$\ldots$ & \color{blue}$b_0$ & \color{goodgreen}$c_0$ \\
\color{red}$a_m$ & \color{blue}$b_0$ & \color{goodgreen}$c_0$ \\\hline
\end{tabular}
\end{tabular}
\end{tabular}
\end{center}
\hspace*{-0.5cm}
\caption{
Relations of the triangle join.
}
\label{fig:triangle_join}
\end{figure} 

Traditional join algorithms first join  two of the three input relations and then join in the remaining one. 
Figure~\ref{fig:binary_join} shows the result of 
any pairwise join.
In each case, the result contains $(m+1)^2 + m$ tuples, which 
is quadratic in $m$. This means that any join plan that first joins two of the three input relations needs at least quadratic time, while
the size of the final result is linear in $m$.

\begin{figure}[b]
\centering
\begin{center}
\begin{tabular}{c}
\begin{tabular}{ccc}
$\text{join of } r\text{ and } s$ 
& $\text{join of } r \text{ and } t$
& $\text{join of } s \text{ and } t$\smallskip \\
\begin{tabular}{rrr}\toprule
\color{red}$a_0$ & \color{blue}$b_0$ & \color{goodgreen}$c_0$ \\
\color{red}$a_0$ & \color{blue}$b_0$ & \color{goodgreen}$\ldots$ \\
\color{red}$a_0$ & \color{blue}$b_0$ & \color{goodgreen}$c_m$ \\\hline
$\ldots$ & $\ldots$ & $\ldots$ \\\hline
\color{red}$a_0$ & \color{blue}$b_m$ & \color{goodgreen}$c_0$ \\
\color{red}$a_0$ & \color{blue}$b_m$ & \color{goodgreen}$\ldots$ \\
\color{red}$a_0$ & \color{blue}$b_m$ & \color{goodgreen}$c_m$ \\\hline
\color{red}$a_1$ & \color{blue}$b_0$ & \color{goodgreen}$c_0$ \\
\color{red}$\ldots$ & \color{blue}$b_0$ & \color{goodgreen}$c_0$ \\
\color{red}$a_m$ & \color{blue}$b_0$ & \color{goodgreen}$c_0$ \\\hline
\end{tabular}
&
\begin{tabular}{rrr}\toprule
\color{red}$a_0$ & \color{blue}$b_0$ & \color{goodgreen}$c_0$ \\
\color{red}$a_0$ & \color{blue}$b_0$ & \color{goodgreen}$\ldots$ \\
\color{red}$a_0$ & \color{blue}$b_0$ & \color{goodgreen}$c_m$ \\\hline
$\ldots$ & $\ldots$ & $\ldots$ \\\hline
\color{red}$a_m$ & \color{blue}$b_0$ & \color{goodgreen}$c_0$ \\
\color{red}$a_m$ & \color{blue}$b_0$ & \color{goodgreen}$\ldots$ \\
\color{red}$a_m$ & \color{blue}$b_0$ & \color{goodgreen}$c_m$ \\\hline
\color{red}$a_0$ & \color{blue}$b_1$ & \color{goodgreen}$c_0$ \\
\color{red}$a_0$ & \color{blue}$\ldots$ & \color{goodgreen}$c_0$ \\
\color{red}$a_0$ & \color{blue}$b_m$ & \color{goodgreen}$c_0$ \\\hline
\end{tabular}
&
\begin{tabular}{rrr}\toprule
\color{red}$a_0$ & \color{blue}$b_0$ & \color{goodgreen}$c_0$ \\
\color{red}$a_0$ & \color{blue}$\ldots$ & \color{goodgreen}$\ldots$ \\
\color{red}$a_0$ & \color{blue}$b_m$ & \color{goodgreen}$c_0$ \\\hline
$\ldots$ & $\ldots$ & $\ldots$ \\\hline
\color{red}$a_m$ & \color{blue}$b_0$ & \color{goodgreen}$c_0$ \\
\color{red}$a_m$ & \color{blue}$\ldots$ & \color{goodgreen}$c_0$ \\
\color{red}$a_m$ & \color{blue}$b_m$ & \color{goodgreen}$c_0$ \\\hline
\color{red}$a_0$ & \color{blue}$b_0$ & \color{goodgreen}$c_1$ \\
\color{red}$a_0$ & \color{blue}$b_0$ & \color{goodgreen}$\ldots$ \\
\color{red}$a_0$ & \color{blue}$b_0$ & \color{goodgreen}$c_m$ \\\hline
\end{tabular}
\end{tabular}
\end{tabular}
\end{center}
\hspace*{-0.5cm}
\caption{
Pairwise joins of the relations 
$r$, $s$, and $t$ from Figure~\ref{fig:triangle_join}.}
\label{fig:binary_join}
\end{figure}

LFTJ does not produce any intermediate result
and its computation time is proportional to the size of the final result. 
Figure~\ref{fig:trie_traversal_triangle_join} visualizes how LFTJ 
traverses the three relations following the global attribute order 
$a-b-c$. The last column in the Figure
shows how the join result is produced.  
Just as in Example~\ref{ex:acyclic}, LFTJ uses leapfrog join to compute
the intersection of $a$-values, the intersection of $b$-values in the context of 
a given $a$-value, and the intersection of $c$-values in the context 
of a given $(a,b)$-pair. Hence, in the sequel we focus more on the order in which complete 
result tuples are produced.

The first result tuple is constructed by fixing the attributes 
$a$, $b$, and $c$ to the values $a_0$, $b_0$, and $c_0$, respectively
(second row in Figure~\ref{fig:trie_traversal_triangle_join}).
The $c$-values in the context of $a_0$ in $s$ and in the context of 
$b_0$ in $t$ are teh same: $c_1, \ldots , c_m$. So, we obtain the result tuples 
$(a_0, b_0, c_1)$, $\ldots$, $(a_0, b_0, c_m)$
(third row in Figure~\ref{fig:trie_traversal_triangle_join}).
The $b$-values in the context of $a_0$ in $r$ are the same as the $b$-values in 
$t$. All $b$-values in $t$ have $c_0$ as a child, which 
is also child of $a_0$ in $s$. 
So, the algorithm produces the result tuples 
$(a_0, b_1, c_0)$, $\ldots$, $(a_0, b_m, c_0)$
(fourth row in Figure~\ref{fig:trie_traversal_triangle_join}).
At this point, all $b$- and $c$-values in the context of $a_0$
are exhausted. The algorithm moves to the next $a$-value $a_1$
in the intersection of the $a$-values in $r$ and $s$. The value 
$b_0$ appears in $t$ and in the context of $a_1$ in $r$. Moreover, 
the value $c_0$ appears in the context of $a_1$ in $s$ and of $b_0$
in $t$. Since this is a match, the algorithm adds $(a_1, b_0, c_0)$
to the output (fifth row in Figure~\ref{fig:trie_traversal_triangle_join}).
Next, the algorithm iterates over the $a$-values $a_2, \ldots, a_m$. 
Each of these values have $b_0$ as child in $r$ and $c_0$ as child in $s$. 
Hence, the algorithm produces the result tuples
$(a_2, b_0, c_0)$, $\ldots$, $(a_m, b_0, c_0)$
(sixth row in Figure~\ref{fig:trie_traversal_triangle_join}).
\end{example}

\color{black}
\begin{figure*}[h]
\centering
 \adjustbox{width=0.74\textwidth}{
\begin{tabular}{ccccc}
\begin{tikzpicture}[xscale=0.55, yscale=0.46]
\node at (0,1) (name) {};
\node at (0,-1) (a) {$a$};
\node at (0,-3) (b) {$b$} edge[-] (a);
\node at (0,-5) (c) {$c$} edge[-] (b);
\end{tikzpicture} 
&
\begin{tikzpicture}[xscale=0.55, yscale=0.46]
  \node at (0.2,1.8) (name) {$r$};
  \node[drop shadow, fill=gray!20, draw] at (0.2,1) (u1) {\textbullet};
  \node at (-2,-1) (a0) {$a_0$} edge[-] (u1);

  \node at (0.2,-1) (a1) {$a_1$} edge[-] (u1);

  \node at (1.1,-1) (dots1) {$\ldots$}; 
  
  \node at (2.3,-1) (am) {$a_m$} edge[-] (u1);

  \node at (-2.8,-3) (b0) {$b_0$} edge[-] (a0);

  \node at (-2  ,-3) (dots2) {$\ldots$};
  \node at (-1.1,-3) (bm) {$b_m$} edge[-] (a0);

  \node at (0.2,-3) (b0) {$b_0$} edge[-] (a1);

  	\node at (2.3,-3) (b0) {$b_0$} edge[-] (am);

  \node at (0,-5) (nothing) {};
\end{tikzpicture}
&
\begin{tikzpicture}[xscale=0.55, yscale=0.46]
\node at (0.2,1.8) (name) {$s$};
  \node[drop shadow, fill=gray!20, draw] at (0.2,1) (u1) {\textbullet};
	  \node at (-2,-1) (a0) {$a_0$} edge[-] (u1);
	  
  \node at (0.2,-1) (a1) {$a_1$} edge[-] (u1);

  \node at (1,-1) (dots1) {$\ldots$};

  \node at (2.3,-1) (am) {$a_m$} edge[-] (u1);
  \node at (2.3,-5) (b0) {$c_0$} edge[-] (am);

  \node at (-2.8,-5) (b0) {$c_0$} edge[-] (a0);

  \node at (-2  ,-5) (dots2) {$\ldots$};
  \node at (-1.1,-5) (bm) {$c_m$} edge[-] (a0);

  \node at (0.2,-5) (b0) {$c_0$} edge[-] (a1);

\end{tikzpicture}
&
\begin{tikzpicture}[xscale=0.55, yscale=0.46]
\node at (0.2,1.8) (name) {$t$};
  \node[drop shadow, fill=gray!20, draw] at (0.2,1) (u1) {\textbullet};
  \node at (-2,-3) (a0) {$b_0$} edge[-] (u1);

  \node at (0.2,-3) (a1) {$b_1$} edge[-] (u1);	
  \node at (1,-3) (dots1) {$\ldots$};
  \node at (2.3,-3) (am) {$b_m$} edge[-] (u1);

  \node at (-2.8,-5) (b0) {$c_0$} edge[-] (a0);
	
  \node at (-2  ,-5) (dots2) {$\ldots$};
  \node at (-1.1,-5) (bm) {$c_m$} edge[-] (a0);
  
  \node at (0.2,-5) (b0) {$c_0$} edge[-] (a1);
  \node at (2.3,-5) (b0) {$c_0$} edge[-] (am);
\end{tikzpicture}
&

\\
\begin{tikzpicture}[xscale=0.55, yscale=0.46]
\node at (0,1) (bullet) {};
\node at (0,-1) (a) {$a$};
\node at (0,-3) (b) {$b$} edge[-] (a);
\node at (0,-5) (c) {$c$} edge[-] (b);
\end{tikzpicture} 
&
\begin{tikzpicture}[xscale=0.55, yscale=0.46]
  \node at (0.2,1) (u1) {\textbullet};
  \node[drop shadow, fill=gray!20, draw] at (-2,-1) (a0) {$a_0$} edge[-] (u1);

  \node at (0.2,-1) (a1) {$a_1$} edge[-] (u1);

  \node at (1.1,-1) (dots1) {$\ldots$}; 
  
  \node at (2.3,-1) (am) {$a_m$} edge[-] (u1);

  \node[drop shadow, fill=gray!20, draw] at (-2.8,-3) (b0) {$b_0$} edge[-] (a0);

  \node at (-2  ,-3) (dots2) {$\ldots$};
  \node at (-1.1,-3) (bm) {$b_m$} edge[-] (a0);

  \node at (0.2,-3) (b0) {$b_0$} edge[-] (a1);

  	\node at (2.3,-3) (b0) {$b_0$} edge[-] (am);

  \node at (0,-5) (nothing) {};
\end{tikzpicture}
&
\begin{tikzpicture}[xscale=0.55, yscale=0.46]
  \node at (0.2,1) (u1) {\textbullet};
	  \node[drop shadow, fill=gray!20, draw] at (-2,-1) (a0) {$a_0$} edge[-] (u1);
	  
  \node at (0.2,-1) (a1) {$a_1$} edge[-] (u1);

  \node at (1,-1) (dots1) {$\ldots$};

  \node at (2.3,-1) (am) {$a_m$} edge[-] (u1);
  \node at (2.3,-5) (b0) {$c_0$} edge[-] (am);

  \node[drop shadow, fill=gray!20, draw] at (-2.8,-5) (b0) {$c_0$} edge[-] (a0);

  \node at (-2  ,-5) (dots2) {$\ldots$};
  \node at (-1.1,-5) (bm) {$c_m$} edge[-] (a0);

  \node at (0.2,-5) (b0) {$c_0$} edge[-] (a1);

\end{tikzpicture}
&
\begin{tikzpicture}[xscale=0.55, yscale=0.46]
  \node at (0.2,1) (u1) {\textbullet};
  \node[drop shadow, fill=gray!20, draw] at (-2,-3) (a0) {$b_0$} edge[-] (u1);

  \node at (0.2,-3) (a1) {$b_1$} edge[-] (u1);	
  \node at (1,-3) (dots1) {$\ldots$};
  \node at (2.3,-3) (am) {$b_m$} edge[-] (u1);

  \node[drop shadow, fill=gray!20, draw] at (-2.8,-5) (b0) {$c_0$} edge[-] (a0);
	
  \node at (-2  ,-5) (dots2) {$\ldots$};
  \node at (-1.1,-5) (bm) {$c_m$} edge[-] (a0);
  
  \node at (0.2,-5) (b0) {$c_0$} edge[-] (a1);
  \node at (2.3,-5) (b0) {$c_0$} edge[-] (am);
\end{tikzpicture}
&
\begin{tikzpicture}[xscale=0.55, yscale=0.46]
    \node at (0.5,1) (u1) {\textbullet};
    \node at (-2,-1) (a0) {$a_0$} edge[-] (u1);
    \node at (-3.6,-3) (b0) {$b_0$} edge[-] (a0);
    \node at(-4.6,-5) (b0tuc0) {$c_0$} edge[-] (b0);
 	
    \node at (2.5,-5) (b0tuc0) {};
\end{tikzpicture}
\\
\begin{tikzpicture}[xscale=0.55, yscale=0.46]
\node at (0,1) (name) {};
\node at (0,-1) (a) {$a$};
\node at (0,-3) (b) {$b$} edge[-] (a);
\node at (0,-5) (c) {$c$} edge[-] (b);
\end{tikzpicture} 
&
\begin{tikzpicture}[xscale=0.55, yscale=0.46]
  \node at (0.2,1) (u1) {\textbullet};
  \node[drop shadow, fill=gray!20, draw] at (-2,-1) (a0) {$a_0$} edge[-] (u1);

  \node at (0.2,-1) (a1) {$a_1$} edge[-] (u1);

  \node at (1.1,-1) (dots1) {$\ldots$}; 
  
  \node at (2.3,-1) (am) {$a_m$} edge[-] (u1);

  \node[drop shadow, fill=gray!20, draw] at (-2.8,-3) (b0) {$b_0$} edge[-] (a0);

  \node at (-2  ,-3) (dots2) {$\ldots$};
  \node at (-1.1,-3) (bm) {$b_m$} edge[-] (a0);

  \node at (0.2,-3) (b0) {$b_0$} edge[-] (a1);

  	\node at (2.3,-3) (b0) {$b_0$} edge[-] (am);

  \node at (0,-5) (nothing) {};
\end{tikzpicture}
&
\begin{tikzpicture}[xscale=0.55, yscale=0.46]
  \node at (0.2,1) (u1) {\textbullet};
	  \node[drop shadow, fill=gray!20, draw] at (-2,-1) (a0) {$a_0$} edge[-] (u1);
	  
  \node at (0.2,-1) (a1) {$a_1$} edge[-] (u1);

  \node at (1,-1) (dots1) {$\ldots$};

  \node at (2.3,-1) (am) {$a_m$} edge[-] (u1);
  \node at (2.3,-5) (b0) {$c_0$} edge[-] (am);

  \node at (-2.8,-5) (b0) {$c_0$} edge[-] (a0);

  \node at (-2  ,-5) (dots2) {$\ldots$};
  \node[drop shadow, fill=gray!20, draw] at (-1.1,-5) (bm) {$c_m$} edge[-] (a0);

  \node at (0.2,-5) (b0) {$c_0$} edge[-] (a1);

\end{tikzpicture}
&
\begin{tikzpicture}[xscale=0.55, yscale=0.46]
  \node at (0.2,1) (u1) {\textbullet};
  \node[drop shadow, fill=gray!20, draw] at (-2,-3) (a0) {$b_0$} edge[-] (u1);

  \node at (0.2,-3) (a1) {$b_1$} edge[-] (u1);	
  \node at (1,-3) (dots1) {$\ldots$};
  \node at (2.3,-3) (am) {$b_m$} edge[-] (u1);

  \node at (-2.8,-5) (b0) {$c_0$} edge[-] (a0);
	
  \node at (-2  ,-5) (dots2) {$\ldots$};
  \node[drop shadow, fill=gray!20, draw] at (-1.1,-5) (bm) {$c_m$} edge[-] (a0);
  
  \node at (0.2,-5) (b0) {$c_0$} edge[-] (a1);
  \node at (2.3,-5) (b0) {$c_0$} edge[-] (am);
\end{tikzpicture}
&
\begin{tikzpicture}[xscale=0.55, yscale=0.46]
    \node at (0.5,1) (u1) {\textbullet};
    \node at (-2,-1) (a0) {$a_0$} edge[-] (u1);
    \node at (-3.6,-3) (b0) {$b_0$} edge[-] (a0);
    \node at(-4.6,-5) (b0tuc0) {$c_0$} edge[-] (b0);

    \node at(-3.8,-5) (dots) {$\ldots$};
    \node at(-2.8,-5) (b0tucm) {$c_m$} edge[-] (b0);
	
    \node at (2.5,-5) (b0tuc0) {};
\end{tikzpicture}
\\
\begin{tikzpicture}[xscale=0.55, yscale=0.46]
\node at (0,1) (name) {};
\node at (0,-1) (a) {$a$};
\node at (0,-3) (b) {$b$} edge[-] (a);
\node at (0,-5) (c) {$c$} edge[-] (b);
\end{tikzpicture} 
&
\begin{tikzpicture}[xscale=0.55, yscale=0.46]
  \node at (0.2,1) (u1) {\textbullet};
  \node[drop shadow, fill=gray!20, draw] at (-2,-1) (a0) {$a_0$} edge[-] (u1);

  \node at (0.2,-1) (a1) {$a_1$} edge[-] (u1);

  \node at (1.1,-1) (dots1) {$\ldots$}; 
  
  \node at (2.3,-1) (am) {$a_m$} edge[-] (u1);

  \node at (-2.8,-3) (b0) {$b_0$} edge[-] (a0);

  \node at (-2  ,-3) (dots2) {$\ldots$};
  \node[drop shadow, fill=gray!20, draw] at (-1.1,-3) (bm) {$b_m$} edge[-] (a0);

  \node at (0.2,-3) (b0) {$b_0$} edge[-] (a1);

  	\node at (2.3,-3) (b0) {$b_0$} edge[-] (am);

  \node at (0,-5) (nothing) {};
\end{tikzpicture}
&
\begin{tikzpicture}[xscale=0.55, yscale=0.46]
  \node at (0.2,1) (u1) {\textbullet};
	  \node[drop shadow, fill=gray!20, draw] at (-2,-1) (a0) {$a_0$} edge[-] (u1);
	  
  \node at (0.2,-1) (a1) {$a_1$} edge[-] (u1);

  \node at (1,-1) (dots1) {$\ldots$};

  \node at (2.3,-1) (am) {$a_m$} edge[-] (u1);
  \node at (2.3,-5) (b0) {$c_0$} edge[-] (am);

  \node[drop shadow, fill=gray!20, draw] at (-2.8,-5) (b0) {$c_0$} edge[-] (a0);

  \node at (-2  ,-5) (dots2) {$\ldots$};
  \node at (-1.1,-5) (bm) {$c_m$} edge[-] (a0);

  \node at (0.2,-5) (b0) {$c_0$} edge[-] (a1);

\end{tikzpicture}
&
\begin{tikzpicture}[xscale=0.55, yscale=0.46]
  \node at (0.2,1) (u1) {\textbullet};
  \node at (-2,-3) (a0) {$b_0$} edge[-] (u1);

  \node at (0.2,-3) (a1) {$b_1$} edge[-] (u1);	
  \node at (1,-3) (dots1) {$\ldots$};
  \node[drop shadow, fill=gray!20, draw] at (2.3,-3) (am) {$b_m$} edge[-] (u1);

  \node at (-2.8,-5) (b0) {$c_0$} edge[-] (a0);
	
  \node at (-2  ,-5) (dots2) {$\ldots$};
  \node at (-1.1,-5) (bm) {$c_m$} edge[-] (a0);
  
  \node at (0.2,-5) (b0) {$c_0$} edge[-] (a1);
  \node[drop shadow, fill=gray!20, draw] at (2.3,-5) (b0) {$c_0$} edge[-] (am);
\end{tikzpicture}
&
\begin{tikzpicture}[xscale=0.55, yscale=0.46]
    \node at (0.5,1) (u1) {\textbullet};
    \node at (-2,-1) (a0) {$a_0$} edge[-] (u1);
    \node at (-3.6,-3) (b0) {$b_0$} edge[-] (a0);
    \node at(-4.6,-5) (b0tuc0) {$c_0$} edge[-] (b0);

    \node at(-3.8,-5) (dots) {$\ldots$};
    \node at(-2.8,-5) (b0tucm) {$c_m$} edge[-] (b0);

    \node at (-2,-3) (b1) {$b_1$} edge[-] (a0);
    \node at (-1.3,-3) (dots1) {$\ldots$};
    \node at (-0.5,-3) (bm) {$b_m$} edge[-] (a0);
    \node at(-2,-5) (b1tuc0) {$c_0$} edge[-] (b1);
    \node at(-0.5,-5) (bmtuc0) {$c_0$} edge[-] (bm);
	
    \node at (2.5,-5) (b0tuc0) {};
\end{tikzpicture}
\\
\begin{tikzpicture}[xscale=0.55, yscale=0.46]
\node at (0,1) (name) {};
\node at (0,-1) (a) {$a$};
\node at (0,-3) (b) {$b$} edge[-] (a);
\node at (0,-5) (c) {$c$} edge[-] (b);
\end{tikzpicture} 
&
\begin{tikzpicture}[xscale=0.55, yscale=0.46]
  \node at (0.2,1) (u1) {\textbullet};
  \node at (-2,-1) (a0) {$a_0$} edge[-] (u1);

  \node[drop shadow, fill=gray!20, draw] at (0.2,-1) (a1) {$a_1$} edge[-] (u1);

  \node at (1.1,-1) (dots1) {$\ldots$}; 
  
  \node at (2.3,-1) (am) {$a_m$} edge[-] (u1);

  \node at (-2.8,-3) (b0) {$b_0$} edge[-] (a0);

  \node at (-2  ,-3) (dots2) {$\ldots$};
  \node at (-1.1,-3) (bm) {$b_m$} edge[-] (a0);

  \node[drop shadow, fill=gray!20, draw] at (0.2,-3) (b0) {$b_0$} edge[-] (a1);

  	\node at (2.3,-3) (b0) {$b_0$} edge[-] (am);

  \node at (0,-5) (nothing) {};
\end{tikzpicture}
&
\begin{tikzpicture}[xscale=0.55, yscale=0.46]
  \node at (0.2,1) (u1) {\textbullet};
	  \node at (-2,-1) (a0) {$a_0$} edge[-] (u1);
	  
  \node[drop shadow, fill=gray!20, draw] at (0.2,-1) (a1) {$a_1$} edge[-] (u1);

  \node at (1,-1) (dots1) {$\ldots$};

  \node at (2.3,-1) (am) {$a_m$} edge[-] (u1);
  \node at (2.3,-5) (b0) {$c_0$} edge[-] (am);

  \node at (-2.8,-5) (b0) {$c_0$} edge[-] (a0);

  \node at (-2  ,-5) (dots2) {$\ldots$};
  \node at (-1.1,-5) (bm) {$c_m$} edge[-] (a0);

  \node[drop shadow, fill=gray!20, draw] at (0.2,-5) (b0) {$c_0$} edge[-] (a1);

\end{tikzpicture}
&
\begin{tikzpicture}[xscale=0.55, yscale=0.46]
  \node at (0.2,1) (u1) {\textbullet};
  \node[drop shadow, fill=gray!20, draw] at (-2,-3) (a0) {$b_0$} edge[-] (u1);

  \node at (0.2,-3) (a1) {$b_1$} edge[-] (u1);	
  \node at (1,-3) (dots1) {$\ldots$};
  \node at (2.3,-3) (am) {$b_m$} edge[-] (u1);

  \node[drop shadow, fill=gray!20, draw] at (-2.8,-5) (b0) {$c_0$} edge[-] (a0);
	
  \node at (-2  ,-5) (dots2) {$\ldots$};
  \node at (-1.1,-5) (bm) {$c_m$} edge[-] (a0);
  
  \node at (0.2,-5) (b0) {$c_0$} edge[-] (a1);
  \node at (2.3,-5) (b0) {$c_0$} edge[-] (am);
\end{tikzpicture}
&
\begin{tikzpicture}[xscale=0.55, yscale=0.46]
    \node at (0.5,1) (u1) {\textbullet};
    \node at (-2,-1) (a0) {$a_0$} edge[-] (u1);
    \node at (-3.6,-3) (b0) {$b_0$} edge[-] (a0);
    \node at(-4.6,-5) (b0tuc0) {$c_0$} edge[-] (b0);

    \node at(-3.8,-5) (dots) {$\ldots$};
    \node at(-2.8,-5) (b0tucm) {$c_m$} edge[-] (b0);

    \node at (-2,-3) (b1) {$b_1$} edge[-] (a0);
    \node at (-1.3,-3) (dots1) {$\ldots$};
    \node at (-0.5,-3) (bm) {$b_m$} edge[-] (a0);
    \node at(-2,-5) (b1tuc0) {$c_0$} edge[-] (b1);
    \node at(-0.5,-5) (bmtuc0) {$c_0$} edge[-] (bm);

    \node at (0.5,-1) (a1) {$a_1$} edge[-] (u1);
    \node at (0.5,-3) (b0) {$b_0$} edge[-] (a1);
    \node at (0.5,-5) (b0tuc0) {$c_0$} edge[-] (b0);

    \node at (2.5,-5) (b0tuc0) {};
\end{tikzpicture}
\\
\begin{tikzpicture}[xscale=0.55, yscale=0.46]
\node at (0,1) (name) {};
\node at (0,-1) (a) {$a$};
\node at (0,-3) (b) {$b$} edge[-] (a);
\node at (0,-5) (c) {$c$} edge[-] (b);
\end{tikzpicture} 
&
\begin{tikzpicture}[xscale=0.55, yscale=0.46]
  \node at (0.2,1) (u1) {\textbullet};
  \node at (-2,-1) (a0) {$a_0$} edge[-] (u1);

  \node at (0.2,-1) (a1) {$a_1$} edge[-] (u1);

  \node at (1.1,-1) (dots1) {$\ldots$}; 
  
  \node[drop shadow, fill=gray!20, draw] at (2.3,-1) (am) {$a_m$} edge[-] (u1);

  \node at (-2.8,-3) (b0) {$b_0$} edge[-] (a0);

  \node at (-2  ,-3) (dots2) {$\ldots$};
  \node at (-1.1,-3) (bm) {$b_m$} edge[-] (a0);

  \node at (0.2,-3) (b0) {$b_0$} edge[-] (a1);

  	\node[drop shadow, fill=gray!20, draw] at (2.3,-3) (b0) {$b_0$} edge[-] (am);

  \node at (0,-5) (nothing) {};
\end{tikzpicture}
&
\begin{tikzpicture}[xscale=0.55, yscale=0.46]
  \node at (0.2,1) (u1) {\textbullet};
	  \node at (-2,-1) (a0) {$a_0$} edge[-] (u1);
	  
  \node at (0.2,-1) (a1) {$a_1$} edge[-] (u1);

  \node at (1,-1) (dots1) {$\ldots$};

  \node [drop shadow, fill=gray!20, draw] at (2.3,-1) (am) {$a_m$} edge[-] (u1);
  \node[drop shadow, fill=gray!20, draw] at (2.3,-5) (b0) {$c_0$} edge[-] (am);

  \node at (-2.8,-5) (b0) {$c_0$} edge[-] (a0);

  \node at (-2  ,-5) (dots2) {$\ldots$};
  \node at (-1.1,-5) (bm) {$c_m$} edge[-] (a0);

  \node at (0.2,-5) (b0) {$c_0$} edge[-] (a1);

\end{tikzpicture}
&
\begin{tikzpicture}[xscale=0.55, yscale=0.46]
  \node at (0.2,1) (u1) {\textbullet};
  \node[drop shadow, fill=gray!20, draw] at (-2,-3) (a0) {$b_0$} edge[-] (u1);

  \node at (0.2,-3) (a1) {$b_1$} edge[-] (u1);	
  \node at (1,-3) (dots1) {$\ldots$};
  \node at (2.3,-3) (am) {$b_m$} edge[-] (u1);

  \node[drop shadow, fill=gray!20, draw] at (-2.8,-5) (b0) {$c_0$} edge[-] (a0);
	
  \node at (-2  ,-5) (dots2) {$\ldots$};
  \node at (-1.1,-5) (bm) {$c_m$} edge[-] (a0);
  
  \node at (0.2,-5) (b0) {$c_0$} edge[-] (a1);
  \node at (2.3,-5) (b0) {$c_0$} edge[-] (am);
\end{tikzpicture}
&
\begin{tikzpicture}[xscale=0.55, yscale=0.46]
    \node at (0.5,1) (u1) {\textbullet};
    \node at (-2,-1) (a0) {$a_0$} edge[-] (u1);
    \node at (-3.6,-3) (b0) {$b_0$} edge[-] (a0);
    \node at(-4.6,-5) (b0tuc0) {$c_0$} edge[-] (b0);

    \node at(-3.8,-5) (dots) {$\ldots$};
    \node at(-2.8,-5) (b0tucm) {$c_m$} edge[-] (b0);

    \node at (-2,-3) (b1) {$b_1$} edge[-] (a0);
    \node at (-1.3,-3) (dots1) {$\ldots$};
    \node at (-0.5,-3) (bm) {$b_m$} edge[-] (a0);
    \node at(-2,-5) (b1tuc0) {$c_0$} edge[-] (b1);
    \node at(-0.5,-5) (bmtuc0) {$c_0$} edge[-] (bm);

    \node at (0.5,-1) (a1) {$a_1$} edge[-] (u1);
    \node at (0.5,-3) (b0) {$b_0$} edge[-] (a1);
    \node at (0.5,-5) (b0tuc0) {$c_0$} edge[-] (b0);

    \node at (1.35,-1) (dots1) {$\ldots$};
    \node at (2.5,-1) (am) {$a_m$} edge[-] (u1);
    \node at (2.5,-3) (b0) {$b_0$} edge[-] (am);
    \node at (2.5,-5) (b0tuc0) {$c_0$} edge[-] (b0);
\end{tikzpicture}
\end{tabular}
}
\caption{
LFTJ execution for relations $r$, $s$, and $t$ in  Figure~\ref{fig:triangle_join} following the attribute order $a - b - c$ (depicted to the left). The last column shows how the join output is produced one tuple at a time.
}
\label{fig:trie_traversal_triangle_join}
\end{figure*}
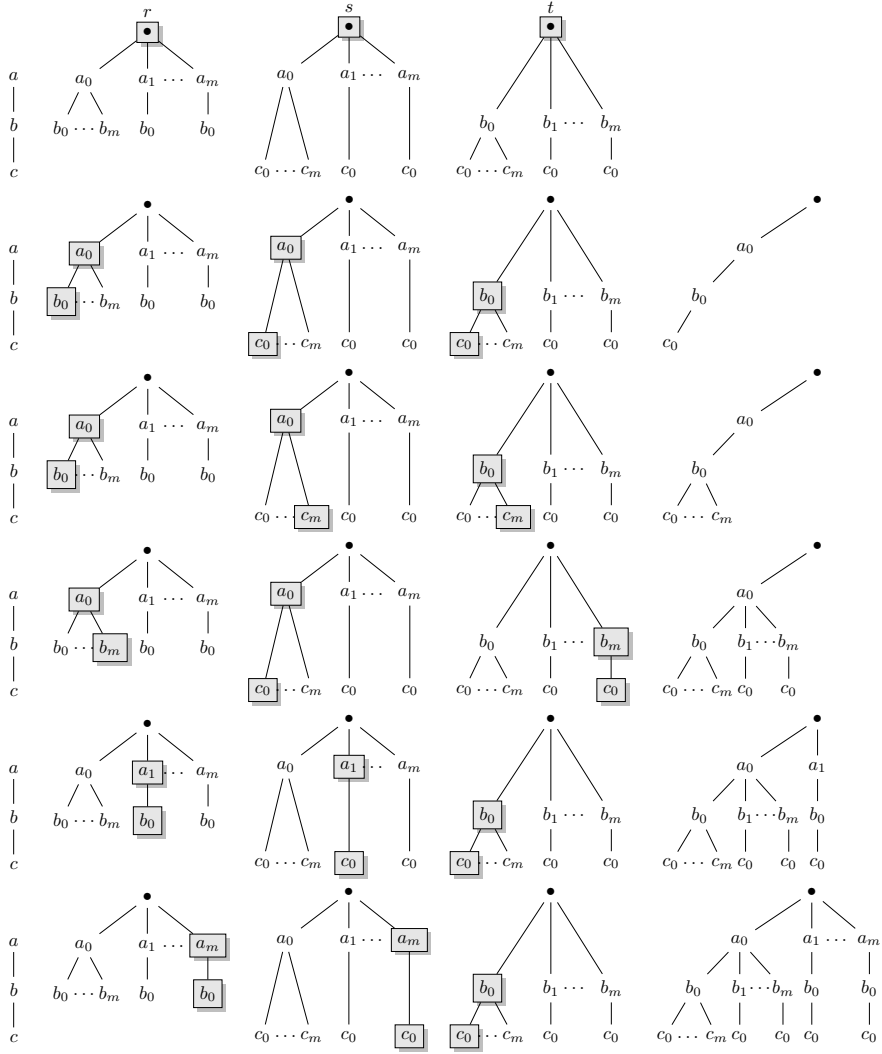

\section{LFTJ Implementation in ADOPT}
\label{sec:lftjinadopt}

ADOPT uses a variant of the LFTJ, described in Algorithm~\ref{alg:joinOneCube} at a relatively high level of abstraction. The loop from Lines~10 to 17 iterates over values for the current attribute that satisfy all applicable join predicates. Here, ADOPT uses the LFTJ approach, discussed in the preceding subsections, to efficiently identify the next attribute value that appears in all input relations. As discussed previously, identifying the next value ($v$ in Algorithm~\ref{alg:joinOneCube}) may involve repeated ``seek'' operations on all input relations that contain the current attribute. The LFTJ, as presented so far, focuses on equality join predicates. In addition, ADOPT processes other join predicates by simply skipping to the next attribute value if predicates evaluate to false with the current value. Also, ADOPT only considers values for each attribute that fall within the current target cube (by starting from the lower bound and terminating iterations once seek operations return values above the upper bound). Whenever ADOPT proceeds to a new attribute, it performs the equivalent of the ``open'' operation on all input tables containing the new attribute. Finally, whereas the original LFTJ focuses on set data, ADOPT supports multi-sets by iterating over all tuple combinations having the currently selected combination of values in all join columns (variable $M$ in Algorithm~\ref{alg:joinOneCube}), when updating the query result set (Line~5 in Algorithm~\ref{alg:joinOneCube}).

LFTJ variants generally rely on data structures that enable efficient seek operations. Each table is \emph{logically} organized as a trie, where each level corresponds to one attribute and the order of the levels follows the global attribute order. This logical organization can be supported \emph{physically} by sorted tables as done by ADOPT, B$^+$-trees as in the original LFTJ implementation in LogicBlox~\cite{Aref2015}, or nested hashing~\cite{Freitag2020}. More precisely, assume we want to process the global attribute order $a_1$ to $a_m$. For a specific table, denote by $a_{i_1}$ to $a_{i_n}$ the subset of attributes that appear in that table, in the same order as they appear in the global attribute order. ADOPT sorts the rows in that table according to values for attributes $a_{i_1}$ to $a_{i_n}$, prioritizing attribute values in this order during comparisons (e.g., rows are ordered according to their value for $a_{i_1}$ as first priority, considering the value for $a_{i_2}$ only when comparing rows with the same value for $a_{i_1}$). ADOPT avoids materializing the sorted table but merely stores the ordered row indexes in an integer array of the same length as the table. Since ADOPT is a column store and stores columns as arrays, data access via the row index is efficient. Having sorted rows enables ADOPT to implement seek operations efficiently.

The original LFTJ algorithm assumes that tables have been sorted during pre-processing before run time~\cite{DBLP:conf/icdt/Veldhuizen14}. To try out different global attribute orders, ADOPT may require multiple alternative sort orders for the input relations. Whenever ADOPT selects a global attribute order, it determines which local sort orders are required for each table. If the corresponding order (i.e., the array containing sorted row indexes) is not cached, ADOPT creates the corresponding order at run time. In its caching policy, ADOPT distinguishes tables with and without unary predicates. For tables obtained after applying unary predicates, ADOPT stores associated sort orders only while processing the current query. Note that tables tend to be small after applying unary predicates, making sorting them relatively cheap. Tables without unary predicates tend to be large and sorting is more expensive. Here, ADOPT caches corresponding sort orders beyond the duration of the current query, reusing them for future queries if possible. 




\end{document}